\newcommand{\be}{\begin{equation}}
\newcommand{\ee}{\end{equation}}
\newcommand{\ba}{\begin{aligned}}
\newcommand{\ea}{\end{aligned}}
\newcommand{\bc}{\begin{center}}
\newcommand{\ec}{\end{center}}
\newcommand{\beq}{\begin{equation}}
\newcommand{\eeq}{\end{equation}}
\newcommand{\beqq}{\begin{equation*}}
\newcommand{\eeqq}{\end{equation*}}
\newcommand{\beqa}{\begin{align}}
\newcommand{\eeqa}{\end{align}}
\newcommand{\barr}{\begin{array}}
\newcommand{\earr}{\end{array}}
\newcommand{\bi}{\begin{itemize}}
\newcommand{\ei}{\end{itemize}}
\newcommand{\norm}[1]{\ensuremath{\vert\vert#1\vert\vert}}
\newtheorem{lem}{Lemma}
\newtheorem{prop}{Proposition}
\newtheorem{theo}{Theorem}
\newtheorem{coro}{Corollary}
\newtheorem{defi}{Definition}
\DeclareMathOperator{\Tr}{Tr}
\begin{document}


\title{Can effective descriptions of bosonic systems be considered complete?}

\author{Francesco Arzani}
\affiliation{DIENS, \'Ecole Normale Sup\'erieure, PSL University, CNRS, INRIA, France}

\author{Robert I.\ Booth}
\affiliation{University of Edinburgh, United Kingdom}
\affiliation{University of Bristol, United Kingdom}

\author{Ulysse Chabaud}
\affiliation{DIENS, \'Ecole Normale Sup\'erieure, PSL University, CNRS, INRIA, France}
\email{ulysse.chabaud@inria.fr}


\begin{abstract}
   Bosonic statistics give rise to remarkable phenomena, from the Hong--Ou--Mandel effect to Bose--Einstein condensation, with applications spanning fundamental science to quantum technologies. Modeling bosonic systems relies heavily on \textit{effective descriptions}: typical examples include truncating their infinite-dimensional state space and restricting their dynamics to a simple class of Hamiltonians, such as polynomials of canonical operators, which are used to define quantum computing over bosonic modes. However, many natural bosonic Hamiltonians do not belong to this simple class, and some quantum effects harnessed by bosonic computers inherently require infinite-dimensional spaces, questioning the validity of such effective descriptions of bosonic systems. How can we trust results obtained with such simplifying assumptions to capture real effects? 
   
   Driven by the increasing importance of bosonic systems for quantum technologies, we solve this outstanding problem by showing that these effective descriptions do in fact capture the relevant physics of bosonic systems. Our technical contribution is twofold: firstly, we prove that any physical, bosonic unitary evolution can be strongly \textit{approximated} by a finite-dimensional unitary evolution; secondly, we show that any finite-dimensional unitary evolution can be generated \textit{exactly} by a bosonic Hamiltonian that is a polynomial of canonical operators. Beyond their fundamental significance, our results have implications for classical and quantum simulations of bosonic systems, they provide universal methods for engineering bosonic quantum states and Hamiltonians, they show that polynomial Hamiltonians do generate universal gate sets for quantum computing over bosonic modes, and they lead to an infinite-dimensional Solovay--Kitaev theorem.
\end{abstract}


\maketitle


\section{Introduction}
\label{sec:introduction}

Bosonic systems, often referred to as continuous-variable quantum systems in the context of quantum computing, encompass a number of promising setups for quantum information processing. These include photonic setups, which have led to some of the first demonstrations of quantum computational speedup \cite{Aaronson2013,hamilton2017gaussian,zhong2020quantum,madsen2022quantum}. Moreover, bosonic error-correcting codes \cite{albert2022bosonic}, specifically those theorized in 2001 by Gottesman, Kitaev and Preskill (GKP) \cite{Gottesman2001}, have recently led to the first demonstration of quantum error-correction beyond the break-even point \cite{sivak2023real}.

The description of bosonic systems requires a Hilbert space of infinite dimension, since it must support position and momentum operators verifying the canonical commutation relation $\smash{[\hat q,\hat p]=i\hat I}$ \footnote{It is well-known that such a relation cannot be satisfied by finite-dimensional linear operators, as can be seen by contradiction by taking the trace, giving zero for the left hand side and a non-zero value for the right hand side.}. While this large state space allows us to robustly encode quantum information as in GKP codes \cite{Gottesman2001}, it also makes modeling bosonic systems  and computations challenging. This has motivated the use of mathematically simpler \textit{effective descriptions}, aiming to capture their important features. Here we consider two widely adopted simplifications: \textit{effective dimension}---simplifying the state space---and \textit{effective Hamiltonians}---simplifying the dynamics.

Choosing an effective dimension for a bosonic system amounts to imposing a cut-off of its infinite-dimensional Hilbert space leading to a finite-dimensional one, based for instance on an energy bound. It allows us to simulate these systems on classical or quantum computers, up to a truncation error \cite{jordan2012quantum,tong2022provably}. 

Alternatively, one may restrict to a specific class of effective Hamiltonians. A standard choice for bosonic computations is the set of polynomials in the canonical bosonic operators \cite{lloyd_quantum_1999}, which we refer to as \textit{polynomial Hamiltonians} for brevity. Polynomial Hamiltonians are ubiquitous in physics, such as in the Bose--Hubbard model of interacting spinless bosons on a lattice \cite{gersch1963quantum}, in the $\phi^4$ theory \cite{ramond2020field} in quantum field theory, or in the multipole expansion of the dielectric polarization in nonlinear optics \cite{Mandel_Wolf_1995}, to a name a few. In particular, the standard model of quantum computations with continuous variables was defined by Lloyd and Braunstein based on unitary gates generated by polynomial Hamiltonians \cite{lloyd_quantum_1999}. In particular, \textit{universality} for continuous-variable quantum computing is defined as the ability of a gate set to approximate any polynomial Hamiltonian evolution \cite{lloyd_quantum_1999,sefi2011decompose}. 

While these effective descriptions greatly simplify the modeling of bosonic computations, they also have drawbacks: recall that descriptions based on a finite effective dimension cannot reproduce the bosonic canonical commutation relation exactly.
In addition, there are fundamental differences between the sets of quantum correlations in finite and infinite-dimensional spaces \cite{tsirelson2006bell,ji2021mip}. These features suggest that some bosonic quantum effects require infinite-dimensional spaces and may not be accounted for when restricted to a finite effective dimension.

Similar doubts can be raised about the ability of effective Hamiltonians to faithfully describe bosonic phenomena, as many bosonic systems are not naturally described by polynomial Hamiltonians. For instance, Hamiltonians of Josephson junctions in superconducting systems include terms of the form $\cos\hat\delta$, where $\hat\delta$ is a canonical phase-difference operator \cite{martinis2004superconducting}. Here, the function $\cos$ cannot be replaced by a polynomial---such as its truncated Taylor expansion---because the operator $\hat\delta$ is unbounded \cite{hall2013quantum}. Moreover, in the context of bosonic computations, it is in fact not known whether the definition of universality, based on the ability to reproduce any evolution generated by polynomial Hamiltonians \cite{lloyd_quantum_1999}, is sound: by this we mean that it is far from obvious that universal continuous-variable quantum computers as defined by Lloyd and Braunstein satisfy plausible requirements that one would expect by analogy with their discrete-variable counterparts. One example is universal state preparation, loosely defined as the ability to approximately prepare any quantum state from a given reference state: as far as we know, there could be inaccessible regions of the Hilbert space when one restricts to using gates generated by polynomial Hamiltonians. This can be phrased as a question about universal control and was first formalized in this context as an open problem in \cite{wu2006smooth}. Beyond this universal controllability problem, it is also highly non-trivial whether any physical unitary evolution may be approximated to arbitrary precision by a sequence of gates generated by polynomial Hamiltonians. By contrast, the definition of universality for discrete-variable quantum gate sets is based on the property that they allow us to approximate \textit{any} unitary operation to arbitrary precision \cite{NielsenChuang}. What is more, the Solovay--Kitaev theorem guarantees that these finite-dimensional universal gate sets provide efficient approximations of any unitary operator on a constant number of subsystems \cite{dawson2005solovay}. Such a theorem is missing in the general infinite-dimensional setting \cite{becker2021energy} and would be fundamental for understanding the computational complexity of bosonic systems, because it would imply that the complexity of bosonic computations is independent of the choice of gate set \cite{chabaud2024bosonic}. A pre-requisite for this result is to ensure that the definition of universality  for continuous-variable quantum computing is sound: at the moment, it is unclear whether effective polynomial Hamiltonians provide a solid foundation for modeling bosonic computations.

Are these drawbacks simply mathematical artifacts, or should they be taken into account when describing bosonic systems? In other terms: \textit{Can effective descriptions of bosonic systems be considered complete?} 

Here we provide a positive answer to this question.
As we shall demonstrate, this is not only a non-trivial conceptual question, but also a highly fruitful one that leads to several applications. Our technical contribution is twofold:

Firstly, we show that any physical unitary evolution can be \textit{approximated} to arbitrary precision by a finite-dimensional evolution. The proof is based on an operationally meaningful and mathematically rigorous notion of approximation. Moreover, the corresponding cut-off unitary evolution and its effective dimension may be computed explicitly---in particular, the effective dimension relates to the amount of energy needed to implement the evolution (Theorem~\ref{th:eff_dim}).

Secondly, we show that polynomial Hamiltonians can realize any finite-dimensional Hamiltonian \textit{exactly}. We provide a constructive proof, based on Lagrange interpolation polynomials (Theorem~\ref{th:eff_Ham}). Together with the previous result, this shows that polynomial Hamiltonians can generate any physical unitary evolution \textit{approximately} to arbitrary precision. 

From a conceptual standpoint, our first technical result identifies a clear trade-off between space and energy when simulating bosonic systems, be it with classical or finite-dimensional quantum methods.
Our second technical result provides universal methods for state and operator engineering using polynomial Hamiltonians, which can be used to simulate the dynamics of complex quantum systems using bosonic architectures. Furthermore, it resolves the open question about the universal controllability of polynomial Hamiltonians in infinite dimension from \cite{wu2006smooth} (Corollary~\ref{coro:univcontrol}).
Finally, these two technical contributions together show that any physical unitary evolution of a bosonic system may be approximated to arbitrary precision by a sequence of gates generated by polynomial Hamiltonians in a mathematically rigorous sense (Theorem~\ref{th:eff_Ham_dim}). This solves the universality problem in continuous-variable quantum computing, showing that the definition of universality from \cite{lloyd_quantum_1999} is sound, and further leads to a Solovay--Kitaev theorem in infinite dimension for specific universal gate sets generated by polynomial Hamiltonians (Theorem~\ref{th:SK}).

The rest of the paper is structured as follows. We set the stage and introduce the necessary notation in Sec.~\ref{sec:CVQI}. We show how to approximate physical unitary evolutions over infinite-dimensional Hilbert spaces by finite-dimensional ones in Sec.~\ref{sec:eff_dim}. In Sec.~\ref{sec:eff_Ham}, we show how to realize finite-dimensional Hamiltonians using polynomial Hamiltonians. Sec.~\ref{sec:eff_Ham_dim} is devoted to a discussion of the implications of our results for bosonic computations. We conclude in Sec.~\ref{sec:conclusion}. The Methods section provides an overview of the technical tools used in the proofs of our main results, the details of which are deferred to the Supplementary Information.


\section{Bosonic quantum information theory}
\label{sec:CVQI}

Quantum information theory with bosonic systems typically involves modelling unbounded operators over infinite-dimensional Hilbert spaces with continuous spectrum, and is thus often referred to as continuous-variable quantum information theory \cite{braunstein2005quantum,weedbrook2012gaussian,adesso2014continuous}. In this framework, systems corresponding to individual quantum harmonic oscillators are called qumodes, or simply modes, and quantum states of $m$-mode bosonic systems are elements of an infinite-dimensional Hilbert space $\mathcal H^{\otimes m}$, with each single-mode Hilbert space spanned by a countably infinite (Fock) basis $\{\ket n\}_{n\in\mathbb N}$.

Creation and annihilation operators $\hat a^\dag$ and $\hat a$ for each mode are defined by their action on the Fock basis as $\hat a^\dag\ket n=\sqrt{n+1}\ket{n+1}$, $\hat a\ket{n+1}=\sqrt{n+1}\ket n$ and $\hat a\ket0=0$. The (unbounded) canonical bosonic operators are then given by $\hat q=\smash{\frac1{\sqrt2}}(\hat a+\hat a^\dag)$ and $\hat p=\smash{\frac1{i\sqrt2}}(\hat a-\hat a^\dag)$ and satisfy the canonical commutation relation $[\hat q,\hat p]=i\hat I$, where $\hat I$ is the identity operator. Polynomial Hamiltonians over $m$ modes are the Hermitian operators of the form $P(\hat q_1,\hat p_1,\dots,\hat q_m,\hat p_m)$, where $P$ is a polynomial and where $(\hat q_1,\hat p_1,\dots,\hat q_m,\hat p_m)$ are the canonical operators of the modes $1,\dots,m$.
The number operator is defined as $\hat n=\hat a^\dag\hat a$, and its expectation value is the average particle number of a bosonic mode, which we refer to as \textit{energy} hereafter.  For all $N\in\mathbb N$, we denote by $\mathcal H_d:=\mathrm{span}\{\ket n\}_{0\le n\le d}$ the subspace of states with a number of particles at most $d$.

In some sense, not all states in the Hilbert space are valid physical states, even if they are normalized: for instance, some may have infinite energy, such as $\sqrt6\sum_{n\ge1}\frac1{\pi n}\ket n$. Similarly, there are normalized states with infinite fluctuations in either position or momentum or both. To avoid inconsistencies and ensure that the action of polynomial Hamiltonians leads to physical states, it is common to define a set of physical states as a dense subspace of the Hilbert space known as the Schwartz space \cite{hall2013quantum}---informally, the set of states with bounded canonical operator moments, with a position wave function and all of its derivatives decaying sufficiently fast at infinity, which we denote by $\mathcal S\subset\mathcal H$ (see the Methods section). Similarly, denoting by $\mathcal U(\mathcal H)$ the group of unitary operators over $\mathcal H$, we define
\begin{equation}
    \mathcal U(\mathcal S):=\big\{\hat U\in\mathcal U(\mathcal H)\,\big|\,\forall \ket\psi\in\mathcal S,\,\hat U\ket\psi\in\mathcal S\big\}.
\end{equation}
These form a group of physical unitary operators, i.e., which map physical states to physical states. This set contains in particular all products of unitary operators generated by analytic functions of canonical operations \cite{wu2006smooth}, including polynomial Hamiltonians, and can be readily extended to the multimode setting.
Physical unitary operators have the property that they map bosonic states with a finite number of particles to states of finite energy: for all $n\in\mathbb N$ and all $\hat U\in\mathcal U(\mathcal S)$,
\begin{equation}\label{eq:defsupenergyNmain}
    E_{\hat U}(n):=\sup_{\ket\psi\in\mathcal H_n}\bra\psi\hat U^\dag\hat n\hat U\ket\psi<+\infty.
\end{equation}
In that case, the supremum is a maximum and the quantity $\smash{E_{\hat U}(n)}$ can be thought of as the maximal amount of energy involved when implementing the unitary evolution $\hat U$ of an initial state with a number of particles at most $n$.

We denote by $D(\rho,\sigma)=\frac12\|\rho-\sigma\|_1$ the trace norm distance between density operators \cite{NielsenChuang}, which is used to define the diamond norm for quantum maps. The diamond norm induces a distinguishability measure between maps that is often too stringent for quantum mechanical applications involving unbounded operators, as physical unitary channels may be always maximally separated for this norm, even though they may have similar effects on all states below a certain energy \cite{winter2017energy}. Instead, the strong topology of pointwise convergence provides a suitable notion of continuity in infinite dimensions, which is implied by the closeness in energy-constrained diamond norm (see the Methods section), defined as \cite{shirokov1706energy}:
\begin{equation}
    \|\mathcal E\|_\diamond^E:=\!\sup_{\Tr[\rho(\hat H\otimes\hat I)]\le E}\!\|(\mathcal E\otimes\mathrm{id})\rho\|_1.
\end{equation}
Here, $\mathcal E$ is a Hermitian-preserving map, $\rho$ is a density operator over $\mathcal H\otimes\mathcal H'$, with $\mathcal H'$ isomorphic to $\mathcal H$, and $\hat H$ is a Hamiltonian over $\mathcal H$ with ground state energy equal to $0$ used to specify the energy bound $E>0$. We employ this topology hereafter, with $\hat H=\hat n$ the number operator.


\section{Effective dimension of physical unitary evolutions}
\label{sec:eff_dim}

Having introduced the necessary notation, we show in this section that any \textit{physical} unitary channel can be approximated to arbitrary precision by a \textit{finite-dimensional} unitary channel, where the quality of the approximation is measured by the energy-constrained diamond norm.
This is captured by the notion of approximate effective dimension, which we define as follows:

\begin{defi}[Approximate effective dimension]
    Let $\hat U$ be a single-mode unitary operator, let $E\ge0$ be an energy parameter, and let $\epsilon>0$ be an approximation parameter. Let also $d\in\mathbb N$. The unitary operator $\hat U$ has $(E,\epsilon)$-\textit{approximate effective dimension} $d+1$ if there exists a unitary operator $\hat V_d$ over $\mathcal H_d$ which approximates $\hat U$ in the following sense:
    for any unitary operator $\hat V'$ over $\mathrm{span}(\ket n)_{n>d}$, denoting $\hat V=\hat V_d\oplus\hat V'$, $\mathcal V=\hat V\boldsymbol{\cdot}\hat V^\dag$ and $\mathcal U=\hat U\boldsymbol{\cdot}\hat U^\dag$, 
    \begin{equation}\label{eq:cutoffECDnorm}
        \|\mathcal U-\mathcal V\|_\diamond^E\le\epsilon.
    \end{equation}
    When that is the case, we say that $\hat V_d$ is a $(d+1)$-dimensional $(E,\epsilon)$-approximation of $\hat U$.
\end{defi}

\noindent When a unitary dynamics has a finite approximate effective dimension, it can effectively be modeled by a finite-dimensional unitary evolution.
We now give our first main result:

\begin{theo}[Effective dimension of physical unitary channels]\label{th:eff_dim} 
    Let $\mathcal U=\hat U\boldsymbol{\cdot}\hat U^\dag$ be a single-mode unitary channel, with $\hat U\in\mathcal U(\mathcal S)$ a physical unitary operator. Let $E\ge0$ be an energy parameter, and let $\epsilon>0$ be an approximation parameter. Then, there exists $d=\mathcal O\!\left(\frac E{\epsilon^4}E_{\hat U}\!\left(\frac{64E}{\epsilon^2}\right)\right)\in\mathbb N$ such that $\hat U$ has $(E,\epsilon)$-\textit{approximate effective dimension} $d+1$. 
    Moreover, a finite-dimensional $(E,\epsilon)$-approximation of $\hat U$ can be computed in polynomial time in $d$, $E$, and $\frac1\epsilon$.
\end{theo}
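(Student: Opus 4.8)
The plan is to approximate $\hat U$ in two stages: first replace $\hat U$ by its projection onto a finite-energy cutoff on both the input and output sides, then patch this truncated operator up to an honest unitary $\hat V_d$ on $\mathcal H_d$. The starting observation is that, by the definition of the energy-constrained diamond norm with $\hat H=\hat n$, it suffices to control $\|(\mathcal U-\mathcal V)\otimes\mathrm{id}\,(\rho)\|_1$ for states $\rho$ on $\mathcal H\otimes\mathcal H'$ with $\Tr[\rho(\hat n\otimes\hat I)]\le E$. For such $\rho$, Markov's inequality tells us that the weight of $\rho$ on the subspace $\mathcal H_n^\perp\otimes\mathcal H'$ (energy above some threshold $n$ on the first mode) is at most $E/n$, so that the purification of the reduced state essentially lives in $\mathcal H_n$ up to an error in trace distance of order $\sqrt{E/n}$ (via a gentle-measurement / Fuchs--van de Graaf type bound). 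This is what fixes the first cutoff $n\sim E/\epsilon^2$ on the input side.

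Second, I would push this input-truncated state through $\hat U$. Here the hypothesis $\hat U\in\mathcal U(\mathcal S)$ enters crucially through Eq.~\eqref{eq:defsupenergyNmain}: a state supported on $\mathcal H_n$ is mapped by $\hat U$ to a state of energy at most $E_{\hat U}(n)$, so applying Markov again, $\hat U$ sends it into $\mathcal H_{d}$ up to trace-distance error $\sqrt{E_{\hat U}(n)/d}$. Setting this second error to the desired precision forces $d\sim E_{\hat U}(n)/\epsilon^2\sim \frac{1}{\epsilon^2}E_{\hat U}(E/\epsilon^2)$; tracking the constants carefully (each of the two truncations contributing, and the tensoring with $\mathrm{id}$ being harmless since the energy bound is only on the first factor) is what produces the stated $d=\mathcal O\!\big(\tfrac{E}{\epsilon^4}E_{\hat U}(\tfrac{64E}{\epsilon^2})\big)$. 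At this point the operator $\hat W:=\hat\Pi_d\hat U\hat\Pi_n$ is a contraction that is $\epsilon/2$-close to $\hat U$ in the relevant sense, but it is only a partial isometry-like object, not unitary on $\mathcal H_d$.

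Third comes the completion step: I want a genuine unitary $\hat V_d$ on $\mathcal H_d$ with $\|\hat V_d-\hat W\|$ small in the appropriate operator sense on the low-energy sector. The clean way is to note that $\hat\Pi_d\hat U\hat\Pi_n$ has singular values close to $1$ on a large subspace (since $\hat\Pi_n\ket\psi\approx\ket\psi$ and $\hat\Pi_d\hat U\ket\psi\approx\hat U\ket\psi$ for low-energy $\ket\psi$), so its polar-decomposition unitary part $\hat V_d$ — computed by diagonalizing $\hat W^\dag\hat W$, which is a finite $(n+1)\times(n+1)$ and $(d+1)\times(d+1)$ linear-algebra problem — is close to $\hat W$ on that subspace and automatically close to $\hat U$ there. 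Since $\hat V_d$ is defined on all of $\mathcal H_d$ and acts trivially-or-arbitrarily on the complement, the direct sum $\hat V=\hat V_d\oplus\hat V'$ satisfies $\|\mathcal U-\mathcal V\|_\diamond^E\le\epsilon$ by the triangle inequality over the two truncation errors plus the polar-decomposition error, and the whole construction (two Markov thresholds, a projection, and a finite SVD) is manifestly polynomial-time in $d$, $E$, and $1/\epsilon$. I expect the main obstacle to be the third step: ensuring that rounding the truncated operator to a unitary does not reintroduce a large error on the energy-constrained ball, which requires quantitatively relating the defect $\|\hat W^\dag\hat W-\hat I\|$ on the low-energy subspace to the earlier truncation errors and then propagating this through $\|\mathcal U-\mathcal V\|_\diamond^E$ rather than the ordinary diamond norm — so the bookkeeping of which errors are measured in which norm, and the fact that $\hat V'$ is allowed to be arbitrary, is where the care is needed.
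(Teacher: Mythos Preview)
Your proposal is correct and follows essentially the same strategy as the paper: two successive energy cut-offs (input at $M\sim E/\epsilon^2$, output at $N$) controlled via the gentle measurement lemma, followed by rounding the rectangular block $\Pi_N\hat U\Pi_M$ to a genuine unitary on $\mathcal H_N$. The only technical difference is that the paper uses the QR decomposition (Gram--Schmidt on the columns of $\Pi_N\hat U\Pi_M$) rather than the polar decomposition for the rounding step; either works, and---as you correctly flagged as the delicate point---it is precisely this step that forces the extra factor of $M\sim E/\epsilon^2$ in $d$ beyond your naive estimate $E_{\hat U}(n)/\epsilon^2$, since the $(M{+}1)$ almost-orthonormal columns have pairwise defects of order $E_{\hat U}(M)/N$ that accumulate across the block.
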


\noindent The proof of this result is based on constructing a good unitary operator approximation of a truncated unitary operator, while carefully controlling truncation parameters for the state space, before and after applying the unitary channel, so that the truncated unitary operator is itself a good approximation of the original unitary operator. We refer to the Methods section for the main techniques and to Appendix~\ref{app:proof_eff_dim} for the proof.

A direct consequence of Theorem~\ref{th:eff_dim} is that the corresponding sequence of $(E,\epsilon)$-approximations strongly converges to $\smash{\hat U}$ as $\epsilon$ goes to $0$ in trace distance (see the Methods section). From an operational standpoint, this implies that we can effectively truncate the Hilbert space of physical bosonic computations to a certain effective dimension, as long as the cut-off is high enough. As such, Theorem~\ref{th:eff_dim} provides rigorous bounds on the effective dimension required to simulate bosonic dynamics with provable accuracy \cite{tong2022provably}. In other words, there will be no catastrophic loss of information in the measurements statistics when restricting to the cut-off evolution: by the operational property of the trace distance, results of a cut-off bosonic computation will be indistinguishable from results of the original computation.
However, this cut-off comes at a price: Theorem \ref{th:eff_dim} identifies a direct relation between the effective dimension sufficient to embed a unitary evolution to a good precision and the energy $E_{\hat U}$ of the unitary evolution, as defined in Eq.~(\ref{eq:defsupenergyNmain}). Such an energy bound may grow very fast for seemingly simple dynamics, such as alternating Gaussian and non-Gaussian unitary evolutions \cite{chabaud2024bosonic}.


\section{Universality of polynomial Hamiltonians}
\label{sec:eff_Ham}

We have shown that bosonic unitary evolutions may be well-approximated by finite-dimensional ones. In this section, we give our second main result, namely that any finite-dimensional Hamiltonian generating such a finite-dimensional unitary evolution can be realized exactly by a polynomial Hamiltonian:

\begin{theo}[Finite-dimensional universality of polynomial Hamiltonians]\label{th:eff_Ham}
Let $d\in\mathbb N$ and let $\hat H$ be a Hermitian operator over $\mathcal H_d$.
There exists a polynomial Hamiltonian $P_{\hat H}(\hat q,\hat p)$ of degree at most $3d$ over $\mathcal H$ such that
\begin{equation}
    P_{\hat H}(\hat q,\hat p)=\hat H\oplus\hat H',
\end{equation}
where $\hat H'$ is a Hermitian operator over $\mathrm{span}(\ket n)_{n>d}$.
In particular, $e^{iP_{\hat H}(\hat q,\hat p)}=e^{i\hat H}\oplus e^{i\hat H'}$ and
\begin{equation}
    e^{iP_{\hat H}(\hat q,\hat p)}\ket\psi=e^{i\hat H}\ket\psi,
\end{equation}
for all $\ket\psi\in\mathcal H_d$. Moreover, the polynomial $P_{\hat H}$ can be computed in polynomial time in $d$.
\end{theo}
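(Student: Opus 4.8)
The plan is to construct $P_{\hat H}$ explicitly, one Fock‑basis matrix element at a time. Writing $\hat H=\sum_{m,n=0}^d H_{mn}\ketbra{m}{n}$ with $\overline{H_{mn}}=H_{nm}$, I would look for, for each pair $0\le m,n\le d$, a polynomial $\Pi_{mn}$ with \emph{real} coefficients such that the operator $T_{mn}:=\hat a^{\dagger m}\,\Pi_{mn}(\hat n)\,\hat a^n$ satisfies: (i) it acts as $\ketbra{m}{n}$ on $\mathcal H_d$; (ii) it leaves both $\mathcal H_d$ and $\mathrm{span}(\ket k)_{k>d}$ invariant; (iii) it is symmetric in its indices, $\Pi_{mn}=\Pi_{nm}$. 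Granting this, $P_{\hat H}(\hat q,\hat p):=\sum_{m,n=0}^d H_{mn}T_{mn}$ restricts to $\hat H$ on $\mathcal H_d$, is block‑diagonal with respect to $\mathcal H_d\oplus\mathcal H_d^{\perp}$ — hence of the form $\hat H\oplus\hat H'$ — and is Hermitian, since $T_{mn}^\dagger=\hat a^{\dagger n}\Pi_{mn}(\hat n)\hat a^m=T_{nm}$ by (iii) (using that $\Pi_{mn}(\hat n)$ is Hermitian), so that $P_{\hat H}^\dagger=\sum\overline{H_{mn}}T_{nm}=\sum H_{nm}T_{nm}=P_{\hat H}$. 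Because $\hat a,\hat a^\dagger$ are linear and $\hat n=\tfrac12(\hat q^2+\hat p^2)-\tfrac12$ is quadratic in $\hat q,\hat p$, the degree of $T_{mn}$ in $\hat q,\hat p$ is at most $m+n+2\deg\Pi_{mn}$.

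The heart of the construction is the choice of $\Pi_{mn}$. On Fock states, $T_{mn}\ket j=0$ for $j<n$ and $T_{mn}\ket j=c_{j}\,\Pi_{mn}(j-n)\,\ket{j-n+m}$ with $c_j>0$ for $j\ge n$, so (i)–(ii) translate into interpolation constraints on $\Pi_{mn}$: one needs $\Pi_{mn}(0)=1/\sqrt{m!\,n!}$ (to send $\ket n\mapsto\ket m$ with the right normalization), $\Pi_{mn}(k)=0$ for $k\in\{1,\dots,d-n\}$ (to annihilate the action of $T_{mn}$ on $\ket{n+1},\dots,\ket d$ inside $\mathcal H_d$), and, when $n>m$, additionally $\Pi_{mn}(k)=0$ for $k\in\{d-n+1,\dots,d-m\}$ (to kill the couplings of the ``shell'' $\ket{d+1},\dots,\ket{d+n-m}$ just above the cut‑off back into $\mathcal H_d$ — the only states of $\mathcal H_d^{\perp}$ that $T_{mn}$ can send into $\mathcal H_d$). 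The union of these root sets is $\{1,\dots,d-\min(m,n)\}$ in every case, so I would take
\[
\Pi_{mn}(x):=\frac1{\sqrt{m!\,n!}}\prod_{k=1}^{d-\min(m,n)}\frac{k-x}{k}
\]
(with the empty product equal to $1$): a Lagrange/Newton interpolation polynomial of degree $d-\min(m,n)$, with real coefficients, manifestly symmetric in $m$ and $n$, satisfying (i)–(iii).

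It then remains to read off the bounds. One gets $\deg T_{mn}\le m+n+2\bigl(d-\min(m,n)\bigr)=2d+|m-n|\le 3d$, hence $\deg P_{\hat H}\le 3d$; block‑diagonality gives $e^{iP_{\hat H}(\hat q,\hat p)}=e^{i\hat H}\oplus e^{i\hat H'}$ and $e^{iP_{\hat H}(\hat q,\hat p)}\ket\psi=e^{i\hat H}\ket\psi$ for $\ket\psi\in\mathcal H_d$; and computing $P_{\hat H}$ amounts to forming $(d+1)^2$ products $\Pi_{mn}$ of at most $d$ linear factors and expanding the $T_{mn}$ (words of length $\le 3d$ in $\hat a,\hat a^\dagger$) into $(\hat q,\hat p)$‑form, which is polynomial in $d$.

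I expect the main obstacle to be condition (ii): it is what prevents the naive approach, since $\ketbra{0}{0}$ is not a polynomial and any polynomial surrogate built from $\hat n$ necessarily acts on high Fock states, so one must calibrate $\Pi_{mn}$ to carry exactly the roots that decouple $\mathcal H_d$ from $\mathcal H_d^{\perp}$ in \emph{both} directions; it is the lowering‑type terms ($n>m$), whose extra shell of roots raises the degree from the naive $2d$ to $2d+|m-n|$, that force the bound $3d$, and carefully verifying that the stated root set is both necessary and sufficient is the crux of the argument. A secondary point, needed to make sense of $e^{iP_{\hat H}(\hat q,\hat p)}$ and the direct‑sum decomposition for this unbounded operator, is that $P_{\hat H}$ — being a polynomial Hamiltonian that moreover preserves every finite‑particle subspace — falls within the framework of Sec.~\ref{sec:CVQI}, where such exponentials are well defined.
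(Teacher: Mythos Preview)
Your proof is correct and follows essentially the same strategy as the paper: realize each Fock matrix element via ladder operators combined with a Lagrange interpolation polynomial in $\hat n$, with the roots chosen precisely so that the resulting operator is block-diagonal on $\mathcal H_d\oplus\mathcal H_d^\perp$, giving the degree bound $2d+|m-n|\le 3d$. The only difference is packaging: you use the symmetric sandwich $T_{mn}=\hat a^{\dagger m}\Pi_{mn}(\hat n)\hat a^n$ with $\deg\Pi_{mn}=d-\min(m,n)$, whereas the paper uses the one-sided form $\sqrt{n!/(n{+}k)!}\,P_n(\hat n)\hat a^k$ (and its adjoint) with $\deg P_n=d$, treating the diagonal, raising, and lowering cases separately. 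Your symmetric form makes the Hermiticity check $T_{mn}^\dagger=T_{nm}$ immediate and handles all three cases uniformly; the paper's form keeps the interpolation polynomial fixed (a single $P_n$ per row index, up to normalization) at the cost of a small case split.
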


\noindent The proof of this result is based on the use of interpolation polynomials for reproducing exactly a target finite-dimensional operator on a subspace of the Fock basis, taking advantage of the sparsity of the canonical operators in that basis.
We refer to the Methods section for the main techniques and to Appendix~\ref{app:proof_eff_Ham} for the multimode generalisation of Theorem~\ref{th:eff_Ham} and its proof. 

An important consequence of this theorem is that polynomial Hamiltonians allow us to explore the full Hilbert space, i.e., there are no inaccessible regions of the Hilbert space when starting from a fixed reference state, such as the vacuum state, and restricting to using gates generated by polynomial Hamiltonians:

\begin{coro}[Universal controllability of polynomial Hamiltonians in infinite dimension]\label{coro:univcontrol}
Let $\ket\psi\in\mathcal H$. For all $\epsilon>0$, there exists a polynomial Hamiltonian $P_\epsilon(\hat q,\hat p)$ such that evolving the vacuum state $\ket0$ under the Schr\"odinger equation with Hamiltonian $P_\epsilon(\hat q,\hat p)$ for constant time yields a state $\ket{\psi_\epsilon}$ satisfying $D(\ket\psi,\ket{\psi_\epsilon})\le\epsilon$.
\end{coro}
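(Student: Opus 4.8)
The plan is to reduce the statement to the finite-dimensional setting and then invoke Theorem~\ref{th:eff_Ham}. First I would fix $\ket\psi=\sum_{n\ge0}c_n\ket n\in\mathcal H$ with $\sum_n|c_n|^2=1$, and use that the Fock tail $\sum_{n>d}|c_n|^2$ vanishes as $d\to\infty$ to pick $d\in\mathbb N$ large enough that the normalised truncation $\ket{\psi_d}:=\mathcal N_d^{-1}\sum_{n\le d}c_n\ket n$, with $\mathcal N_d:=(\sum_{n\le d}|c_n|^2)^{1/2}$, is $\epsilon$-close to $\ket\psi$ in trace distance. This is the routine estimate $D(\ket\psi,\ket{\psi_d})=\sqrt{1-|\langle\psi|\psi_d\rangle|^2}$ combined with $|\langle\psi|\psi_d\rangle|=\mathcal N_d\to1$.

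Next I would produce a Hermitian operator $\hat H$ on $\mathcal H_d$ whose unit-time evolution sends the vacuum exactly to $\ket{\psi_d}$. This is pure finite-dimensional linear algebra: $\ket0$ and $\ket{\psi_d}$ are unit vectors of $\mathcal H_d$, so there is a unitary $\hat V_d$ on $\mathcal H_d$ with $\hat V_d\ket0=\ket{\psi_d}$ --- for instance a rotation within the (at most two-dimensional) plane $\mathrm{span}\{\ket0,\ket{\psi_d}\}$, extended by the identity on its orthogonal complement --- and since in finite dimension the exponential map onto the unitary group is surjective, $\hat V_d=e^{i\hat H}$ for some Hermitian $\hat H$ on $\mathcal H_d$. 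I would then feed $\hat H$ to Theorem~\ref{th:eff_Ham}, obtaining a polynomial Hamiltonian $P_{\hat H}(\hat q,\hat p)$ of degree at most $3d$ over $\mathcal H$ with $e^{iP_{\hat H}(\hat q,\hat p)}\ket\phi=e^{i\hat H}\ket\phi$ for all $\ket\phi\in\mathcal H_d$; taking $\ket\phi=\ket0\in\mathcal H_d$ gives $e^{iP_{\hat H}(\hat q,\hat p)}\ket0=\ket{\psi_d}$. Finally I would set $P_\epsilon(\hat q,\hat p):=-P_{\hat H}(\hat q,\hat p)$, which is again a polynomial Hamiltonian, so that evolving $\ket0$ under the Schr\"odinger equation $i\partial_t\ket{\psi(t)}=P_\epsilon(\hat q,\hat p)\ket{\psi(t)}$ for the constant time $t=1$ yields $\ket{\psi_\epsilon}=e^{-iP_\epsilon(\hat q,\hat p)}\ket0=e^{iP_{\hat H}(\hat q,\hat p)}\ket0=\ket{\psi_d}$, hence $D(\ket\psi,\ket{\psi_\epsilon})=D(\ket\psi,\ket{\psi_d})\le\epsilon$.

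I do not expect a genuine obstacle: the corollary is essentially a repackaging of Theorem~\ref{th:eff_Ham}, and the only points requiring care are (i) the trace-distance convergence of the normalised Fock truncation, which is the elementary bound recalled above, and (ii) the harmless fact that $\hat V_d$, and hence $\hat H$, is determined by $\ket{\psi_d}$ only up to a global phase, which is irrelevant since $D$ is phase-insensitive. One may also note that the degree of $P_\epsilon$ is $\mathcal O(d)$ with $d$ controlled solely by the decay of the Fock coefficients of $\ket\psi$, and that the choice $t=1$ is immaterial: rescaling $\hat H\mapsto\hat H/t$ before applying Theorem~\ref{th:eff_Ham} accommodates any fixed evolution time.
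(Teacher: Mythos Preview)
Your proposal is correct and follows essentially the same approach as the paper's proof: truncate $\ket\psi$ to a normalised state in $\mathcal H_d$ that is $\epsilon$-close in trace distance, pick a finite-dimensional unitary $e^{i\hat H}$ on $\mathcal H_d$ sending $\ket0$ to that truncation, and invoke Theorem~\ref{th:eff_Ham} to realise $\hat H$ as the restriction of a polynomial Hamiltonian. Your version is slightly more explicit about the trace-distance estimate and the Schr\"odinger sign convention (hence the harmless flip $P_\epsilon:=-P_{\hat H}$), but the argument is the same.
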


\noindent Corollary~\ref{coro:univcontrol} proves the universal controllability of polynomial Hamiltonians in infinite dimensions, resolving the open question in \cite{wu2006smooth} and \cite[Open question 3]{chabaud2024bosonic}. We give a quick proof in the single-mode case hereafter, the multimode case being analogous. 

\begin{proof}
    Since the state $\ket\psi$ is normalized, there exists $d_\epsilon$ such it is $\epsilon$-close in trace distance to the state truncated at $d_\epsilon$ and renormalized, which we denote by $\ket{\psi_\epsilon}$. Then, there exists a finite-dimensional unitary operator $\hat U_\epsilon=e^{i\hat H_\epsilon}$ over $\mathcal H_{d_\epsilon}$ which maps $\ket0$ to $\ket{\psi_\epsilon}$. By Theorem~\ref{th:eff_Ham}, the polynomial Hamiltonial $P_\epsilon(\hat q,\hat p):=P_{\hat H_\epsilon}\!\!(\hat q,\hat p)$ satisfies $P_\epsilon(\hat q,\hat p)=\hat H_\epsilon\oplus\hat H_\epsilon'$. Evolving the vacuum under the Schr\"odinger equation with Hamiltonian $P_\epsilon(\hat q,\hat p)$ thus leads to the same state as if evolving the vacuum under the $(d_\epsilon+1)$-dimensional Schr\"odinger equation with Hamiltonian $\hat H_\epsilon$, which concludes the proof.
\end{proof}

Beyond quantum state engineering, Theorem~\ref{th:eff_Ham} provides a universal method for engineering exactly any finite-dimensional Hamiltonian using polynomial Hamiltonians. This method can be used to engineer universal bosonic simulators capable of simulating the unitary evolution of any finite-dimensional quantum system.

Furthermore, Theorem~\ref{th:eff_Ham} also implies that bosonic circuits composed of input vacuum state, unitary gates generated by polynomial Hamiltonians, and number measurements, can simulate universal qudit computations efficiently (without requiring feed-forward of measurement outcomes), generalizing a recent result for qubit computations \cite[Theorem 4.1]{chabaud2024bosonic}.


\section{Effective bosonic computations}
\label{sec:eff_Ham_dim}

In this section, we explore the combined consequences of Theorem~\ref{th:eff_dim} (effective dimensions) and Theorem~\ref{th:eff_Ham} (effective Hamiltonians) for the validity of effective descriptions of bosonic computations.
At the fundamental level, both results show that effective descriptions are capturing bosonic computations well, i.e., one can always truncate the dimension of physical unitary evolutions up to an arbitrarily small error and use polynomial Hamiltonians to model finite-dimensional evolutions exactly. Together, these two results imply that polynomial Hamiltonians can generate any physical unitary evolution approximately to arbitrary precision:

\begin{theo}[Infinite-dimensional universality of polynomial Hamiltonians]\label{th:eff_Ham_dim}
Let $\mathcal U=\hat U\boldsymbol{\cdot}\hat U^\dag$ be a single-mode unitary channel, with $\hat U\in\mathcal U(\mathcal S)$ a physical unitary operator. Let $E\ge0$ be an energy parameter, and let $\epsilon>0$ be an approximation parameter. Then, there exists $d=\mathcal O\!\left(\frac E{\epsilon^4}E_{\hat U}\!\left(\frac{64E}{\epsilon^2}\right)\right)\in\mathbb N$ and a polynomial Hamiltonian $P(\hat q,\hat p)$ of degree at most $3d$ such that, writing $\mathcal V:=e^{iP(\hat q,\hat p)}\bm\cdot e^{-iP(\hat q,\hat p)}$,
    \begin{equation}
        \|\mathcal U-\mathcal V\|_\diamond^E\le\epsilon.
    \end{equation}
Moreover, the polynomial $P$ can be computed from $\hat U$ in polynomial time in $d$, $E$, and $\frac1\epsilon$.
\end{theo}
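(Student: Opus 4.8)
The plan is to combine Theorem~\ref{th:eff_dim} and Theorem~\ref{th:eff_Ham} essentially verbatim, with the only subtlety being how the error bound from the first theorem survives when we replace the finite-dimensional approximant $\hat V_d$ by a genuine unitary generated by a polynomial Hamiltonian. First I would apply Theorem~\ref{th:eff_dim} to $\hat U$ with energy parameter $E$ and precision $\epsilon$: this yields $d=\mathcal O\!\left(\frac E{\epsilon^4}E_{\hat U}\!\left(\frac{64E}{\epsilon^2}\right)\right)$ and a unitary $\hat V_d$ over $\mathcal H_d$ such that, for \emph{every} unitary $\hat V'$ on $\mathrm{span}(\ket n)_{n>d}$, writing $\hat V=\hat V_d\oplus\hat V'$ and $\mathcal V=\hat V\bm\cdot\hat V^\dag$, one has $\|\mathcal U-\mathcal V\|_\diamond^E\le\epsilon$. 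The crucial observation is that the definition of $(E,\epsilon)$-approximate effective dimension quantifies over all choices of $\hat V'$, so we are free to pick whichever extension is convenient.

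Next I would write $\hat V_d=e^{i\hat H_d}$ for some Hermitian $\hat H_d$ over $\mathcal H_d$ (take a principal logarithm; this exists since $\mathcal H_d$ is finite-dimensional and $\hat V_d$ unitary). Applying Theorem~\ref{th:eff_Ham} to $\hat H_d$ produces a polynomial Hamiltonian $P(\hat q,\hat p):=P_{\hat H_d}(\hat q,\hat p)$ of degree at most $3d$ satisfying $P(\hat q,\hat p)=\hat H_d\oplus\hat H'$ for some Hermitian $\hat H'$ over $\mathrm{span}(\ket n)_{n>d}$, and hence $e^{iP(\hat q,\hat p)}=e^{i\hat H_d}\oplus e^{i\hat H'}=\hat V_d\oplus\hat V'$ with $\hat V':=e^{i\hat H'}$. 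This $\hat V'$ is a bona fide unitary on $\mathrm{span}(\ket n)_{n>d}$, so it is an admissible choice in the quantifier from Theorem~\ref{th:eff_dim}. Setting $\mathcal V:=e^{iP(\hat q,\hat p)}\bm\cdot e^{-iP(\hat q,\hat p)}$, the bound $\|\mathcal U-\mathcal V\|_\diamond^E\le\epsilon$ follows immediately. The complexity claim is the composition of the two polynomial-time constructions: computing $\hat V_d$ from $\hat U$ (polynomial in $d$, $E$, $\frac1\epsilon$ by Theorem~\ref{th:eff_dim}), taking a matrix logarithm (polynomial in $d$), and computing $P_{\hat H_d}$ from $\hat H_d$ (polynomial in $d$ by Theorem~\ref{th:eff_Ham}).

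The only point requiring a little care—and the place I would expect a referee to look—is the logarithm step: one must be sure that $\hat V_d$ admits a Hermitian logarithm computable in $\mathrm{poly}(d)$ time, which is standard (diagonalise the $(d+1)\times(d+1)$ unitary and take $\mathrm{arg}$ of each eigenvalue, choosing a fixed branch), and that the resulting $\hat H'$ from Theorem~\ref{th:eff_Ham} is indeed bounded/Hermitian so that $e^{iP(\hat q,\hat p)}$ is unitary on all of $\mathcal H$—this is guaranteed by the statement of Theorem~\ref{th:eff_Ham}. Beyond this, there is no genuine obstacle: the theorem is a clean corollary of the two preceding results, and the proof is essentially the one-paragraph argument above, with the main conceptual content already discharged in Theorems~\ref{th:eff_dim} and~\ref{th:eff_Ham}.
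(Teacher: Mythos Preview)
Your proposal is correct and follows essentially the same approach as the paper's own proof: apply Theorem~\ref{th:eff_dim} to obtain $\hat V_d$, diagonalise to extract a Hermitian generator $\hat H_d$, then apply Theorem~\ref{th:eff_Ham} to obtain the polynomial Hamiltonian, noting that the resulting $e^{iP(\hat q,\hat p)}=\hat V_d\oplus e^{i\hat H'}$ is an admissible choice in the universal quantifier over $\hat V'$ from the definition of approximate effective dimension. One minor imprecision: $\hat H'$ is not bounded (it lives on an infinite-dimensional subspace and arises from an unbounded polynomial in $\hat q,\hat p$), but Theorem~\ref{th:eff_Ham} already guarantees the block-diagonal structure so that $e^{iP(\hat q,\hat p)}=e^{i\hat H_d}\oplus e^{i\hat H'}$ is well-defined and unitary, which is all you need.
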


\noindent As we explain in the Methods section, this result is a direct consequence of Theorem~\ref{th:eff_dim} and Theorem~\ref{th:eff_Ham}. Formally, Theorem~\ref{th:eff_Ham_dim} shows that the group of unitary gates generated by polynomial Hamiltonians is dense in the group of physical unitary operators in the strong operator topology (see the Methods section). This places the definition of universality for continuous-variable quantum gate sets, based on the ability to approximate to arbitrary precision any unitary evolution generated by a polynomial Hamiltonian \cite{lloyd_quantum_1999}, on an equal footing with its discrete-variable counterpart, based on the property to approximate to arbitrary precision any unitary operation \cite{NielsenChuang}.

As such, this shows that polynomial Hamiltonians generate truly universal gate sets and paves the way for a Solovay--Kitaev theorem \cite{dawson2005solovay} in the infinite-dimensional setting. We obtain a version of this theorem hereafter, for universal gate sets generated by polynomial Hamiltonians as in Theorem~\ref{th:eff_Ham}:                          

\begin{theo}[Solovay--Kitaev theorem for polynomial Hamiltonians]\label{th:SK}
    Let $E>0$, $\epsilon>0$ and $d\ge\frac{64E}{\epsilon^2}\in\mathbb N$. Let $\mathcal G$ be a finite set of unitary operators over $\mathcal H_d$ generating a dense subset of $\mathcal U(\mathcal H_d)$ and let $\mathcal P$ be its realization with polynomial Hamiltonians from Theorem~\ref{th:eff_Ham}. There is a constant $c$ such that for any physical unitary operator $\hat U\in\mathcal U(\mathcal S)$ with $(E,\epsilon)$-\textit{approximate effective dimension} $d+1$, there exists a finite sequence $\hat V$ of gates from $\mathcal P$ of length $\mathcal O(\log^c(1/\epsilon))$ and such that $\|\mathcal U-\mathcal V\|_\diamond^E\le2\epsilon$, where $\mathcal U=\hat U\boldsymbol{\cdot}\hat U^\dag$ and $\mathcal V=\hat V\boldsymbol{\cdot}\hat V^\dag$.
\end{theo}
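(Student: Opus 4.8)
The plan is to combine the finite-dimensional approximation of Theorem~\ref{th:eff_dim}, the exact polynomial realization of Theorem~\ref{th:eff_Ham}, and the classical (finite-dimensional) Solovay--Kitaev theorem of \cite{dawson2005solovay}, being careful that all approximations are measured in the energy-constrained diamond norm rather than the ordinary one. First I would use the hypothesis that $\hat U$ has $(E,\epsilon)$-approximate effective dimension $d+1$: this gives a unitary $\hat V_d$ over $\mathcal H_d$ which, extended arbitrarily on $\mathrm{span}(\ket n)_{n>d}$, satisfies $\|\mathcal U-\mathcal V_d\|_\diamond^E\le\epsilon$. The target is now to replace $\hat V_d$ by a product of gates from $\mathcal P$ while paying only an additional $\epsilon$ in the energy-constrained diamond norm.

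Next I would invoke the classical Solovay--Kitaev theorem on the group $\mathcal U(\mathcal H_d)$, which is a compact Lie group of dimension polynomial in $d$. Since $\mathcal G$ generates a dense subgroup, there is a constant $c$ (depending only on $d$, hence absorbed into the statement since $d$ is fixed by $E,\epsilon$) such that $\hat V_d$ is approximated to operator-norm precision $\delta$ by a word $\hat W$ in $\mathcal G$ of length $\mathcal O(\log^c(1/\delta))$. The key translation step is then: each generator $g\in\mathcal G$ is realized \emph{exactly} on $\mathcal H_d$ by the corresponding polynomial-Hamiltonian gate in $\mathcal P$ via Theorem~\ref{th:eff_Ham}, so the polynomial-Hamiltonian word $\hat V\in\mathcal P$ corresponding to $\hat W$ acts on $\mathcal H_d$ exactly as $\hat W$ does, and acts as some (irrelevant) unitary $\hat V'$ on $\mathrm{span}(\ket n)_{n>d}$. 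Thus $\hat V = \hat W \oplus \hat V'$, and on the relevant subspace the only error incurred is the Solovay--Kitaev error $\delta$.

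It then remains to convert the operator-norm bound $\|\hat V_d-\hat W\|_{\mathrm{op}}\le\delta$ on $\mathcal H_d$ into an energy-constrained diamond-norm bound on $\mathbb R^{d+1}\oplus(\cdots)$. Here I would use the standard fact that for unitaries $\|\mathcal{W}-\mathcal{V}_d\|_\diamond \le 2\|\hat W-\hat V_d\|_{\mathrm{op}}$ on the cutoff subspace, and that an energy constraint only restricts the set of input states, so $\|\mathcal V-\mathcal V_d\|_\diamond^E \le \|\mathcal V-\mathcal V_d\|_\diamond \le 2\delta$ — where one must note that $\hat V_d$ and $\hat V$ can be chosen to agree on $\mathrm{span}(\ket n)_{n>d}$ (take $\hat V' $ as the extension of $\hat V_d$), so the difference is supported entirely on $\mathcal H_d$ and the bound $2\delta$ is genuinely uniform, not energy-dependent. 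Choosing $\delta=\epsilon/2$ gives $\|\mathcal V_d-\mathcal V\|_\diamond^E\le\epsilon$, and the triangle inequality for the energy-constrained diamond norm yields $\|\mathcal U-\mathcal V\|_\diamond^E\le\|\mathcal U-\mathcal V_d\|_\diamond^E+\|\mathcal V_d-\mathcal V\|_\diamond^E\le 2\epsilon$, with word length $\mathcal O(\log^c(1/\epsilon))$.

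I expect the main subtlety — rather than obstacle — to be the bookkeeping of \emph{which} norm each bound lives in: Theorem~\ref{th:eff_dim}'s error is genuinely energy-constrained and cannot be upgraded to the full diamond norm, whereas the Solovay--Kitaev step produces an ordinary operator-norm (hence full diamond-norm) bound on a finite-dimensional subspace, which one then \emph{downgrades} harmlessly to the energy-constrained norm. The one point requiring care is ensuring the extension $\hat V'$ of the polynomial-Hamiltonian word on $\mathrm{span}(\ket n)_{n>d}$ can be matched to the extension chosen for $\hat V_d$ in the definition of approximate effective dimension; since that definition quantifies over \emph{all} extensions $\hat V'$, and the polynomial Hamiltonian from Theorem~\ref{th:eff_Ham} fixes a particular $\hat H'$, I would simply apply the definition with that specific extension, so no mismatch arises. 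The constant $c$ is inherited verbatim from \cite{dawson2005solovay} applied in dimension $d+1$.
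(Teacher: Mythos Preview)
Your proposal is correct and follows the same overall architecture as the paper's proof (effective-dimension approximation $\to$ qudit Solovay--Kitaev $\to$ exact polynomial realization via Theorem~\ref{th:eff_Ham}), but your treatment of the middle step is genuinely cleaner than the paper's.

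The paper, after obtaining the Solovay--Kitaev word $\hat S$ on $\mathcal H_d$, bounds the \emph{pure-state trace distance} $D[(\hat V_d\oplus\hat V')\ket\psi,(\hat S\oplus\hat V')\ket\psi]$ directly via the fidelity formula $D=\sqrt{1-|\langle\cdot|\cdot\rangle|^2}$, then splits $\ket\psi$ into its $\Pi_d$ and $I-\Pi_d$ parts and invokes the tail bound $\braket{\psi|\Pi_d|\psi}\ge 1-E/d$; this is precisely where the hypothesis $d\ge 64E/\epsilon^2$ is used. You instead observe that, with the extensions matched, $(\hat W\oplus\hat V')-(\hat V_d\oplus\hat V')=(\hat W-\hat V_d)\oplus 0$ is a bounded operator of norm $\le\delta$, so the standard inequality $\|\mathcal V_1-\mathcal V_2\|_\diamond\le 2\|\hat V_1-\hat V_2\|_{\mathrm{op}}$ applies uniformly over \emph{all} states, and one simply downgrades to the energy-constrained norm. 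Your route therefore never actually uses the numerical condition $d\ge 64E/\epsilon^2$ (it is a hypothesis of the theorem but plays no role in your argument), whereas the paper's computation needs it to absorb the $2\sqrt{E/d}$ tail term. What the paper's approach buys is a self-contained estimate entirely in terms of pure-state trace distances (consistent with how it reformulates the energy-constrained diamond norm via \cite{becker2021energy}); what yours buys is brevity and a sharper statement. Your handling of the extension-matching subtlety---that the definition of approximate effective dimension quantifies over all $\hat V'$, so one may take the specific $\hat V'$ produced by the product of the $e^{i\hat H_k'}$ from Theorem~\ref{th:eff_Ham}---is exactly right and is also implicit in the paper's argument.
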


\noindent Recall that Theorem~\ref{th:eff_dim} ensures the existence of an $(E,\epsilon)$-approximate effective dimension for all physical unitary operators. The proof of Theorem~\ref{th:SK} combines Theorems~\ref{th:eff_dim} and~\ref{th:eff_Ham} with the Solovay--Kitaev theorem for qudits \cite{dawson2005solovay} of dimension $d+1$, and provides an explicit algorithm for producing the sequence of polynomial Hamiltonians, given the target unitary $\hat U\in\mathcal U(\mathcal S)$.
We refer to the Methods section for the main techniques and to Appendix~\ref{app:proof_SK} for the proof.

Theorem~\ref{th:SK} generalises the Gaussian Solovay--Kitaev theorem from \cite{becker2021energy}, thus providing an answer to \cite[Open question 4]{chabaud2024bosonic} for the gate sets based on the polynomial Hamiltonians used in Theorem~\ref{th:eff_Ham}. In particular, it implies the computational equivalence of bosonic quantum circuits based on these universal gate sets.


\section{Significance and outlook} 
\label{sec:conclusion}

We have considered effective descriptions of bosonic systems based on effective dimension and effective (polynomial) Hamiltonians, and rigorously proved that they faithfully reproduce the features of the original dynamics. Our findings have several far-reaching consequences for the simulation of bosonic systems and the complexity of bosonic computations: (i) it is always possible to truncate the infinite-dimensional Hilbert space of physical bosonic unitary dynamics to an effective dimension depending on the energy involved, with negligible loss of information (Theorem~\ref{th:eff_dim}), leading to explicit bounds on the space complexity of the simulation of bosonic systems; (ii) polynomial Hamiltonians provide universal bosonic simulators of finite-dimensional quantum systems (Theorem~\ref{th:eff_Ham}) and enable universal bosonic quantum state engineering (Corollary~\ref{coro:univcontrol}), which solves a long-standing open problem on the universal controllability of polynomial Hamiltonians \cite{wu2006smooth}; (iii) polynomial Hamiltonians generate universal gate sets in infinite dimensions (Theorem~\ref{th:eff_Ham_dim}), which is of foundational importance for continuous-variable quantum computing, since it proves the validity of the definition of universality based on polynomial Hamiltonians. This was left open since the seminal paper of Lloyd and Braunstein \cite{lloyd_quantum_1999}, and places it on an equal footing with its discrete-variable counterpart; (iv) these universal gate sets satisfy an infinite-dimensional Solovay--Kitaev theorem, leading to bosonic circuits that are equivalent from a computational complexity standpoint (Theorem~\ref{th:SK}), thus providing a solid foundation for the complexity theory of bosonic computations \cite{chabaud2024bosonic}.

Our results also motivate several new research directions.
For instance, we expect that our Theorem~\ref{th:eff_dim} on effective dimensions of infinite-dimensional unitary dynamics may be generalized to multimode, non-unitary dynamics, as a consequence of Stinespring's dilation theorem \cite{NielsenChuang}.

Moreover, Theorem~\ref{th:eff_Ham} together with Corollary~\ref{coro:univcontrol} on the universality of polynomial Hamiltonians may enable the experimental implementation of new quantum simulators \cite{daley2022practical} and the preparation of new exotic quantum states using bosonic platforms, including superconducting \cite{busnaina2024quantum} and photonic \cite{wang2024realization} setups.
It is also an intriguing question whether the universality of polynomial Hamiltonians can be extended to hybrid boson-fermion and boson-qubit systems \cite{chiu2019string}.

On a different note, it has been argued in several works \cite{lloyd_quantum_1999,wu2006smooth,sefi2011decompose} that Gaussian Hamiltonians (of degree $\le2$) together with any single non-Gaussian polynomial Hamiltonian (of degree $>2$) are sufficient to generate any polynomial Hamiltonian evolution approximately. However, a formal proof of this statement is missing, although Trotter--Suzuki formulas with rigorous error bounds have been recently uncovered in the infinite-dimensional setting \cite{burgarth2024strong}. Combined with our results, formalizing these polynomial gate set equivalences could lead to versions of Theorem~\ref{th:eff_Ham} and Theorem~\ref{th:eff_Ham_dim} valid for any choice of non-Gaussian polynomial Hamiltonian.
In particular, this may unlock an infinite-dimensional Solovay--Kitaev for all polynomial gate sets rather than only those used in our Theorem~\ref{th:eff_Ham}.

In this regard, it is well-known that the complexity of the qudit Solovay--Kitaev algorithm for finding an approximating gate sequence scales poorly with the qudit dimension \cite{dawson2005solovay}. This implies in turn that our Solovay--Kitaev theorem (Theorem~\ref{th:SK}) has high complexity based on the effective dimension of the target physical unitary operator. It would be interesting to develop a version achieving better efficiency in terms of energy.

In summary, we have shown that effective descriptions are sufficient to capture and reproduce bosonic evolutions, with significant implications for quantum simulation, quantum state engineering, and universal quantum computing. From a conceptual standpoint, our work indicates that bosonic phenomena which cannot be reproduced, even approximately, in finite dimensions must involve unphysical states of infinite energy. Beyond the framework of bosonic systems, our work supports the view that there is much to be learned by challenging the assumptions underlying our idealized models of physical reality \cite{einstein1935can}.


\section*{Acknowledgments}

The authors warmly acknowledge inspiring discussions with F.\ Grosshans and D.\ Markham.
U.C.\ also acknowledges stimulating discussions with S.\ Mehraban, S.\ Lloyd, A.\ Winter, L.\ Lami, M.\ R\o rdam and P.-E.\ Emeriau. The authors also thank S.\ Mehraban for helpful comments on a previous version of this manuscript. F.A. is grateful to J. Eisert for insightful discussions.
We acknowledge funding from the Plan France 2030 project NISQ2LSQ (ANR-22-PETQ-0006).
U.C.\ acknowledges funding from the European Union’s Horizon Europe Framework Programme (EIC Pathfinder Challenge project Veriqub) under Grant Agreement No.\ 101114899.


\section*{Methods}

\subsection{Preliminary material}

\paragraph{CV quantum information.}
In continuous-variable quantum information \cite{braunstein2005quantum,weedbrook2012gaussian,adesso2014continuous}, information is encoded in the state of \(m\) bosonic modes, which are mathematically modelled by a Hilbert space \(\mathcal{H}\) spanned by a countably-infinite Fock basis \(\{\ket{n}\}_{n\in\mathbb{N}}\), i.e., \(\sum_{n\in\mathbb{N}} c_n \ket{n} \in \mathcal{H}\) if and only if \(\sum_{n\in\mathbb{N}} |c_n|^2 < \infty\).

To a single bosonic mode, we associate the annihilation operator $\hat{a} \ket{n} = \sqrt{n} \ket{n-1}$ and creation operator $\hat{a}^\dagger \ket{n} = \sqrt{n+1}\ket{n+1}$ which satisfy $[\hat{a},\hat{a}^\dagger]=\hat I$. The canonical operators (also called position and momentum or quadrature operators) are defined as $\hat{q} = (\hat{a}+\hat{a}^\dagger)/\sqrt{2}$ and $\hat{p} = (\hat{a}-\hat{a}^\dagger)/i\sqrt{2}$, respectively.  

Polynomials in the canonical operators are central in the theory of quantum computation over continuous variables, as introduced by Lloyd and Braunstein~\cite{lloyd_quantum_1999}. In particular, a universal continuous-variable quantum computer is defined as a device that can approximately generate any evolution over multimode bosonic systems that can be written as a (hermitian) polynomial $P(\hat q_1,\hat q_2,\ldots,\hat p_1,\hat p_2,\ldots)$. In~\cite{lloyd_quantum_1999}, an inductive reasoning is used to argue that arbitrary polynomials can be obtained though ``judicious commutation'' of all Hamiltonians of degree less than or equal to two and \textit{any} Hamiltonian of higher order. The former generate so-called Gaussian transformations~\cite{braunstein2005quantum}.

Although it is a mathematically convenient setting for analysing bosonic systems, the Hilbert space \(\mathcal{H}\) has some drawbacks from a physical perspective. For instance, it contains states of infinite energy and states for which the position and momentum operators are not well-defined \cite{hall2013quantum}. The \emph{Schwartz space} \(\mathcal{S}\) is defined as the dense subspace of elements \(\sum_{n\in\mathbb{N}} c_n \ket{n} \in \mathcal{H}\) such that \(\lim_{n \to 0} c_n n^k = 0\) for any natural number \(k\) \cite{schwartz1947theorie}. The Schwartz space is \emph{not a Hilbert space}, but it retains many of the useful analytical properties of \(\mathcal{H}\) whilst eliminating these physical inconsistencies: every state in \(\mathcal{S}\) has bounded energy, and there is a well-defined action of polynomials of the canonical operators on those states \cite{wu2006smooth}. Now, there are unitaries that map physical (Schwartz) states to non-physical states (e.g., with infinite energy), so we exclude these unitaries and allow only those that map physical states to physical states. Letting \(\mathcal{U}(\mathcal{H})\) be the group of unitaries on \(\mathcal{H}\), the group of \emph{physical} unitaries is defined as
\begin{equation}
    \mathcal{U}(\mathcal S) \coloneqq \{\hat{U} \in \mathcal{U}(\mathcal{H}) \mid \forall \ket{\psi} \in \mathcal{S}: \hat{U} \ket{\psi} \in \mathcal{S}\}.
\end{equation}
We note that, in fact, all our results involving physical unitary operators (Theorems~\ref{th:eff_dim},~\ref{th:eff_Ham_dim} and~\ref{th:SK}) are valid under the weaker condition $\bra n\hat U^\dag\hat n\hat U\ket n<+\infty$ for all \(n \in \mathbb{N}\), rather than $\hat U\in\mathcal U(\mathcal S)$.

\paragraph{Distance measures.} 
In making claims about approximating operators on \(\mathcal{H}\) or \(\mathcal{S}\), we must be explicit about which distance metric we are using to measure the quality of our approximations. The \emph{diamond norm} on completely-positive channels is a natural candidate. It is defined as
\begin{equation*}
    \norm{\mathcal{E}}_\diamond\! \coloneqq \sup \{\norm{(\mathcal{E} \otimes \operatorname{id}) \rho}_1\! \mid\! \rho \text{ density operator on } \mathcal{H} \otimes \mathcal{H}\},
\end{equation*}
where \(\norm{\cdot}_1\) denotes the \emph{trace norm}. However, it has some undesirable properties: for instance, if \(\hat{H}\) is any unbounded Hermitian operator, and $\mathcal U_{H,s}:\rho \mapsto e^{is\hat{H}}\rho e^{-is\hat{H}}$ is the associated one-parameter family of unitary channels, then there are density operators \(\rho\) such that \(\norm{\mathcal U_{H,s}(\rho) -\mathcal  U_{H,s+\epsilon}(\rho) }_1\) \emph{does not} go to \(0\) even as \(\epsilon\) does, so that the diamond norm \(\norm{\mathcal U_{H,s} - \mathcal U_{H,s+\epsilon}}_\diamond\) also does not go to \(0\) \cite[Proposition 2]{winter2017energy}. This clearly precludes the use of the diamond norm for measuring the distance between two Hamiltonian evolutions in operational settings.

Instead, the \emph{energy-constrained} diamond norms give a family of distance measures which avoid this pitfall \cite{shirokov1706energy, winter2017energy}. They are defined by restricting the supremum to states whose energy is bounded by a parameter \(E > 0\):
\begin{equation}
    \|\mathcal E\|_\diamond^E:=\!\sup_{\Tr[\rho(\hat H\otimes\hat I)]\le E}\!\|(\mathcal E\otimes\mathrm{id})\rho\|_1,
\end{equation}
where now $\hat H$ is a reference energy operator. Mathematically, the energy-constrained diamond norms are related to the \emph{strong operator topology}, which gives the correct notion of continuity in time for unitary evolutions: 
\begin{prop}[\cite{shirokov1706energy}, Proposition 3]
    If, for any \(E > 0\) and quantum channel \(\mathcal{E}\), the sequence of quantum channels \(\{\mathcal{E}_p\}_{p \in \mathbb{N}}\) is such that \(\lim_{p \to \infty} \norm{\mathcal{E}_p - \mathcal{E}}_\diamond^E = 0\) then \(\{\mathcal{E}_p\}_{p \in \mathbb{N}}\) converges to \(\mathcal{E}\) in the strong operator topology.
\end{prop}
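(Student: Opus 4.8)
The plan is to establish strong (i.e.\ pointwise, trace-norm) convergence of the channels directly: I will show that for every density operator $\rho$ on $\mathcal H\otimes\mathcal H'$ one has $\|(\mathcal E_p\otimes\mathrm{id})(\rho)-(\mathcal E\otimes\mathrm{id})(\rho)\|_1\to 0$ as $p\to\infty$; since the energy-constrained diamond norm in the hypothesis already carries an ancilla, this is the natural tensor-stable statement, and the non-tensored version is recovered by taking a trivial ancilla. The only reason one cannot simply quote the hypothesis is that an arbitrary $\rho$ need not satisfy $\Tr[\rho(\hat H\otimes\hat I)]<+\infty$, so no single energy bound $E$ controls it. The remedy is a truncation argument: approximate $\rho$ in trace norm by a state of finite energy, apply the hypothesis at the (now finite) energy level dictated by that truncation, and absorb the truncation error using that quantum channels are trace-norm contractions.

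First I would fix $\rho$ and $\epsilon>0$ and let $Q_\Lambda$ be the spectral projection of $\hat H\otimes\hat I$ onto $[0,\Lambda]$. Since $\rho$ is trace-class and $Q_\Lambda\uparrow\hat I$ strongly as $\Lambda\to\infty$, we have $\Tr[Q_\Lambda\rho Q_\Lambda]\to 1$, and the truncated, renormalised state $\tilde\rho_\Lambda:=Q_\Lambda\rho Q_\Lambda/\Tr[Q_\Lambda\rho Q_\Lambda]$ satisfies $\|\rho-\tilde\rho_\Lambda\|_1\to 0$ (a standard fact for monotone projections applied to a trace-class operator). Crucially, on the range of $Q_\Lambda$ the operator $\hat H\otimes\hat I$ is bounded by $\Lambda$, so $\Tr[\tilde\rho_\Lambda(\hat H\otimes\hat I)]\le\Lambda<+\infty$. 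I would then pick $\Lambda$ large enough that $\|\rho-\tilde\rho_\Lambda\|_1\le\epsilon/3$ and set $E:=\Lambda$; this is now a fixed energy level depending only on $\rho$ and $\epsilon$.

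Next, because $\tilde\rho_\Lambda$ has energy at most $E$, the definition of $\|\cdot\|_\diamond^E$ gives $\|(\mathcal E_p\otimes\mathrm{id})(\tilde\rho_\Lambda)-(\mathcal E\otimes\mathrm{id})(\tilde\rho_\Lambda)\|_1\le\|\mathcal E_p-\mathcal E\|_\diamond^E$, and by hypothesis the right-hand side tends to $0$, so choose $P$ with $\|\mathcal E_p-\mathcal E\|_\diamond^E\le\epsilon/3$ for all $p\ge P$. Finally I would combine these via the triangle inequality, writing $(\mathcal E_p\otimes\mathrm{id})(\rho)-(\mathcal E\otimes\mathrm{id})(\rho)$ as the sum of the three terms $(\mathcal E_p\otimes\mathrm{id})(\rho-\tilde\rho_\Lambda)$, $(\mathcal E_p\otimes\mathrm{id})(\tilde\rho_\Lambda)-(\mathcal E\otimes\mathrm{id})(\tilde\rho_\Lambda)$, and $(\mathcal E\otimes\mathrm{id})(\tilde\rho_\Lambda-\rho)$. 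Since $\mathcal E_p$ and $\mathcal E$ are quantum channels, $\mathcal E_p\otimes\mathrm{id}$ and $\mathcal E\otimes\mathrm{id}$ are positive and trace-preserving, hence contractions in trace norm on Hermitian trace-class operators, so the first and third terms are each bounded by $\|\rho-\tilde\rho_\Lambda\|_1\le\epsilon/3$; the middle term is $\le\epsilon/3$ for $p\ge P$. Adding up, $\|(\mathcal E_p\otimes\mathrm{id})(\rho)-(\mathcal E\otimes\mathrm{id})(\rho)\|_1\le\epsilon$ for all $p\ge P$, which is the asserted strong convergence.

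I expect the only genuinely delicate point to be the first step, namely checking that $\tilde\rho_\Lambda\to\rho$ in trace norm and that $\tilde\rho_\Lambda$ has finite energy; both follow from elementary properties of trace-class operators and spectral projections, and everything afterward is bookkeeping with the triangle inequality and the contractivity of channels. A minor additional remark is that convergence for an arbitrary reference system follows from the case treated here by the usual enlargement of the ancilla, since an isomorphic copy $\mathcal H'$ of $\mathcal H$ is already universal for this purpose.
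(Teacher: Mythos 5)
Your argument is correct. Note that the paper does not actually prove this proposition --- it is imported verbatim from Shirokov's work --- so there is no in-paper proof to compare against; your truncate-and-renormalize argument is the standard one, and it is sound: the spectral truncation $Q_\Lambda\rho Q_\Lambda$ converges to $\rho$ in trace norm (this is exactly the gentle measurement lemma the paper uses elsewhere, since $\Tr[\rho Q_\Lambda]\to 1$), the truncated state has energy at most $\Lambda$ so the hypothesis applies at $E=\Lambda$, and trace-norm contractivity of $\mathcal E_p\otimes\mathrm{id}$ and $\mathcal E\otimes\mathrm{id}$ on Hermitian trace-class operators absorbs the truncation error. The only point worth making explicit in a written-up version is the chain $\|\rho-\tilde\rho_\Lambda\|_1\le\|\rho-Q_\Lambda\rho Q_\Lambda\|_1+(1-\Tr[Q_\Lambda\rho Q_\Lambda])\to 0$, and the observation that the un-tensored statement (which is what "strong operator topology" for channels usually means) follows by taking a pure product ancilla, which changes neither the trace norm nor the energy constraint.
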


\subsection{Tools for proving Theorem~\ref{th:eff_dim}}

Theorem~\ref{th:eff_dim} is concerned with the first kind of effective description we consider, namely replacing an infinite-dimensional unitary operator $\hat U$ with a finite-dimensional approximation, while precisely quantifying the  error introduced in the process. 

The proof hinges on two types of approximations: firstly, we show that the error measured using an energy-constrained diamond norm can be bounded by considering only a truncated Fock space, introducing a suitable truncation parameter $M$; secondly, we use the physicality of the channel to focus on its action on a larger finite-dimensional subspace, introducing another suitable cut-off parameter $N\ge M$ in Fock space. The error introduced in both steps can be bounded making use of the so-called \textit{gentle measurement lemma}:

\begin{lem}[Gentle measurement lemma \cite{wilde2013quantum}, Lemma 9.4.2] Consider a density operator $\rho$ and a projector $\Pi$ where $0 \leq \Pi \leq \mathbb{I}$. Suppose that $\Tr\left(\Pi\rho\right)\geq 1-\epsilon,\quad \epsilon\in\left[0,1\right]$.
Then, 
\begin{equation}
    \norm{\rho-\Pi\rho\Pi}_1\leq 2\sqrt{\epsilon}.
\end{equation}
\end{lem}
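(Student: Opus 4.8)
The plan is to establish the bound by combining the variational characterization of the trace norm with the Cauchy--Schwarz inequality for the Hilbert--Schmidt inner product, following a now-standard argument. Since $\Pi$ is a projector, write $\bar\Pi:=\mathbb{I}-\Pi$, so that the hypothesis $\Tr(\Pi\rho)\ge1-\epsilon$ is equivalent to $\Tr(\bar\Pi\rho)\le\epsilon$. The operators $\rho$ and $\Pi\rho\Pi$ are trace-class and $\sqrt\rho$ is Hilbert--Schmidt, so all quantities below are well defined in the infinite-dimensional setting.

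First I would recall that for any trace-class operator $X$ one has $\norm{X}_1=\sup_A|\Tr(XA)|$, where the supremum ranges over bounded operators $A$ with $\norm{A}\le1$ (in finite dimensions one may restrict to unitaries, in general to partial isometries from the polar decomposition). Applying this to $X:=\rho-\Pi\rho\Pi$ and using cyclicity of the trace gives $\Tr(XA)=\Tr\big((A-\Pi A\Pi)\rho\big)$. The key algebraic step is the identity
\begin{equation*}
    A-\Pi A\Pi=\bar\Pi A+\Pi A\bar\Pi,
\end{equation*}
which splits $\Tr(XA)$ into two contributions, each carrying exactly one factor of $\bar\Pi$. Writing $\rho=\sqrt\rho\,\sqrt\rho$ and applying $|\Tr(Y^\dagger Z)|\le\norm{Y}_2\norm{Z}_2$ to each contribution, with the factor of $\bar\Pi$ grouped against one copy of $\sqrt\rho$, yields
\begin{equation*}
    |\Tr(XA)|\le\norm{\bar\Pi\sqrt\rho}_2\,\norm{A\sqrt\rho}_2+\norm{A^\dagger\Pi\sqrt\rho}_2\,\norm{\bar\Pi\sqrt\rho}_2.
\end{equation*}
Since $\norm{A}\le1$ implies $A^\dagger A\le\mathbb{I}$ and $AA^\dagger\le\mathbb{I}$, we have $\norm{A\sqrt\rho}_2^2=\Tr(A^\dagger A\rho)\le\Tr\rho=1$ and $\norm{A^\dagger\Pi\sqrt\rho}_2^2=\Tr(\Pi AA^\dagger\Pi\,\rho)\le\Tr(\Pi\rho)\le1$, while $\norm{\bar\Pi\sqrt\rho}_2^2=\Tr(\bar\Pi\rho)\le\epsilon$. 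Hence $|\Tr(XA)|\le2\sqrt\epsilon$, and taking the supremum over $A$ gives $\norm{\rho-\Pi\rho\Pi}_1\le2\sqrt\epsilon$.

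The one point requiring care is obtaining the clean constant $2\sqrt\epsilon$ rather than the weaker $2\sqrt\epsilon+\epsilon$ produced by the naive expansion $\rho-\Pi\rho\Pi=\Pi\rho\bar\Pi+\bar\Pi\rho\Pi+\bar\Pi\rho\bar\Pi$: the decomposition $A-\Pi A\Pi=\bar\Pi A+\Pi A\bar\Pi$ routes the ``doubly orthogonal'' part into the cross terms \emph{before} Cauchy--Schwarz, so it costs nothing extra. I expect this to be the only real subtlety; the generalization to an arbitrary effect $0\le\Lambda\le\mathbb{I}$ in place of $\Pi$ (with $\sqrt\Lambda\rho\sqrt\Lambda$ replacing $\Pi\rho\Pi$) follows the same steps using $A-\sqrt\Lambda A\sqrt\Lambda=(\mathbb{I}-\sqrt\Lambda)A+\sqrt\Lambda A(\mathbb{I}-\sqrt\Lambda)$ together with the operator inequality $(\mathbb{I}-\sqrt\Lambda)^2\le\mathbb{I}-\Lambda$, valid since $\lambda\le\sqrt\lambda$ for $\lambda\in[0,1]$; for a projector this degenerates to the identity $\bar\Pi^2=\bar\Pi$, which is why the argument above never invoked it explicitly.
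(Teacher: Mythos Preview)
Your proof is correct. The paper does not supply its own proof of this lemma: it is quoted directly from Wilde's textbook \cite{wilde2013quantum} as a tool, with the remark that the projector version suffices. There is therefore nothing in the paper to compare your argument against.

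For what it is worth, the route you take---the variational formula $\norm{X}_1=\sup_{\|A\|\le1}|\Tr(XA)|$ combined with the splitting $A-\Pi A\Pi=\bar\Pi A+\Pi A\bar\Pi$ and Cauchy--Schwarz on $\sqrt\rho$---is clean and works verbatim in infinite dimensions, which is the relevant setting here. Your observation that this decomposition absorbs the $\bar\Pi\rho\bar\Pi$ block into the cross terms \emph{before} estimating, thereby avoiding the looser $2\sqrt\epsilon+\epsilon$ bound, is exactly the point. The extension to general effects $0\le\Lambda\le\mathbb{I}$ that you sketch (using $(\mathbb{I}-\sqrt\Lambda)^2\le\mathbb{I}-\Lambda$) recovers the full statement of Wilde's Lemma~9.4.2, of which the paper only needs the projector case.
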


\noindent Note that this is a slightly weaker result than the more general version in~\cite{wilde2013quantum}, but it is sufficient for our proof. We use this lemma twice, for the projectors $\Pi_M$ and $\Pi_N$ onto the subspace of states with a number of particles at most $M$ and $N$, respectively. Now, the restriction $\Pi_N\hat U\Pi_N$ of a unitary operator to a finite-dimensional subspace is not necessarily unitary. Therefore we further need to rely on a notion of ``closest unitary operator'' in order to construct a suitable unitary version of the cut-off of the original channel. This notion is provided by the QR factorization (related to the Gram--Schmidt process):
\begin{lem}[QR factorization for rectangular complex matrices]
Let \( A \in \mathbb{C}^{N \times M} \) be a matrix with linearly independent columns. Then, there exists a unitary matrix \( Q \in \mathbb{C}^{N \times N} \) and an upper triangular matrix \( R \in \mathbb{C}^{M \times M} \) such that:
\begin{equation}
A = Q\begin{pmatrix}R\\0\end{pmatrix}.
\end{equation}
\end{lem}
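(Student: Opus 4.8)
The plan is to prove this by Gram--Schmidt orthonormalization of the columns of $A$, followed by a basis-completion step. Write $A=(a_1\mid\cdots\mid a_M)$ in terms of its columns; since these are linearly independent we have $M\le N$.

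First I would run Gram--Schmidt on $a_1,\dots,a_M$. Inductively, having produced orthonormal vectors $q_1,\dots,q_{j-1}$ with $\mathrm{span}(q_1,\dots,q_{j-1})=\mathrm{span}(a_1,\dots,a_{j-1})$, set $\tilde q_j:=a_j-\sum_{i=1}^{j-1}\langle q_i,a_j\rangle q_i$; linear independence of the columns guarantees $\tilde q_j\neq 0$, so $q_j:=\tilde q_j/\|\tilde q_j\|$ is well-defined. Setting $r_{ij}:=\langle q_i,a_j\rangle$ for $i<j$ and $r_{jj}:=\|\tilde q_j\|>0$, this yields $a_j=\sum_{i=1}^{j}r_{ij}q_i$ for every $j$. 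Collecting the $r_{ij}$ (with $r_{ij}:=0$ for $i>j$) into a matrix $R\in\mathbb{C}^{M\times M}$ gives an upper triangular matrix with strictly positive diagonal.

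Next I would extend $\{q_1,\dots,q_M\}$ to an orthonormal basis $\{q_1,\dots,q_N\}$ of $\mathbb{C}^N$: the orthogonal complement of $\mathrm{span}(q_1,\dots,q_M)$ has dimension $N-M$, and any orthonormal basis of it supplies the missing vectors $q_{M+1},\dots,q_N$ (obtained, e.g., by completing to a spanning set and applying Gram--Schmidt again). Define $Q:=(q_1\mid\cdots\mid q_N)\in\mathbb{C}^{N\times N}$; its columns are orthonormal, so $Q^\dagger Q=\mathbb{I}_N$ and $Q$ is unitary. Finally I would verify the factorization column by column: the $j$-th column of $Q\begin{pmatrix}R\\0\end{pmatrix}$ equals $\sum_{i=1}^{M}R_{ij}q_i=\sum_{i=1}^{j}r_{ij}q_i=a_j$, which is the $j$-th column of $A$, so $A=Q\begin{pmatrix}R\\0\end{pmatrix}$.

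There is no substantial obstacle here, the statement being classical finite-dimensional linear algebra; the only points deserving a word of care are that each diagonal entry $r_{jj}$ is nonzero, which is precisely where linear independence of the columns is used, and that a partial orthonormal system in $\mathbb{C}^N$ can always be completed to a full orthonormal basis. An alternative route would be to construct $Q$ as a product of at most $M$ Householder reflections that successively annihilate the subdiagonal entries of each column; this has the merit of being the numerically stable algorithm one would actually run, which matters for the claim in Theorem~\ref{th:eff_dim} that a finite-dimensional approximation can be computed in polynomial time.
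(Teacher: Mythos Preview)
Your proof is correct and matches the paper's approach: the paper does not prove this lemma separately but, in its proof of Theorem~\ref{th:eff_dim}, obtains $Q$ precisely by Gram--Schmidt orthonormalisation of the columns of the truncated operator, exactly as you do. Your remark on Householder reflections is a welcome aside, since the paper's polynomial-time claim indeed rests on the standard numerical cost of QR.
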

\noindent We apply this decomposition to a column full rank submatrix of the cut-off operator $\Pi_N\hat{U}\Pi_N$ and choose the resulting $Q$ as our finite-dimensional unitary approximation of $\hat{U}$, carefully bounding the error introduced. Combining the approximation errors at each step completes the proof, yielding an effective dimension $d=N$. The complexity of computing the QR factorization (see for example \cite{gloub1996matrix}, Section 5.2) for the cut-off dimension introduced determines the complexity of constructing the unitary cut-off version of the original channel, i.e., computing its effective description.
We refer to the Supplementary Information for a detailed proof.

\subsection{Tools for proving Theorem~\ref{th:eff_Ham}}

Theorem~\ref{th:eff_Ham} seeks to reproduce a finite-dimensional unitary operator $\hat U=e^{i\hat H}$ with a unitary operator generated by a polynomial Hamiltonian.

Since we are specifically interested in Hamiltonian evolutions where the Hamiltonian is an \emph{unbounded} operator, we must unavoidably consider some of the mathematical technicalities in manipulating such operators. In particular, if \(\hat H\) is an unbounded Hamiltonian, we cannot expect the usual power series expansion of the exponential, \(e^{it\hat H} = \sum_{n \in \mathbb{N}} \frac{(it\hat H)^n}{n!}\), to converge to an operator on \(\mathcal{H}\). For sufficiently well-behaved unbounded operators, the exponential may be defined instead via the spectral theorem or the resolvent formalism \cite{hall2013quantum,dunford1988linear}. However, there are still many cases in which defining unambiguously the unitary operator generated by a polynomial Hamiltonian is challenging, because the corresponding Schr\"odinger equation does not have solutions for most choices of initial conditions \cite{hall2013quantum}. 

In the present case, we avoid these mathematical technicalities by focussing on polynomial Hamiltonians that are block-diagonal in the Fock basis, i.e., which take the form $P(\hat q,\hat p)=\Pi_N\hat H\Pi_N\oplus(I-\Pi_N)\hat H'(I-\Pi_N)$ for some $N\ge0$, where $\Pi_N$ is the projector onto the subspace of states with a number of particles at most $N$. This ensures that, given an initial state with at most $N$ particles, the Schr\"odinger evolution with Hamiltonian $P(\hat q,\hat p)$ remains in the subspace of states with a number of particles at most $N$, and reproduces the finite-dimensional unitary evolution $e^{\Pi_N\hat H\Pi_N}$ on that subspace.

The proof of Theorem~\ref{th:eff_Ham} then proceeds by constructing such a block-diagonal polynomial Hamiltonian $P(\hat q,\hat p)=\Pi_N\hat H\Pi_N\oplus(I-\Pi_N)\hat H'(I-\Pi_N)$, given the description of a finite-dimensional Hermitian operator $\hat H$. This construction is based on decomposing the finite-dimensional Hamiltonian $\hat H$ as a sum of terms pairing only two basis states, say $\ket j$ and $\ket k$. These terms can be mimicked by polynomial Hamiltonians that are block-diagonal in the Fock basis, constructed using the canonical operators and polynomial functions that are zero on all integers in a specific range, except for the relevant ones, depending on $j$ and $k$. This is achieved using Lagrange interpolation polynomials:

\begin{defi}[Lagrange interpolation polynomials]
    Given $n+1$ points $(x_0,y_0),\dots,(x_n,y_n)$, the Lagrange interpolation polynomial is defined as
    \begin{equation}
        L(x)=\sum_{j=0}^ny_j\prod_{\substack{0\le i\le n\\i\neq j}}\frac{x-x_i}{x_j-x_i}.
    \end{equation}
    It satisfies $L(x_i)=y_i$, for all $0\le i\le n$.
\end{defi}

In particular, polynomial operators of the form $P(\hat a^\dag\hat a)$ with $P$ a well-chosen Lagrange interpolation polynomial allow us to reproduce diagonal entries of $\hat H$, while operators of the form $P(\hat a^\dag\hat a)\hat a^k + h.c.$ lead to off-diagonal entries.
We refer to the Supplementary Information for a detailed proof, including a generalisation of the result to the multimode case.

\subsection{Proof of Theorem~\ref{th:eff_Ham_dim}} 

Theorem~\ref{th:eff_Ham_dim} deals with the second kind of effective description we consider, namely approximating a physical infinite-dimensional unitary operator $\hat U$ with a unitary approximation generated by a polynomial Hamiltonian, while precisely quantifying the error introduced in the process. 

\begin{proof}
The proof combines Theorem \ref{th:eff_dim} and Theorem \ref{th:eff_Ham} as follows. We first use Theorem \ref{th:eff_dim} to derive a $(d+1)$-dimensional $(E,\epsilon)$-approximation $\hat V_d$ of the original unitary operator $\hat U$, with $d=\mathcal O\!\left(\frac E{\epsilon^4}E_{\hat U}\!\left(\frac{64E}{\epsilon^2}\right)\right)$, in time polynomial in $d$, $E$, and $\frac1\epsilon$. Then, we extract a finite-dimensional Hamiltonian $\hat H$ generating the unitary evolution $\hat V_d$. We explain hereafter how this step can be performed efficiently. Finally, we use Theorem \ref{th:eff_Ham} to construct a polynomial Hamiltonian of degree at most $3d$ which generates a unitary evolution that reproduces the unitary operator $\hat V_d=e^{i\hat H}$, itself approximating the original unitary operator $\hat U$.

Since \(\hat V_d \in \mathcal{U}(\mathcal{H}_d)\) is a unitary operator on a $(d+1)$-dimensional Hilbert space, we can write \(\hat V_d =WDW^\dag\) for some unitary matrix \(W\) and a diagonal matrix \(D\), whose diagonal elements are of the form \((e^{i\lambda_n})_{n=0,\dots,d}\) for \(\lambda \in [0,2\pi)\). Let \(\Sigma\) be the diagonal matrix whose diagonal elements are \((\lambda_n)_{n=0,\dots,d}\), then \(\Sigma\) is Hermitian and \(\hat V_d = W e^{i \Sigma} W^\dagger = e^{i W \Sigma W^\dag}\). Since diagonalising \(D\) and the matrix multiplications with \(\Sigma\) can both be performed in time polynomial in \(d\), we can extract the generator $\hat H=W \Sigma W^\dag$ of \(\hat V_d\) efficiently, which concludes the proof of Theorem \ref{th:eff_Ham_dim}.
\end{proof}

\subsection{Tools for proving Theorem~\ref{th:SK}}

Theorem~\ref{th:SK} provides a Solovay--Kitaev theorem for universal unitary gate sets generated by block-diagonal polynomial Hamiltonians constructed in Theorem~\ref{th:eff_Ham}.

Given a good finite-dimensional unitary approximation $\hat V_d$ of the physical unitary $\hat U$, guaranteed to exist by Theorem~\ref{th:eff_dim}, the proof proceeds by synthesizing a short sequence of finite-dimensional gates approximating $\hat V_d$ over $\mathcal H_d$, via the Solovay--Kitaev theorem for qudits:

\begin{theo}[Qudit Solovay--Kitaev theorem \cite{dawson2005solovay}]
Let $\mathcal G$ be a finite set of unitary gates with determinant one and their inverses, generating a dense subset of $SU(d)$. There is a constant $c$ such that for any $U\in SU(d)$, there exists a finite sequence $S$ of gates from $\mathcal G$ of length $\mathcal O(\log^c(\frac1\epsilon)$) and such that $d(U, S)<\epsilon$, where $d$ is the distance induced by the operator norm.
\end{theo}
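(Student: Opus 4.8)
The statement is the classical Solovay--Kitaev theorem for $SU(d)$, so the plan is to reproduce the standard recursive argument of \cite{dawson2005solovay} (see also \cite[App.~3]{NielsenChuang}), keeping track of how the exponent $c$ emerges. The proof rests on two ingredients: a \emph{base case}---an $\epsilon_0$-net of $SU(d)$ built from bounded-length words in $\mathcal G$---and a \emph{shrinking step} that turns an $\epsilon$-dense net into a $C\epsilon^{3/2}$-dense net at the cost of multiplying word lengths by a fixed constant. For the base case: since $\mathcal G$ is finite, closed under inverses, and generates a dense subgroup of the compact group $SU(d)$, a compactness argument yields, for every $\epsilon_0>0$, a length $l_0=l_0(\epsilon_0)$ such that words in $\mathcal G$ of length at most $l_0$ are $\epsilon_0$-dense in $SU(d)$ for the operator norm; we fix $\epsilon_0$ below a $d$-dependent threshold so that the subsequent recursion contracts.

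For the recursive step, assume inductively a net $\mathcal N_{n-1}$ of words of length $\le l_{n-1}$ that is $\epsilon_{n-1}$-dense. Given $U\in SU(d)$, choose $U_{n-1}\in\mathcal N_{n-1}$ with $\|U-U_{n-1}\|\le\epsilon_{n-1}$ and set $\Delta:=UU_{n-1}^\dagger$, so $\|\Delta-I\|\le\epsilon_{n-1}$. The crucial analytic input is the \emph{balanced group-commutator lemma}: any $\Delta\in SU(d)$ with $\|\Delta-I\|\le\delta$ can be written exactly as $\Delta=VWV^\dagger W^\dagger$ with $\|V-I\|,\|W-I\|=O(\sqrt\delta)$. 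Feeding $V$ and $W$ back into the inductive hypothesis produces $\tilde V,\tilde W\in\mathcal N_{n-1}$ with $\|\tilde V-V\|,\|\tilde W-W\|\le\epsilon_{n-1}$, and since the group-commutator map is Lipschitz near the identity one obtains $\|\Delta-\tilde V\tilde W\tilde V^\dagger\tilde W^\dagger\|=O(\sqrt{\epsilon_{n-1}}\cdot\epsilon_{n-1})$. Hence
\begin{equation*}
U_n:=\tilde V\tilde W\tilde V^\dagger\tilde W^\dagger\,U_{n-1}
\end{equation*}
obeys $\|U-U_n\|\le C\epsilon_{n-1}^{3/2}=:\epsilon_n$, while the word representing $U_n$ has length $l_n=5\,l_{n-1}$, so $l_n=5^n l_0$. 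Solving $\epsilon_n=C\epsilon_{n-1}^{3/2}$ gives $\epsilon_n=C^{-2}(C^2\epsilon_0)^{(3/2)^n}\to 0$ provided $C^2\epsilon_0<1$; to reach precision $\epsilon$ one needs $n=O(\log\log(1/\epsilon))$ rounds, so the final word length is $l_n=5^n l_0=O\!\big(\log^{c}(1/\epsilon)\big)$ with $c=\log 5/\log(3/2)$, which is the claimed bound. (Everything stays inside $SU(d)$ under the determinant-one hypothesis; a general element of $U(d)$ would first have its global phase removed.)

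The main obstacle is the balanced group-commutator lemma. Its proof uses the Lie-algebra structure of $SU(d)$: writing $\Delta=e^{iZ}$ with $Z$ Hermitian traceless and $\|Z\|=O(\delta)$, one exploits that $\mathfrak{su}(d)$ is perfect, so $iZ$ lies in the span of commutators of traceless anti-Hermitian matrices; the Baker--Campbell--Hausdorff expansion $e^{X}e^{Y}e^{-X}e^{-Y}=e^{[X,Y]+O(\|X\|^{3}+\|Y\|^{3})}$ then shows that matching $[X,Y]$ to $iZ$ at leading order forces $\|X\|,\|Y\|=\Theta(\sqrt{\|Z\|})$, and a fixed-point (or inverse-function-theorem) argument promotes the leading-order match to the exact equality $\Delta=e^{X}e^{Y}e^{-X}e^{-Y}$ with $\|X\|,\|Y\|=O(\sqrt\delta)$. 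Beyond this, the remaining work is bookkeeping: one must check that the $\tilde V,\tilde W$ produced by the recursion stay in the regime where the lemma and the Lipschitz bounds apply---this is exactly what the smallness threshold on $\epsilon_0$ guarantees---and that the constant $C$ can be chosen uniformly over all recursion levels. Once these constants are pinned down, iterating the two ingredients yields the theorem.
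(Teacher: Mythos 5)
Your proposal correctly reproduces the standard recursive Solovay--Kitaev argument --- base-case $\epsilon_0$-net by compactness, balanced group-commutator shrinking step via BCH, and the resulting exponent $c=\log 5/\log(3/2)$ --- which is precisely the proof the paper relies on: this theorem is imported by citation from \cite{dawson2005solovay} rather than proved in the text. No discrepancies to report.
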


We then employ Theorem~\ref{th:eff_Ham} to generate each finite-dimensional unitary gate in the sequence using polynomial Hamiltonians, obtaining a short sequence of unitary operators over $\mathcal H$ generated by polynomial Hamiltonians approximation $\hat U$ over $\mathcal H_d$. Finally, we leverage the physicality of the target unitary channel to lift the approximation of $\hat U$ over $\mathcal H_d$ to an approximation over $\mathcal H$.
We refer to the Supplementary Information for a detailed proof.


\bibliography{biblio}

\begin{thebibliography}{39}%
\makeatletter
\providecommand \@ifxundefined [1]{%
 \@ifx{#1\undefined}
}%
\providecommand \@ifnum [1]{%
 \ifnum #1\expandafter \@firstoftwo
 \else \expandafter \@secondoftwo
 \fi
}%
\providecommand \@ifx [1]{%
 \ifx #1\expandafter \@firstoftwo
 \else \expandafter \@secondoftwo
 \fi
}%
\providecommand \natexlab [1]{#1}%
\providecommand \enquote  [1]{``#1''}%
\providecommand \bibnamefont  [1]{#1}%
\providecommand \bibfnamefont [1]{#1}%
\providecommand \citenamefont [1]{#1}%
\providecommand \href@noop [0]{\@secondoftwo}%
\providecommand \href [0]{\begingroup \@sanitize@url \@href}%
\providecommand \@href[1]{\@@startlink{#1}\@@href}%
\providecommand \@@href[1]{\endgroup#1\@@endlink}%
\providecommand \@sanitize@url [0]{\catcode `\\12\catcode `\$12\catcode
  `\&12\catcode `\#12\catcode `\^12\catcode `\_12\catcode `\%12\relax}%
\providecommand \@@startlink[1]{}%
\providecommand \@@endlink[0]{}%
\providecommand \url  [0]{\begingroup\@sanitize@url \@url }%
\providecommand \@url [1]{\endgroup\@href {#1}{\urlprefix }}%
\providecommand \urlprefix  [0]{URL }%
\providecommand \Eprint [0]{\href }%
\providecommand \doibase [0]{http://dx.doi.org/}%
\providecommand \selectlanguage [0]{\@gobble}%
\providecommand \bibinfo  [0]{\@secondoftwo}%
\providecommand \bibfield  [0]{\@secondoftwo}%
\providecommand \translation [1]{[#1]}%
\providecommand \BibitemOpen [0]{}%
\providecommand \bibitemStop [0]{}%
\providecommand \bibitemNoStop [0]{.\EOS\space}%
\providecommand \EOS [0]{\spacefactor3000\relax}%
\providecommand \BibitemShut  [1]{\csname bibitem#1\endcsname}%
\let\auto@bib@innerbib\@empty
\bibitem [{\citenamefont {Aaronson}\ and\ \citenamefont
  {Arkhipov}(2013)}]{Aaronson2013}%
  \BibitemOpen
  \bibfield  {author} {\bibinfo {author} {\bibfnamefont {S.}~\bibnamefont
  {Aaronson}}\ and\ \bibinfo {author} {\bibfnamefont {A.}~\bibnamefont
  {Arkhipov}},\ }\href {\doibase 10.1145/1993636.1993682} {\bibfield  {journal}
  {\bibinfo  {journal} {Theory of Computing}\ }\textbf {\bibinfo {volume}
  {9}},\ \bibinfo {pages} {143} (\bibinfo {year} {2013})}\BibitemShut {NoStop}%
\bibitem [{\citenamefont {Hamilton}\ \emph {et~al.}(2017)\citenamefont
  {Hamilton}, \citenamefont {Kruse}, \citenamefont {Sansoni}, \citenamefont
  {Barkhofen}, \citenamefont {Silberhorn},\ and\ \citenamefont
  {Jex}}]{hamilton2017gaussian}%
  \BibitemOpen
  \bibfield  {author} {\bibinfo {author} {\bibfnamefont {C.~S.}\ \bibnamefont
  {Hamilton}}, \bibinfo {author} {\bibfnamefont {R.}~\bibnamefont {Kruse}},
  \bibinfo {author} {\bibfnamefont {L.}~\bibnamefont {Sansoni}}, \bibinfo
  {author} {\bibfnamefont {S.}~\bibnamefont {Barkhofen}}, \bibinfo {author}
  {\bibfnamefont {C.}~\bibnamefont {Silberhorn}}, \ and\ \bibinfo {author}
  {\bibfnamefont {I.}~\bibnamefont {Jex}},\ }\href {\doibase
  10.1103/PhysRevLett.119.170501} {\bibfield  {journal} {\bibinfo  {journal}
  {Physical Review Letters}\ }\textbf {\bibinfo {volume} {119}},\ \bibinfo
  {pages} {170501} (\bibinfo {year} {2017})}\BibitemShut {NoStop}%
\bibitem [{\citenamefont {Zhong}\ \emph {et~al.}(2020)\citenamefont {Zhong},
  \citenamefont {Wang}, \citenamefont {Deng}, \citenamefont {Chen},
  \citenamefont {Peng}, \citenamefont {Luo}, \citenamefont {Qin}, \citenamefont
  {Wu}, \citenamefont {Ding}, \citenamefont {Hu} \emph
  {et~al.}}]{zhong2020quantum}%
  \BibitemOpen
  \bibfield  {author} {\bibinfo {author} {\bibfnamefont {H.-S.}\ \bibnamefont
  {Zhong}}, \bibinfo {author} {\bibfnamefont {H.}~\bibnamefont {Wang}},
  \bibinfo {author} {\bibfnamefont {Y.-H.}\ \bibnamefont {Deng}}, \bibinfo
  {author} {\bibfnamefont {M.-C.}\ \bibnamefont {Chen}}, \bibinfo {author}
  {\bibfnamefont {L.-C.}\ \bibnamefont {Peng}}, \bibinfo {author}
  {\bibfnamefont {Y.-H.}\ \bibnamefont {Luo}}, \bibinfo {author} {\bibfnamefont
  {J.}~\bibnamefont {Qin}}, \bibinfo {author} {\bibfnamefont {D.}~\bibnamefont
  {Wu}}, \bibinfo {author} {\bibfnamefont {X.}~\bibnamefont {Ding}}, \bibinfo
  {author} {\bibfnamefont {Y.}~\bibnamefont {Hu}},  \emph {et~al.},\ }\href
  {\doibase 10.1017/cbo9780511622748.004} {\bibfield  {journal} {\bibinfo
  {journal} {Science}\ }\textbf {\bibinfo {volume} {370}},\ \bibinfo {pages}
  {1460} (\bibinfo {year} {2020})}\BibitemShut {NoStop}%
\bibitem [{\citenamefont {Madsen}\ \emph {et~al.}(2022)\citenamefont {Madsen},
  \citenamefont {Laudenbach}, \citenamefont {Askarani}, \citenamefont
  {Rortais}, \citenamefont {Vincent}, \citenamefont {Bulmer}, \citenamefont
  {Miatto}, \citenamefont {Neuhaus}, \citenamefont {Helt}, \citenamefont
  {Collins}, \citenamefont {Lita}, \citenamefont {Gerrits}, \citenamefont
  {Nam}, \citenamefont {Vaidya}, \citenamefont {Menotti}, \citenamefont
  {Dhand}, \citenamefont {Vernon}, \citenamefont {Quesada},\ and\ \citenamefont
  {Lavoie}}]{madsen2022quantum}%
  \BibitemOpen
  \bibfield  {author} {\bibinfo {author} {\bibfnamefont {L.~S.}\ \bibnamefont
  {Madsen}}, \bibinfo {author} {\bibfnamefont {F.}~\bibnamefont {Laudenbach}},
  \bibinfo {author} {\bibfnamefont {M.~F.}\ \bibnamefont {Askarani}}, \bibinfo
  {author} {\bibfnamefont {F.}~\bibnamefont {Rortais}}, \bibinfo {author}
  {\bibfnamefont {T.}~\bibnamefont {Vincent}}, \bibinfo {author} {\bibfnamefont
  {J.~F.~F.}\ \bibnamefont {Bulmer}}, \bibinfo {author} {\bibfnamefont {F.~M.}\
  \bibnamefont {Miatto}}, \bibinfo {author} {\bibfnamefont {L.}~\bibnamefont
  {Neuhaus}}, \bibinfo {author} {\bibfnamefont {L.~G.}\ \bibnamefont {Helt}},
  \bibinfo {author} {\bibfnamefont {M.~J.}\ \bibnamefont {Collins}}, \bibinfo
  {author} {\bibfnamefont {A.~E.}\ \bibnamefont {Lita}}, \bibinfo {author}
  {\bibfnamefont {T.}~\bibnamefont {Gerrits}}, \bibinfo {author} {\bibfnamefont
  {S.~W.}\ \bibnamefont {Nam}}, \bibinfo {author} {\bibfnamefont {V.~D.}\
  \bibnamefont {Vaidya}}, \bibinfo {author} {\bibfnamefont {M.}~\bibnamefont
  {Menotti}}, \bibinfo {author} {\bibfnamefont {I.}~\bibnamefont {Dhand}},
  \bibinfo {author} {\bibfnamefont {Z.}~\bibnamefont {Vernon}}, \bibinfo
  {author} {\bibfnamefont {N.}~\bibnamefont {Quesada}}, \ and\ \bibinfo
  {author} {\bibfnamefont {J.}~\bibnamefont {Lavoie}},\ }\href {\doibase
  10.1038/s41586-022-04725-x} {\bibfield  {journal} {\bibinfo  {journal}
  {Nature}\ }\textbf {\bibinfo {volume} {606}},\ \bibinfo {pages} {75}
  (\bibinfo {year} {2022})}\BibitemShut {NoStop}%
\bibitem [{\citenamefont {Albert}(2022)}]{albert2022bosonic}%
  \BibitemOpen
  \bibfield  {author} {\bibinfo {author} {\bibfnamefont {V.~V.}\ \bibnamefont
  {Albert}},\ }\href {\doibase 10.48550/arXiv.2211.05714} {\bibfield  {journal}
  {\bibinfo  {journal} {arXiv preprint arXiv:2211.05714}\ } (\bibinfo {year}
  {2022}),\ 10.48550/arXiv.2211.05714}\BibitemShut {NoStop}%
\bibitem [{\citenamefont {Gottesman}\ \emph {et~al.}(2001)\citenamefont
  {Gottesman}, \citenamefont {Kitaev},\ and\ \citenamefont
  {Preskill}}]{Gottesman2001}%
  \BibitemOpen
  \bibfield  {author} {\bibinfo {author} {\bibfnamefont {D.}~\bibnamefont
  {Gottesman}}, \bibinfo {author} {\bibfnamefont {A.}~\bibnamefont {Kitaev}}, \
  and\ \bibinfo {author} {\bibfnamefont {J.}~\bibnamefont {Preskill}},\ }\href
  {\doibase 10.1103/PhysRevA.64.012310} {\bibfield  {journal} {\bibinfo
  {journal} {Phys. Rev. A}\ }\textbf {\bibinfo {volume} {64}},\ \bibinfo
  {pages} {012310} (\bibinfo {year} {2001})}\BibitemShut {NoStop}%
\bibitem [{\citenamefont {Sivak}\ \emph {et~al.}(2023)\citenamefont {Sivak},
  \citenamefont {Eickbusch}, \citenamefont {Royer}, \citenamefont {Singh},
  \citenamefont {Tsioutsios}, \citenamefont {Ganjam}, \citenamefont {Miano},
  \citenamefont {Brock}, \citenamefont {Ding}, \citenamefont {Frunzio} \emph
  {et~al.}}]{sivak2023real}%
  \BibitemOpen
  \bibfield  {author} {\bibinfo {author} {\bibfnamefont {V.}~\bibnamefont
  {Sivak}}, \bibinfo {author} {\bibfnamefont {A.}~\bibnamefont {Eickbusch}},
  \bibinfo {author} {\bibfnamefont {B.}~\bibnamefont {Royer}}, \bibinfo
  {author} {\bibfnamefont {S.}~\bibnamefont {Singh}}, \bibinfo {author}
  {\bibfnamefont {I.}~\bibnamefont {Tsioutsios}}, \bibinfo {author}
  {\bibfnamefont {S.}~\bibnamefont {Ganjam}}, \bibinfo {author} {\bibfnamefont
  {A.}~\bibnamefont {Miano}}, \bibinfo {author} {\bibfnamefont
  {B.}~\bibnamefont {Brock}}, \bibinfo {author} {\bibfnamefont
  {A.}~\bibnamefont {Ding}}, \bibinfo {author} {\bibfnamefont {L.}~\bibnamefont
  {Frunzio}},  \emph {et~al.},\ }\href {\doibase 10.1038/s41586-023-05782-6}
  {\bibfield  {journal} {\bibinfo  {journal} {Nature}\ }\textbf {\bibinfo
  {volume} {616}},\ \bibinfo {pages} {50} (\bibinfo {year} {2023})}\BibitemShut
  {NoStop}%
\bibitem [{Note1()}]{Note1}%
  \BibitemOpen
  \bibinfo {note} {It is well-known that such a relation cannot be satisfied by
  finite-dimensional linear operators, as can be seen by contradiction by
  taking the trace, giving zero for the left hand side and a non-zero value for
  the right hand side.}\BibitemShut {Stop}%
\bibitem [{\citenamefont {Jordan}\ \emph {et~al.}(2012)\citenamefont {Jordan},
  \citenamefont {Lee},\ and\ \citenamefont {Preskill}}]{jordan2012quantum}%
  \BibitemOpen
  \bibfield  {author} {\bibinfo {author} {\bibfnamefont {S.~P.}\ \bibnamefont
  {Jordan}}, \bibinfo {author} {\bibfnamefont {K.~S.}\ \bibnamefont {Lee}}, \
  and\ \bibinfo {author} {\bibfnamefont {J.}~\bibnamefont {Preskill}},\ }\href
  {\doibase 10.1126/science.121706} {\bibfield  {journal} {\bibinfo  {journal}
  {Science}\ }\textbf {\bibinfo {volume} {336}},\ \bibinfo {pages} {1130}
  (\bibinfo {year} {2012})}\BibitemShut {NoStop}%
\bibitem [{\citenamefont {Tong}\ \emph {et~al.}(2022)\citenamefont {Tong},
  \citenamefont {Albert}, \citenamefont {McClean}, \citenamefont {Preskill},\
  and\ \citenamefont {Su}}]{tong2022provably}%
  \BibitemOpen
  \bibfield  {author} {\bibinfo {author} {\bibfnamefont {Y.}~\bibnamefont
  {Tong}}, \bibinfo {author} {\bibfnamefont {V.~V.}\ \bibnamefont {Albert}},
  \bibinfo {author} {\bibfnamefont {J.~R.}\ \bibnamefont {McClean}}, \bibinfo
  {author} {\bibfnamefont {J.}~\bibnamefont {Preskill}}, \ and\ \bibinfo
  {author} {\bibfnamefont {Y.}~\bibnamefont {Su}},\ }\href {\doibase
  10.22331/q-2022-09-22-816} {\bibfield  {journal} {\bibinfo  {journal}
  {Quantum}\ }\textbf {\bibinfo {volume} {6}},\ \bibinfo {pages} {816}
  (\bibinfo {year} {2022})}\BibitemShut {NoStop}%
\bibitem [{\citenamefont {Lloyd}\ and\ \citenamefont
  {Braunstein}(1999)}]{lloyd_quantum_1999}%
  \BibitemOpen
  \bibfield  {author} {\bibinfo {author} {\bibfnamefont {S.}~\bibnamefont
  {Lloyd}}\ and\ \bibinfo {author} {\bibfnamefont {S.~L.}\ \bibnamefont
  {Braunstein}},\ }\href {\doibase 10.1103/PhysRevLett.82.1784} {\bibfield
  {journal} {\bibinfo  {journal} {Physical Review Letters}\ }\textbf {\bibinfo
  {volume} {82}},\ \bibinfo {pages} {1784} (\bibinfo {year}
  {1999})}\BibitemShut {NoStop}%
\bibitem [{\citenamefont {Gersch}\ and\ \citenamefont
  {Knollman}(1963)}]{gersch1963quantum}%
  \BibitemOpen
  \bibfield  {author} {\bibinfo {author} {\bibfnamefont {H.~A.}\ \bibnamefont
  {Gersch}}\ and\ \bibinfo {author} {\bibfnamefont {G.~C.}\ \bibnamefont
  {Knollman}},\ }\href {\doibase 10.1103/PhysRev.129.959} {\bibfield  {journal}
  {\bibinfo  {journal} {Physical Review}\ }\textbf {\bibinfo {volume} {129}},\
  \bibinfo {pages} {959} (\bibinfo {year} {1963})}\BibitemShut {NoStop}%
\bibitem [{\citenamefont {Ramond}(2020)}]{ramond2020field}%
  \BibitemOpen
  \bibfield  {author} {\bibinfo {author} {\bibfnamefont {P.}~\bibnamefont
  {Ramond}},\ }\href@noop {} {\emph {\bibinfo {title} {Field theory: a modern
  primer}}}\ (\bibinfo  {publisher} {Routledge},\ \bibinfo {year}
  {2020})\BibitemShut {NoStop}%
\bibitem [{\citenamefont {Mandel}\ and\ \citenamefont
  {Wolf}(1995)}]{Mandel_Wolf_1995}%
  \BibitemOpen
  \bibfield  {author} {\bibinfo {author} {\bibfnamefont {L.}~\bibnamefont
  {Mandel}}\ and\ \bibinfo {author} {\bibfnamefont {E.}~\bibnamefont {Wolf}},\
  }\href@noop {} {\emph {\bibinfo {title} {Optical Coherence and Quantum
  Optics}}}\ (\bibinfo  {publisher} {Cambridge University Press},\ \bibinfo
  {year} {1995})\BibitemShut {NoStop}%
\bibitem [{\citenamefont {Sefi}\ and\ \citenamefont
  {Van~Loock}(2011)}]{sefi2011decompose}%
  \BibitemOpen
  \bibfield  {author} {\bibinfo {author} {\bibfnamefont {S.}~\bibnamefont
  {Sefi}}\ and\ \bibinfo {author} {\bibfnamefont {P.}~\bibnamefont
  {Van~Loock}},\ }\href {\doibase 10.1103/PhysRevLett.107.170501} {\bibfield
  {journal} {\bibinfo  {journal} {Physical Review Letters}\ }\textbf {\bibinfo
  {volume} {107}},\ \bibinfo {pages} {170501} (\bibinfo {year}
  {2011})}\BibitemShut {NoStop}%
\bibitem [{\citenamefont {Tsirelson}(2006)}]{tsirelson2006bell}%
  \BibitemOpen
  \bibfield  {author} {\bibinfo {author} {\bibfnamefont {B.}~\bibnamefont
  {Tsirelson}},\ }\href
  {http://web.archive.org/web/20090414083019/http://www.imaph.tu-bs.de/qi/problems/33.html}
  {\bibfield  {journal} {\bibinfo  {journal} {Problem statement from website of
  open problems at TU Braunschweig}\ } (\bibinfo {year} {2006})}\BibitemShut
  {NoStop}%
\bibitem [{\citenamefont {Ji}\ \emph {et~al.}(2021)\citenamefont {Ji},
  \citenamefont {Natarajan}, \citenamefont {Vidick}, \citenamefont {Wright},\
  and\ \citenamefont {Yuen}}]{ji2021mip}%
  \BibitemOpen
  \bibfield  {author} {\bibinfo {author} {\bibfnamefont {Z.}~\bibnamefont
  {Ji}}, \bibinfo {author} {\bibfnamefont {A.}~\bibnamefont {Natarajan}},
  \bibinfo {author} {\bibfnamefont {T.}~\bibnamefont {Vidick}}, \bibinfo
  {author} {\bibfnamefont {J.}~\bibnamefont {Wright}}, \ and\ \bibinfo {author}
  {\bibfnamefont {H.}~\bibnamefont {Yuen}},\ }\href {\doibase 10.1145/3485628}
  {\bibfield  {journal} {\bibinfo  {journal} {Communications of the ACM}\
  }\textbf {\bibinfo {volume} {64}},\ \bibinfo {pages} {131} (\bibinfo {year}
  {2021})}\BibitemShut {NoStop}%
\bibitem [{\citenamefont {Martinis}(2004)}]{martinis2004superconducting}%
  \BibitemOpen
  \bibfield  {author} {\bibinfo {author} {\bibfnamefont {J.~M.}\ \bibnamefont
  {Martinis}},\ }in\ \href {\doibase 10.1016/S0924-8099(03)80037-9} {\emph
  {\bibinfo {booktitle} {Les Houches}}},\ Vol.~\bibinfo {volume} {79}\
  (\bibinfo  {publisher} {Elsevier},\ \bibinfo {year} {2004})\ pp.\ \bibinfo
  {pages} {487--520}\BibitemShut {NoStop}%
\bibitem [{\citenamefont {Hall}(2013)}]{hall2013quantum}%
  \BibitemOpen
  \bibfield  {author} {\bibinfo {author} {\bibfnamefont {B.~C.}\ \bibnamefont
  {Hall}},\ }\href {\doibase 10.1007/978-1-4614-7116-5} {\emph {\bibinfo
  {title} {Quantum theory for mathematicians}}},\ Vol.\ \bibinfo {volume}
  {267}\ (\bibinfo  {publisher} {Springer},\ \bibinfo {year}
  {2013})\BibitemShut {NoStop}%
\bibitem [{\citenamefont {Wu}\ \emph {et~al.}(2006)\citenamefont {Wu},
  \citenamefont {Tarn},\ and\ \citenamefont {Li}}]{wu2006smooth}%
  \BibitemOpen
  \bibfield  {author} {\bibinfo {author} {\bibfnamefont {R.-B.}\ \bibnamefont
  {Wu}}, \bibinfo {author} {\bibfnamefont {T.-J.}\ \bibnamefont {Tarn}}, \ and\
  \bibinfo {author} {\bibfnamefont {C.-W.}\ \bibnamefont {Li}},\ }\href
  {\doibase 10.1103/PhysRevA.73.012719} {\bibfield  {journal} {\bibinfo
  {journal} {Physical Review A}\ }\textbf {\bibinfo {volume} {73}},\ \bibinfo
  {pages} {012719} (\bibinfo {year} {2006})}\BibitemShut {NoStop}%
\bibitem [{\citenamefont {Nielsen}\ and\ \citenamefont
  {Chuang}(2011)}]{NielsenChuang}%
  \BibitemOpen
  \bibfield  {author} {\bibinfo {author} {\bibfnamefont {M.~A.}\ \bibnamefont
  {Nielsen}}\ and\ \bibinfo {author} {\bibfnamefont {I.~L.}\ \bibnamefont
  {Chuang}},\ }\href {\doibase 10.1017/CBO9780511976667} {\emph {\bibinfo
  {title} {Quantum Computation and Quantum Information: 10th Anniversary
  Edition}}},\ \bibinfo {edition} {10th}\ ed.\ (\bibinfo  {publisher}
  {Cambridge University Press},\ \bibinfo {address} {New York, NY, USA},\
  \bibinfo {year} {2011})\BibitemShut {NoStop}%
\bibitem [{\citenamefont {Dawson}\ and\ \citenamefont
  {Nielsen}(2005)}]{dawson2005solovay}%
  \BibitemOpen
  \bibfield  {author} {\bibinfo {author} {\bibfnamefont {C.~M.}\ \bibnamefont
  {Dawson}}\ and\ \bibinfo {author} {\bibfnamefont {M.~A.}\ \bibnamefont
  {Nielsen}},\ }\href {\doibase 10.48550/arXiv.quant-ph/0505030} {\bibfield
  {journal} {\bibinfo  {journal} {arXiv preprint quant-ph/0505030}\ } (\bibinfo
  {year} {2005}),\ 10.48550/arXiv.quant-ph/0505030}\BibitemShut {NoStop}%
\bibitem [{\citenamefont {Becker}\ \emph {et~al.}(2021)\citenamefont {Becker},
  \citenamefont {Datta}, \citenamefont {Lami},\ and\ \citenamefont
  {Rouz{\'e}}}]{becker2021energy}%
  \BibitemOpen
  \bibfield  {author} {\bibinfo {author} {\bibfnamefont {S.}~\bibnamefont
  {Becker}}, \bibinfo {author} {\bibfnamefont {N.}~\bibnamefont {Datta}},
  \bibinfo {author} {\bibfnamefont {L.}~\bibnamefont {Lami}}, \ and\ \bibinfo
  {author} {\bibfnamefont {C.}~\bibnamefont {Rouz{\'e}}},\ }\href {\doibase
  10.1103/PhysRevLett.126.190504} {\bibfield  {journal} {\bibinfo  {journal}
  {Physical Review Letters}\ }\textbf {\bibinfo {volume} {126}},\ \bibinfo
  {pages} {190504} (\bibinfo {year} {2021})}\BibitemShut {NoStop}%
\bibitem [{\citenamefont {Chabaud}\ \emph {et~al.}(2024)\citenamefont
  {Chabaud}, \citenamefont {Joseph}, \citenamefont {Mehraban},\ and\
  \citenamefont {Motamedi}}]{chabaud2024bosonic}%
  \BibitemOpen
  \bibfield  {author} {\bibinfo {author} {\bibfnamefont {U.}~\bibnamefont
  {Chabaud}}, \bibinfo {author} {\bibfnamefont {M.}~\bibnamefont {Joseph}},
  \bibinfo {author} {\bibfnamefont {S.}~\bibnamefont {Mehraban}}, \ and\
  \bibinfo {author} {\bibfnamefont {A.}~\bibnamefont {Motamedi}},\ }\href
  {\doibase 10.48550/arXiv.2410.04274} {\bibfield  {journal} {\bibinfo
  {journal} {arXiv preprint arXiv:2410.04274}\ } (\bibinfo {year} {2024}),\
  10.48550/arXiv.2410.04274}\BibitemShut {NoStop}%
\bibitem [{\citenamefont {Braunstein}\ and\ \citenamefont
  {Van~Loock}(2005)}]{braunstein2005quantum}%
  \BibitemOpen
  \bibfield  {author} {\bibinfo {author} {\bibfnamefont {S.~L.}\ \bibnamefont
  {Braunstein}}\ and\ \bibinfo {author} {\bibfnamefont {P.}~\bibnamefont
  {Van~Loock}},\ }\href {\doibase 10.1103/RevModPhys.77.513} {\bibfield
  {journal} {\bibinfo  {journal} {Reviews of modern physics}\ }\textbf
  {\bibinfo {volume} {77}},\ \bibinfo {pages} {513} (\bibinfo {year}
  {2005})}\BibitemShut {NoStop}%
\bibitem [{\citenamefont {Weedbrook}\ \emph {et~al.}(2012)\citenamefont
  {Weedbrook}, \citenamefont {Pirandola}, \citenamefont
  {Garc{\'\i}a-Patr{\'o}n}, \citenamefont {Cerf}, \citenamefont {Ralph},
  \citenamefont {Shapiro},\ and\ \citenamefont
  {Lloyd}}]{weedbrook2012gaussian}%
  \BibitemOpen
  \bibfield  {author} {\bibinfo {author} {\bibfnamefont {C.}~\bibnamefont
  {Weedbrook}}, \bibinfo {author} {\bibfnamefont {S.}~\bibnamefont
  {Pirandola}}, \bibinfo {author} {\bibfnamefont {R.}~\bibnamefont
  {Garc{\'\i}a-Patr{\'o}n}}, \bibinfo {author} {\bibfnamefont {N.~J.}\
  \bibnamefont {Cerf}}, \bibinfo {author} {\bibfnamefont {T.~C.}\ \bibnamefont
  {Ralph}}, \bibinfo {author} {\bibfnamefont {J.~H.}\ \bibnamefont {Shapiro}},
  \ and\ \bibinfo {author} {\bibfnamefont {S.}~\bibnamefont {Lloyd}},\ }\href
  {\doibase 10.1103/RevModPhys.84.621} {\bibfield  {journal} {\bibinfo
  {journal} {Reviews of Modern Physics}\ }\textbf {\bibinfo {volume} {84}},\
  \bibinfo {pages} {621} (\bibinfo {year} {2012})}\BibitemShut {NoStop}%
\bibitem [{\citenamefont {Adesso}\ \emph {et~al.}(2014)\citenamefont {Adesso},
  \citenamefont {Ragy},\ and\ \citenamefont {Lee}}]{adesso2014continuous}%
  \BibitemOpen
  \bibfield  {author} {\bibinfo {author} {\bibfnamefont {G.}~\bibnamefont
  {Adesso}}, \bibinfo {author} {\bibfnamefont {S.}~\bibnamefont {Ragy}}, \ and\
  \bibinfo {author} {\bibfnamefont {A.~R.}\ \bibnamefont {Lee}},\ }\href
  {\doibase 10.1142/s1230161214400010} {\bibfield  {journal} {\bibinfo
  {journal} {Open Systems \& Information Dynamics}\ }\textbf {\bibinfo {volume}
  {21}},\ \bibinfo {pages} {1440001} (\bibinfo {year} {2014})}\BibitemShut
  {NoStop}%
\bibitem [{\citenamefont {Winter}(2017)}]{winter2017energy}%
  \BibitemOpen
  \bibfield  {author} {\bibinfo {author} {\bibfnamefont {A.}~\bibnamefont
  {Winter}},\ }\href {\doibase 10.48550/arXiv.1712.10267} {\bibfield  {journal}
  {\bibinfo  {journal} {arXiv preprint arXiv:1712.10267}\ } (\bibinfo {year}
  {2017}),\ 10.48550/arXiv.1712.10267}\BibitemShut {NoStop}%
\bibitem [{\citenamefont {Shirokov}(2017)}]{shirokov1706energy}%
  \BibitemOpen
  \bibfield  {author} {\bibinfo {author} {\bibfnamefont {M.~E.}\ \bibnamefont
  {Shirokov}},\ }\href {\doibase 10.48550/arXiv.1706.00361} {\bibfield
  {journal} {\bibinfo  {journal} {arXiv preprint arXiv:1706.00361}\ } (\bibinfo
  {year} {2017}),\ 10.48550/arXiv.1706.00361}\BibitemShut {NoStop}%
\bibitem [{\citenamefont {Daley}\ \emph {et~al.}(2022)\citenamefont {Daley},
  \citenamefont {Bloch}, \citenamefont {Kokail}, \citenamefont {Flannigan},
  \citenamefont {Pearson}, \citenamefont {Troyer},\ and\ \citenamefont
  {Zoller}}]{daley2022practical}%
  \BibitemOpen
  \bibfield  {author} {\bibinfo {author} {\bibfnamefont {A.~J.}\ \bibnamefont
  {Daley}}, \bibinfo {author} {\bibfnamefont {I.}~\bibnamefont {Bloch}},
  \bibinfo {author} {\bibfnamefont {C.}~\bibnamefont {Kokail}}, \bibinfo
  {author} {\bibfnamefont {S.}~\bibnamefont {Flannigan}}, \bibinfo {author}
  {\bibfnamefont {N.}~\bibnamefont {Pearson}}, \bibinfo {author} {\bibfnamefont
  {M.}~\bibnamefont {Troyer}}, \ and\ \bibinfo {author} {\bibfnamefont
  {P.}~\bibnamefont {Zoller}},\ }\href {\doibase 10.1038/s41586-022-04940-6}
  {\bibfield  {journal} {\bibinfo  {journal} {Nature}\ }\textbf {\bibinfo
  {volume} {607}},\ \bibinfo {pages} {667} (\bibinfo {year}
  {2022})}\BibitemShut {NoStop}%
\bibitem [{\citenamefont {Busnaina}\ \emph {et~al.}(2024)\citenamefont
  {Busnaina}, \citenamefont {Shi}, \citenamefont {McDonald}, \citenamefont
  {Dubyna}, \citenamefont {Nsanzineza}, \citenamefont {Hung}, \citenamefont
  {Chang}, \citenamefont {Clerk},\ and\ \citenamefont
  {Wilson}}]{busnaina2024quantum}%
  \BibitemOpen
  \bibfield  {author} {\bibinfo {author} {\bibfnamefont {J.~H.}\ \bibnamefont
  {Busnaina}}, \bibinfo {author} {\bibfnamefont {Z.}~\bibnamefont {Shi}},
  \bibinfo {author} {\bibfnamefont {A.}~\bibnamefont {McDonald}}, \bibinfo
  {author} {\bibfnamefont {D.}~\bibnamefont {Dubyna}}, \bibinfo {author}
  {\bibfnamefont {I.}~\bibnamefont {Nsanzineza}}, \bibinfo {author}
  {\bibfnamefont {J.~S.}\ \bibnamefont {Hung}}, \bibinfo {author}
  {\bibfnamefont {C.~S.}\ \bibnamefont {Chang}}, \bibinfo {author}
  {\bibfnamefont {A.~A.}\ \bibnamefont {Clerk}}, \ and\ \bibinfo {author}
  {\bibfnamefont {C.~M.}\ \bibnamefont {Wilson}},\ }\href {\doibase
  10.1038/s41467-024-47186-8} {\bibfield  {journal} {\bibinfo  {journal}
  {Nature Communications}\ }\textbf {\bibinfo {volume} {15}},\ \bibinfo {pages}
  {3065} (\bibinfo {year} {2024})}\BibitemShut {NoStop}%
\bibitem [{\citenamefont {Wang}\ \emph {et~al.}(2024)\citenamefont {Wang},
  \citenamefont {Liu}, \citenamefont {Chen}, \citenamefont {Chen},
  \citenamefont {Zhao}, \citenamefont {Ying}, \citenamefont {Shang},
  \citenamefont {Wang}, \citenamefont {Huo}, \citenamefont {Peng} \emph
  {et~al.}}]{wang2024realization}%
  \BibitemOpen
  \bibfield  {author} {\bibinfo {author} {\bibfnamefont {C.}~\bibnamefont
  {Wang}}, \bibinfo {author} {\bibfnamefont {F.-M.}\ \bibnamefont {Liu}},
  \bibinfo {author} {\bibfnamefont {M.-C.}\ \bibnamefont {Chen}}, \bibinfo
  {author} {\bibfnamefont {H.}~\bibnamefont {Chen}}, \bibinfo {author}
  {\bibfnamefont {X.-H.}\ \bibnamefont {Zhao}}, \bibinfo {author}
  {\bibfnamefont {C.}~\bibnamefont {Ying}}, \bibinfo {author} {\bibfnamefont
  {Z.-X.}\ \bibnamefont {Shang}}, \bibinfo {author} {\bibfnamefont {J.-W.}\
  \bibnamefont {Wang}}, \bibinfo {author} {\bibfnamefont {Y.-H.}\ \bibnamefont
  {Huo}}, \bibinfo {author} {\bibfnamefont {C.-Z.}\ \bibnamefont {Peng}},
  \emph {et~al.},\ }\href {\doibase 10.1126/science.ado391} {\bibfield
  {journal} {\bibinfo  {journal} {Science}\ }\textbf {\bibinfo {volume}
  {384}},\ \bibinfo {pages} {579} (\bibinfo {year} {2024})}\BibitemShut
  {NoStop}%
\bibitem [{\citenamefont {Chiu}\ \emph {et~al.}(2019)\citenamefont {Chiu},
  \citenamefont {Ji}, \citenamefont {Bohrdt}, \citenamefont {Xu}, \citenamefont
  {Knap}, \citenamefont {Demler}, \citenamefont {Grusdt}, \citenamefont
  {Greiner},\ and\ \citenamefont {Greif}}]{chiu2019string}%
  \BibitemOpen
  \bibfield  {author} {\bibinfo {author} {\bibfnamefont {C.~S.}\ \bibnamefont
  {Chiu}}, \bibinfo {author} {\bibfnamefont {G.}~\bibnamefont {Ji}}, \bibinfo
  {author} {\bibfnamefont {A.}~\bibnamefont {Bohrdt}}, \bibinfo {author}
  {\bibfnamefont {M.}~\bibnamefont {Xu}}, \bibinfo {author} {\bibfnamefont
  {M.}~\bibnamefont {Knap}}, \bibinfo {author} {\bibfnamefont {E.}~\bibnamefont
  {Demler}}, \bibinfo {author} {\bibfnamefont {F.}~\bibnamefont {Grusdt}},
  \bibinfo {author} {\bibfnamefont {M.}~\bibnamefont {Greiner}}, \ and\
  \bibinfo {author} {\bibfnamefont {D.}~\bibnamefont {Greif}},\ }\href
  {\doibase 10.1126/science.aav3587} {\bibfield  {journal} {\bibinfo  {journal}
  {Science}\ }\textbf {\bibinfo {volume} {365}},\ \bibinfo {pages} {251}
  (\bibinfo {year} {2019})}\BibitemShut {NoStop}%
\bibitem [{\citenamefont {Burgarth}\ \emph {et~al.}(2024)\citenamefont
  {Burgarth}, \citenamefont {Facchi}, \citenamefont {Hahn}, \citenamefont
  {Johnsson},\ and\ \citenamefont {Yuasa}}]{burgarth2024strong}%
  \BibitemOpen
  \bibfield  {author} {\bibinfo {author} {\bibfnamefont {D.}~\bibnamefont
  {Burgarth}}, \bibinfo {author} {\bibfnamefont {P.}~\bibnamefont {Facchi}},
  \bibinfo {author} {\bibfnamefont {A.}~\bibnamefont {Hahn}}, \bibinfo {author}
  {\bibfnamefont {M.}~\bibnamefont {Johnsson}}, \ and\ \bibinfo {author}
  {\bibfnamefont {K.}~\bibnamefont {Yuasa}},\ }\href {\doibase
  10.1103/PhysRevResearch.6.043155} {\bibfield  {journal} {\bibinfo  {journal}
  {Physical Review Research}\ }\textbf {\bibinfo {volume} {6}},\ \bibinfo
  {pages} {043155} (\bibinfo {year} {2024})}\BibitemShut {NoStop}%
\bibitem [{\citenamefont {Einstein}\ \emph {et~al.}(1935)\citenamefont
  {Einstein}, \citenamefont {Podolsky},\ and\ \citenamefont
  {Rosen}}]{einstein1935can}%
  \BibitemOpen
  \bibfield  {author} {\bibinfo {author} {\bibfnamefont {A.}~\bibnamefont
  {Einstein}}, \bibinfo {author} {\bibfnamefont {B.}~\bibnamefont {Podolsky}},
  \ and\ \bibinfo {author} {\bibfnamefont {N.}~\bibnamefont {Rosen}},\ }\href
  {\doibase 10.1103/PhysRev.47.777} {\bibfield  {journal} {\bibinfo  {journal}
  {Physical Review}\ }\textbf {\bibinfo {volume} {47}},\ \bibinfo {pages} {777}
  (\bibinfo {year} {1935})}\BibitemShut {NoStop}%
\bibitem [{\citenamefont {Schwartz}(1947)}]{schwartz1947theorie}%
  \BibitemOpen
  \bibfield  {author} {\bibinfo {author} {\bibfnamefont {L.}~\bibnamefont
  {Schwartz}},\ }in\ \href@noop {} {\emph {\bibinfo {booktitle} {Annales de
  l'Universit{\'e} de Grenoble}}},\ Vol.~\bibinfo {volume} {23}\ (\bibinfo
  {year} {1947})\ pp.\ \bibinfo {pages} {7--24}\BibitemShut {NoStop}%
\bibitem [{\citenamefont {Wilde}(2013)}]{wilde2013quantum}%
  \BibitemOpen
  \bibfield  {author} {\bibinfo {author} {\bibfnamefont {M.~M.}\ \bibnamefont
  {Wilde}},\ }\href {\doibase 10.1017/CBO9781139525343} {\emph {\bibinfo
  {title} {Quantum information theory}}}\ (\bibinfo  {publisher} {Cambridge
  University Press},\ \bibinfo {year} {2013})\BibitemShut {NoStop}%
\bibitem [{\citenamefont {Gloub}\ and\ \citenamefont
  {Van~Loan}(1996)}]{gloub1996matrix}%
  \BibitemOpen
  \bibfield  {author} {\bibinfo {author} {\bibfnamefont {G.~H.}\ \bibnamefont
  {Gloub}}\ and\ \bibinfo {author} {\bibfnamefont {C.~F.}\ \bibnamefont
  {Van~Loan}},\ }\href@noop {} {\bibfield  {journal} {\bibinfo  {journal}
  {Johns Hopkins Universtiy Press, 3rd edtion}\ } (\bibinfo {year}
  {1996})}\BibitemShut {NoStop}%
\bibitem [{\citenamefont {Dunford}\ and\ \citenamefont
  {Schwartz}(1988)}]{dunford1988linear}%
  \BibitemOpen
  \bibfield  {author} {\bibinfo {author} {\bibfnamefont {N.}~\bibnamefont
  {Dunford}}\ and\ \bibinfo {author} {\bibfnamefont {J.~T.}\ \bibnamefont
  {Schwartz}},\ }\href@noop {} {\emph {\bibinfo {title} {Linear operators, part
  1: general theory}}},\ Vol.~\bibinfo {volume} {10}\ (\bibinfo  {publisher}
  {John Wiley \& Sons},\ \bibinfo {year} {1988})\BibitemShut {NoStop}%
\end{thebibliography}%
\bibliographystyle{apsrev4-1}


\widetext
\newpage
\appendix

\begin{center}
    {\huge Supplementary Information}
\end{center}


\section{Proof of Theorem \ref{th:eff_dim}: Effective dimension of physical unitary channels}
\label{app:proof_eff_dim}

Hereafter, we denote by $\Pi_N=\sum_{n=0}^N\ket n\!\bra n$ the projector onto the subspace of states with a number of particle at most $N$, for $N\ge0$. We give a formal version of Theorem~\ref{th:eff_dim} from the main text:

\setcounter{theo}{0}
\begin{theo}[Effective dimension of physical unitary channels]\label{appth:eff_dim}
    Let $\mathcal U=\hat U\boldsymbol{\cdot}\hat U^\dag$ be a unitary channel, with $\hat U\in\mathcal U(\mathcal S)$ a physical unitary operator. Let $E\ge0$ be an energy bound, and let $\epsilon>0$ be an approximation parameter. Then, there exists $N=\mathcal O\!\left(\frac E{\epsilon^4}E_{\hat U}\!\left(\frac{64E}{\epsilon^2}\right)\right)\in\mathbb N$ and a unitary operator $\hat V_N$ over the range of $\Pi_N$ that approximates $\hat U$ in the following sense: for any unitary operator $\hat V'$ on $\mathrm{Range}(I-\Pi_N)$, denoting $\hat V=\hat V_N\oplus\hat V'$ and $\mathcal V=\hat V\boldsymbol{\cdot}\hat V^\dag$, 
    \begin{equation}
        \|\mathcal U-\mathcal V\|_\diamond^E\le\epsilon,
    \end{equation}
    where the energy-constrained diamond norm is defined with respect to the number Hamiltonian $\hat n=\hat a^\dag\hat a$.  Moreover, the cut-off operator $\hat V_N$ can be computed from $\hat U$ in time
    \begin{equation}
        \mathcal O\!\left(\frac{NE^2}{\epsilon^4}\right)=\mathcal O\!\left(\frac{E^3}{\epsilon^8}E_{\hat U}\!\left(\frac{64E}{\epsilon^2}\right)\right).
    \end{equation}
\end{theo}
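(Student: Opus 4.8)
The plan is to bound $\|\mathcal U-\mathcal V\|_\diamond^E$ by isolating three sources of error: truncating the \emph{input} state to the Fock subspace $\mathcal H_M$, exploiting the physicality of $\hat U$ to truncate the \emph{output} to $\mathcal H_N$, and rounding the resulting non-unitary block $\Pi_N\hat U\Pi_M$ to a genuine unitary $\hat V_N$ on $\mathcal H_N$ via a QR factorization. Write $E_M:=E_{\hat U}(M)$ for brevity; only the weaker hypothesis $\bra n\hat U^\dag\hat n\hat U\ket n<+\infty$ for all $n$ is needed, since by Cauchy--Schwarz this already bounds every entry of the finite matrix $\Pi_n\hat U^\dag\hat n\hat U\Pi_n$, hence makes $E_{\hat U}(n)<\infty$, and $E_{\hat U}$ is clearly nondecreasing. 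For the input step, let $\rho$ be a state on $\mathcal H\otimes\mathcal H'$ with $\Tr[(\hat n\otimes\hat I)\rho]\le E$: Markov's inequality gives $\Tr[(\Pi_M\otimes\hat I)\rho]\ge1-E/M$, so the gentle measurement lemma yields $\|\rho-\tilde\rho\|_1\le2\sqrt{E/M}$ with $\tilde\rho:=(\Pi_M\otimes\hat I)\rho(\Pi_M\otimes\hat I)$. Since $\mathcal U\otimes\mathrm{id}$ and $\mathcal V\otimes\mathrm{id}$ preserve the trace norm, it then suffices to bound $\|(\mathcal U\otimes\mathrm{id})\tilde\rho-(\mathcal V\otimes\mathrm{id})\tilde\rho\|_1$ and add $4\sqrt{E/M}$; choosing $M:=\lceil 64E/\epsilon^2\rceil$ makes this overhead at most $\epsilon/2$, which is the origin of the argument $\tfrac{64E}{\epsilon^2}$ in the statement.

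Next, since $\tilde\rho$ is supported on $\mathcal H_M\otimes\mathcal H'$ and $\Pi_M\hat U^\dag\hat n\hat U\Pi_M\le E_M\,\Pi_M$, the image $\sigma:=(\hat U\otimes\hat I)\tilde\rho(\hat U^\dag\otimes\hat I)$ satisfies $\Tr[(\hat n\otimes\hat I)\sigma]\le E_M$; Markov's inequality and the gentle measurement lemma (applied to $\sigma/\Tr\sigma$ and rescaled) give $\|\sigma-(\Pi_N\otimes\hat I)\sigma(\Pi_N\otimes\hat I)\|_1\le2\sqrt{E_M/N}$, and $(\Pi_N\otimes\hat I)\sigma(\Pi_N\otimes\hat I)=(A\otimes\hat I)\tilde\rho(A^\dag\otimes\hat I)$ with $A:=\Pi_N\hat U\Pi_M$, viewed as an $(N+1)\times(M+1)$ matrix. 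Its Gram matrix is $A^\dag A=\hat I-\Delta$ where $\Delta:=\Pi_M\hat U^\dag(\hat I-\Pi_N)\hat U\Pi_M\ge0$ and $\Tr\Delta=\sum_{n=0}^M\|(\hat I-\Pi_N)\hat U\ket n\|^2\le(M+1)E_M/N$, so once $N$ is large enough that $\|\Delta\|_{\mathrm{op}}<1$ the columns of $A$ are linearly independent and we may write $A=Q\binom{R}{0}$ with $Q$ unitary on $\mathcal H_N$ and $R$ upper triangular with positive diagonal. Taking $\hat V_N:=Q$, the first $M+1$ columns of $\hat V_N$ are those of $Q$, so $\|\hat V_N\Pi_M-A\|_{\mathrm{op}}=\|R-\hat I\|_{\mathrm{op}}$. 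The key estimate, from $R^\dag R=\hat I-\Delta$, is $\|R-\hat I\|_F^2=2\sum_i(1-R_{ii})-\Tr\Delta$ together with $\sum_i(1-R_{ii})\le-\sum_i\log R_{ii}=-\tfrac12\log\det(\hat I-\Delta)\le\Tr\Delta$ (valid once $\|\Delta\|_{\mathrm{op}}\le\tfrac12$), giving $\|\hat V_N\Pi_M-A\|_{\mathrm{op}}\le\|R-\hat I\|_F\le\sqrt{\Tr\Delta}\le\sqrt{(M+1)E_M/N}$.

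For the assembly, let $\hat V'$ be any unitary on $\mathrm{Range}(\hat I-\Pi_N)$ and $\hat V:=\hat V_N\oplus\hat V'$; since $\tilde\rho$ lives on $\mathcal H_M\subseteq\mathcal H_N$ we have $(\mathcal V\otimes\mathrm{id})\tilde\rho=(\hat V_N\Pi_M\otimes\hat I)\tilde\rho(\Pi_M\hat V_N^\dag\otimes\hat I)$, so the elementary bound $\|X\tilde\rho X^\dag-Y\tilde\rho Y^\dag\|_1\le(\|X\|_{\mathrm{op}}+\|Y\|_{\mathrm{op}})\|\tilde\rho\|_1\|X-Y\|_{\mathrm{op}}$, with $\|A\|_{\mathrm{op}},\|\hat V_N\Pi_M\|_{\mathrm{op}}\le1$ and $\|\tilde\rho\|_1\le1$, yields $\|(A\otimes\hat I)\tilde\rho(A^\dag\otimes\hat I)-(\mathcal V\otimes\mathrm{id})\tilde\rho\|_1\le2\|R-\hat I\|_{\mathrm{op}}$. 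Combining the three contributions gives $\|\mathcal U-\mathcal V\|_\diamond^E\le 4\sqrt{E/M}+2\sqrt{E_M/N}+2\sqrt{(M+1)E_M/N}$, and with $M=\lceil 64E/\epsilon^2\rceil$ and $N$ equal to a suitable constant times $(M+1)E_M/\epsilon^2$ --- which is $\mathcal O\!\big(\tfrac{E}{\epsilon^4}E_{\hat U}(\tfrac{64E}{\epsilon^2})\big)$ --- each block is at most $\epsilon/2$, so $\|\mathcal U-\mathcal V\|_\diamond^E\le\epsilon$. Since $A$ is $(N+1)\times(M+1)$ with $M=\mathcal O(E/\epsilon^2)$, computing its QR factorization costs $\mathcal O(NM^2)=\mathcal O(NE^2/\epsilon^4)$ arithmetic operations, which is the claimed running time.

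The main obstacle I expect is the QR rounding estimate $\|R-\hat I\|_F\le\sqrt{\Tr\Delta}$: one must convert the qualitative fact that the columns of $\Pi_N\hat U\Pi_M$ are nearly orthonormal (their Gram matrix is $\hat I-\Delta$ with $\Delta$ a small positive operator) into quantitative control of the triangular factor, which is achieved via the Frobenius identity $\|R-\hat I\|_F^2=2\sum_i(1-R_{ii})-\Tr\Delta$ and the determinant estimate $\sum_i(1-R_{ii})\le-\tfrac12\log\det(\hat I-\Delta)$. The remaining work --- handling subnormalized states through the gentle measurement lemma and composing the three error terms without degrading the stated dependence on $E$, $\epsilon$, and $E_{\hat U}$ --- is routine but must be carried out with care.
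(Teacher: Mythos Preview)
Your proof is correct and follows the same three-step skeleton as the paper (input truncation to $\mathcal H_M$, output truncation to $\mathcal H_N$, QR rounding), with the same choices $M=\lceil 64E/\epsilon^2\rceil$ and $N=\mathcal O(ME_M/\epsilon^2)$ and the same $\mathcal O(NM^2)$ QR complexity. The difference lies in how you control the rounding error. The paper carries out an explicit Gram--Schmidt analysis: it proves an entrywise lemma on $\braket{u_j|P_i|u_j}$, then bounds each $\braket{e_i|u_j}$ individually, and finally invokes a custom trace-distance lemma for unnormalized pure states, obtaining a third error term of order $\sqrt{E_M/N}\cdot\sqrt{12(12+9M)}$. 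Your route is considerably cleaner: from $R^\dag R=\hat I-\Delta$ you extract the Frobenius identity $\|R-\hat I\|_F^2=2\sum_i(1-R_{ii})-\Tr\Delta$ and close it with the determinant estimate $\sum_i(1-R_{ii})\le-\tfrac12\log\det(\hat I-\Delta)\le\Tr\Delta$, giving $\|R-\hat I\|_F\le\sqrt{\Tr\Delta}\le\sqrt{(M+1)E_M/N}$ directly. This is shorter, avoids the auxiliary lemmas, and yields a slightly better constant. A second minor difference is that you work directly with bipartite mixed states throughout, whereas the paper first reduces the energy-constrained diamond norm to a supremum over single-mode pure states via \cite[Theorem~1]{becker2021energy}; both are valid. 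One small slip in your wording: ``each block is at most $\epsilon/2$'' would give $3\epsilon/2$, but your actual estimates make the first block $\le\epsilon/2$ and the last two together $\le\epsilon/2$, so the conclusion stands.
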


\noindent\textit{Proof sketch.}
We focus on the single-mode case for simplicity. With the notations of the theorem, by \cite[Theorem 1]{becker2021energy}, it is equivalent to show that there exists $\hat V_N$ such that for all $\hat V'$,
\begin{equation}\label{eq:simp_density}
    \sup_{\substack{\ket\psi\in\mathcal H\\\bra\psi\hat n\ket\psi\le E}}\|\hat U\ket\psi\!\bra\psi\hat U^\dag-(\hat V_N\oplus\hat V')\ket\psi\!\bra\psi (\hat V_N^\dag\oplus\hat V'^\dag)\|_1\le\epsilon.
\end{equation}

As explained in the Methods section of the main text, the idea of the proof is to perform two successive cut-offs of the state space, before and after the unitary evolution, respectively parametrised by integers \(M\) and \(N\), and to carefully control the approximation at each step, followed by a rounding of the cut-off operator to the closest unitary operator. More precisely, the steps of the proof are as follows: let $N\ge M\ge0$,
\begin{enumerate}[label=(\roman*)]
    \item show that the supremum over energy-constrained states is well approximated by a supremum over $\mathcal H_M$ when $M$ is large, using the gentle measurement lemma;
    \item show that the action of $\hat U$ on $\mathcal H_M$ is well-approximated by that of $\Pi_N\hat U\Pi_N$ when $N$ is large, also using the gentle measurement lemma;
    \item show that the action of $\Pi_N\hat U\Pi_N$ itself on $\mathcal H_M$ is well-approximated by that of a unitary operator $\hat V_N$ defined on $\mathcal H_N\supset\mathcal H_M$ when $N$ is large, using the Gram--Schmidt orthonormalisation procedure;
    \item combine the successive approximations to conclude.
\end{enumerate}

\begin{proof}[Proof of Theorem \ref{appth:eff_dim}]
(i) Let $\hat V$ be a unitary operator and let $\ket\psi=\sum_{n\ge0}\psi_n\ket n\in\mathcal H$ such that $\bra\psi\hat n\ket\psi\le E$. Denoting $\mathcal U=\hat U\bm\cdot\hat U^\dag$ and $\mathcal V=\hat V\bm\cdot\hat V^\dag$, we have
\begin{equation}\label{eq:bound_i}
    \begin{aligned}
        \|(\mathcal U-\mathcal V)(\psi)\|_1&\le\|\hat U\Pi_M\ket\psi\!\bra\psi\Pi_M\hat U^\dag-\hat V\Pi_M\ket\psi\!\bra\psi\Pi_M\hat V^\dag\|_1\\
        &\quad+\|\hat U(\ket\psi\!\bra\psi-\Pi_M\ket\psi\!\bra\psi\Pi_M)\hat U^\dag-\hat V(\ket\psi\!\bra\psi-\Pi_M\ket\psi\!\bra\psi\Pi_M)\hat V^\dag\|_1\\
        &\le\|\hat U\Pi_M\ket\psi\!\bra\psi\Pi_M\hat U^\dag-\hat V\Pi_M\ket\psi\!\bra\psi\Pi_M\hat V^\dag\|_1\\
        &\quad+\|\hat U(\ket\psi\!\bra\psi-\Pi_M\ket\psi\!\bra\psi\Pi_M)\hat U^\dag\|_1\\
        &\quad+\|\hat V(\ket\psi\!\bra\psi-\Pi_M\ket\psi\!\bra\psi\Pi_M)\hat V^\dag\|_1\\
        &=\|(\mathcal U-\mathcal V)(\Pi_M\psi\Pi_M)\|_1+2\|\ket\psi\!\bra\psi-\Pi_M\ket\psi\!\bra\psi\Pi_M\|_1\\
        &\le\|(\mathcal U-\mathcal V)(\Pi_M\psi\Pi_M)\|_1+4\sqrt{\frac EM},
    \end{aligned}
\end{equation}
where we used the triangle inequality for the first two steps, the fact that the trace distance is unitarily invariant in the third step, and the gentle measurement lemma in the last step. 

\medskip

(ii) Given $\hat U\in\mathcal U(\mathcal S)$ and $M\ge0$, we now look for $\hat V$ such that $\|(\mathcal U-\mathcal V)(\phi)\|_1$ is uniformly bounded over $\mathrm{Range(\Pi_M)}$. With $N\ge M$, we pick a unitary operator of the form $\hat V=\hat V_N\oplus\hat V'$, where $\hat V_N$ is a unitary operator over $\mathcal H_N$ to be defined later and $\hat V'$ is arbitrary. For all $\ket\phi\in\mathcal H_M$, we write $\ket{\phi_U}:=\hat U\ket\phi$ and $\ket{\phi_V}:=\hat V\ket\phi=\hat V_N\ket\phi$. We have $\Pi_N\ket{\phi_V}\!\bra{\phi_V}\Pi_N=\ket{\phi_V}\!\bra{\phi_V}$, since $\hat V=\hat V_N\oplus\hat V'$ and $N\ge M$, and $\bra{\phi_U}\hat n\ket{\phi_U}\le E_{\hat U}(M)$, where
\begin{equation}\label{eq:defsupenergyN}
    E_{\hat U}(M):=\sup_{\ket\chi\in\mathcal H_M}\bra\chi\hat U^\dag\hat n\hat U\ket\chi,
\end{equation}
which is a bounded quantity for each $M\ge0$ since $\hat U\in\mathcal U(\mathcal S)$ and $\mathcal H_M$ is compact. Hence, following the same steps as in Eq.~(\ref{eq:bound_i}), for all $N\ge M$,
\begin{equation}\label{eq:bound_ii}
    \begin{aligned}  
        \|(\mathcal U-\mathcal V)(\phi)\|_1&\le\|\Pi_N\ket{\phi_U}\!\bra{\phi_U}\Pi_N-\Pi_N\ket{\phi_V}\!\bra{\phi_V}\Pi_N\|_1\\
        &\quad+\|\ket{\phi_U}\!\bra{\phi_U}-\Pi_N\ket{\phi_U}\!\bra{\phi_U}\Pi_N\|_1\\
        &\quad+\|\ket{\phi_V}\!\bra{\phi_V}-\Pi_N\ket{\phi_V}\!\bra{\phi_V}\Pi_N\|_1\\
        &\le\|\Pi_N\ket{\phi_U}\!\bra{\phi_U}\Pi_N-\Pi_N\ket{\phi_V}\!\bra{\phi_V}\Pi_N\|_1+2\sqrt{\frac{E_{\hat U}(M)}N}\\
        &=\|\Pi_N\hat U\ket\phi\!\bra\phi\hat U^\dag\Pi_N-\hat V_N\ket\phi\!\bra\phi\hat V_N^\dag\|_1+2\sqrt{\frac{E_{\hat U}(M)}N}.
    \end{aligned}
\end{equation}
where we used the triangle inequality for the first step, and the fact that $\Pi_N\ket{\phi_V}\!\bra{\phi_V}\Pi_N=\ket{\phi_V}\!\bra{\phi_V}$ and the gentle measurement lemma for the second step.

\medskip

(iii) Given $\hat U\in\mathcal U(\mathcal S)$ and $N\ge M\ge0$, we now look for a unitary operator $\hat V_N$ over $\mathcal H_N$ such that $\|\Pi_N\hat U\ket\phi\!\bra\phi\hat U^\dag\Pi_N-\hat V_N\ket\phi\!\bra\phi\hat V_N^\dag\|_1$ is uniformly bounded over $\mathcal H_M$. 
We first derive useful properties of the column vectors $(\ket{u_0},\dots,\ket{u_M})$ of $\Pi_N\hat U\Pi_M$ (i.e., the first $M$ columns of $\Pi_N\hat U\Pi_N$), which are obtained in Eqs.~(\ref{eq:almost_orthogonal},\ref{eq:almost_normalised}) and Lemma~\ref{lem:fullrank}. 

Let us denote $u_{ij}:=\braket{i|\hat U|j}$ the Fock basis coefficients of the unitary operator $\hat U$, for all $i,j\in\mathbb N$. For all $k=0,\dots,M$, we have
\begin{equation}\label{eq:bound_ukuk1}
    \begin{aligned}
        \braket{u_k|u_k}&=\sum_{n=0}^N|u_{nk}|^2\\
        &=1-\sum_{n>N}|u_{nk}|^2\\
        &\le1,
    \end{aligned}
\end{equation}
since $\hat U$ is unitary, and
\begin{equation}\label{eq:bound_ukuk2}
    \begin{aligned}
        \braket{u_k|u_k}&=1-\sum_{n>N}|u_{nk}|^2\\
        &>1-\frac1N\sum_{n>N}n|u_{nk}|^2\\
        &\ge1-\frac{E_{\hat U}(M)}N,
    \end{aligned}
\end{equation}
where we used the definition of $E_{\hat U}(M)$ from Eq.~(\ref{eq:defsupenergyN}) in the last line. Moreover, for all $0\le i\neq j\le M$,
\begin{equation}\label{eq:bound_uiuj}
    \begin{aligned}
        |\braket{u_i|u_j}|&=\left|\sum_{n=0}^Nu_{ni}^*u_{nj}\right|\\
        &=\left|\sum_{n>N}u_{ni}^*u_{nj}\right|\\
        &\le\sqrt{\sum_{n>N}|u_{ni}|^2\sum_{n>N}|u_{nj}|^2}\\
        &\le\frac1N\sqrt{\sum_{n>N}n|u_{ni}|^2\sum_{n>N}n|u_{nj}|^2}\\
        &\le\frac{E_{\hat U}(M)}N,
    \end{aligned}
\end{equation}
where we used the fact that $\hat U$ is unitary in the second line, Cauchy--Schwarz inequality in the third line, and the definition of $E_{\hat U}(M)$ from Eq.~(\ref{eq:defsupenergyN}) in the last line.

Eqs.~(\ref{eq:bound_ukuk1}), (\ref{eq:bound_ukuk2}) and (\ref{eq:bound_uiuj}) imply that taking $N$ large enough ensures that the $M+1$ first column vectors $(\ket{u_0},\dots,\ket{u_M})$ of the Fock basis matrix of $\Pi_N\hat U\Pi_N$ are close to being an orthonormal family. In particular, we define 
\begin{equation}\label{eq:def_delta}
    \delta:=\frac{E_{\hat U}(M)}N,
\end{equation}
and assume in what follows that $N$ is chosen large enough so that $\delta<\frac1{2M}$. Then,
\begin{align}
    \label{eq:almost_orthogonal}|\braket{u_i|u_j}|\le\delta\quad\quad&0\le i\neq j\le M\\
    \label{eq:almost_normalised}1-\delta\le\braket{u_k|u_k}\le1\quad\quad&k=0,\dots,M.
\end{align}

We also rely on the following technical result, which says that each $\ket{u_j}$ is almost orthogonal to the subspace spanned by $(\ket{u_0},\dots,\ket{u_{i-1}})$, for any $i\le j$:

\begin{lem}\label{lem:fullrank}
     With $\delta>0$ defined in Eq.~(\ref{eq:def_delta}) and $\delta<\frac1{2M}$, let $P_k$ denotes the orthogonal projector onto $\mathrm{span}(\ket{u_0},\dots,\ket{u_{k-1}})$ for all $k=1,\dots,M$. For all $0\le i\le j\le M$, $\braket{u_j|P_i|u_j}\le\delta$. As a result, $\Pi_N\hat U\Pi_M$ is column full rank.
\end{lem}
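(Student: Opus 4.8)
The plan is to regard the Gram matrix of the first columns of $\Pi_N\hat U\Pi_N$ as a small Hermitian perturbation of a near-identity diagonal matrix, and to read off both assertions from a lower bound on its smallest eigenvalue. Fix $1\le i\le M$ (the case $i=0$ is immediate since $P_0=0$), and let $G$ be the $i\times i$ Hermitian matrix with entries $G_{ab}=\braket{u_a|u_b}$, $0\le a,b\le i-1$. By Eq.~(\ref{eq:almost_normalised}) its diagonal entries lie in $[1-\delta,1]$, by Eq.~(\ref{eq:almost_orthogonal}) its off-diagonal entries have modulus at most $\delta$, and each row has at most $i-1$ of them; hence Gershgorin's circle theorem gives $\lambda_{\min}(G)\ge(1-\delta)-(i-1)\delta=1-i\delta>\tfrac12$, using $i\le M$ and the hypothesis $\delta<\tfrac1{2M}$. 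In particular $G$ is positive definite, so $\ket{u_0},\dots,\ket{u_{i-1}}$ are linearly independent and $P_i=\sum_{a,b=0}^{i-1}\ket{u_a}(G^{-1})_{ab}\bra{u_b}$.

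Next I would compute $\braket{u_j|P_i|u_j}$ directly from this formula. Writing $w\in\mathbb{C}^i$ for the vector with components $w_a=\braket{u_a|u_j}$, $0\le a\le i-1$, one has $\braket{u_j|P_i|u_j}=w^\dagger G^{-1}w\le\|w\|^2/\lambda_{\min}(G)$. Since $j\ge i$, every $a\le i-1$ satisfies $a\neq j$, so $|w_a|\le\delta$ by Eq.~(\ref{eq:almost_orthogonal}) and $\|w\|^2\le i\delta^2\le M\delta^2$. Combined with $\lambda_{\min}(G)>\tfrac12$ this yields $\braket{u_j|P_i|u_j}<2M\delta^2=(2M\delta)\,\delta<\delta$, the last inequality again by $\delta<\tfrac1{2M}$, which is the first assertion.

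For the full-rank statement I would run the same Gershgorin estimate on the $(M+1)\times(M+1)$ Gram matrix $G^{(M)}$ of the whole family $\ket{u_0},\dots,\ket{u_M}$: each row of its off-diagonal part has at most $M$ entries of modulus at most $\delta$, so $\lambda_{\min}(G^{(M)})\ge1-(M+1)\delta$, which is strictly positive since $(M+1)\delta<\tfrac{M+1}{2M}\le1$ for $M\ge1$. A positive-definite Gram matrix forces $\ket{u_0},\dots,\ket{u_M}$ --- the columns of $\Pi_N\hat U\Pi_M$ --- to be linearly independent, i.e., $\Pi_N\hat U\Pi_M$ has full column rank. (Alternatively one may induct on $j$: the bound above with $i=j$ gives $\|P_j\ket{u_j}\|^2=\braket{u_j|P_j|u_j}<\delta<1-\delta\le\braket{u_j|u_j}$, so $\ket{u_j}\notin\mathrm{span}(\ket{u_0},\dots,\ket{u_{j-1}})$.)

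The only genuinely non-routine step is the uniform eigenvalue bound $\lambda_{\min}(G)\ge1-i\delta$; once the hypothesis $\delta<\tfrac1{2M}$ is invoked to keep $G$ invertible and well-conditioned, everything else is bookkeeping with Cauchy--Schwarz and operator-norm estimates. A little care is needed so that the constants close up to the claimed bound $\le\delta$ rather than merely $O(M\delta^2)$ --- this is precisely where the factor $\tfrac1{2M}$ in the hypothesis (rather than, say, $\tfrac1M$) is spent.
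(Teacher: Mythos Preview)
Your proof is correct. It takes a route that is closely related to, but cleaner than, the paper's argument. The paper reduces to $i=j$, introduces the unit vector $\ket{\phi_j}=P_j\ket{u_j}/\|P_j\ket{u_j}\|$, expands it as $\sum_k\varphi_k\ket{u_k}$, and uses the normalisation $\braket{\phi_j|\phi_j}=1$ together with Eqs.~(\ref{eq:almost_orthogonal})--(\ref{eq:almost_normalised}) to obtain $\sum_k|\varphi_k|^2\le\frac1{1-j\delta}$, before bounding $|\braket{u_j|\phi_j}|^2$ by Cauchy--Schwarz; this implicitly derives the same eigenvalue lower bound on the Gram matrix that you obtain in one stroke via Gershgorin. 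Your packaging via $P_i=\sum_{a,b}\ket{u_a}(G^{-1})_{ab}\bra{u_b}$ and $w^\dagger G^{-1}w\le\|w\|^2/\lambda_{\min}(G)$ handles all $i\le j$ uniformly without the reduction to $i=j$, and gives linear independence (hence the full-rank conclusion) as an immediate byproduct of $\lambda_{\min}(G)>0$, whereas the paper extracts it afterwards from $\braket{u_j|I-P_j|u_j}\ge1-2\delta>0$. The two arguments yield essentially the same intermediate bound $\frac{j\delta^2}{1-j\delta}$ (you relax it slightly to $2M\delta^2$, which still closes to $<\delta$ under $\delta<\tfrac1{2M}$); your version is more systematic, the paper's is more elementary in that it names no matrix-analysis theorem.
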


\begin{proof}
    For all $1\le i\le j\le M$ we have $\braket{u_j|P_i|u_j}\le\braket{u_j|P_j|u_j}$, so it is sufficient to prove the result when $i=j$. For $j=1,\dots,M$, defining the normalised projection
    \begin{equation}
        \ket{\phi_j}:=\frac{P_j\ket{u_j}}{\|P_j\ket{u_j}\|},
    \end{equation}
    we have $\braket{u_j|P_j|u_j}=|\braket{u_j|\phi_j}|^2$. Moreover, $\ket{\phi_j}\in\mathrm{span}(\ket{u_0},\dots,\ket{u_{j-1}})$ so we may write $\ket{\phi_j}:=\sum_{i=1}^{j-1}\varphi_i\ket{u_i}$, and we have
    \begin{equation}
        \begin{aligned}
            1&=\braket{\phi_j|\phi_j}\\
            &=\sum_{k,l=0}^{j-1}\varphi_k^*\varphi_l\braket{u_k|u_l}\\
            &=\sum_{i=0}^{j-1}|\varphi_i|^2\braket{u_i|u_i}+2\Re\left(\sum_{0\le k<l\le j-1}\varphi_k^*\varphi_l\braket{u_k|u_l}\right)\\
            &\ge(1-\delta)\sum_{i=0}^{j-1}|\varphi_i|^2-2\sum_{0\le k<l\le j-1}|\varphi_k||\varphi_l||\braket{u_k|u_l}|\\
            &\ge\sum_{i=0}^{j-1}|\varphi_i|^2-\delta\sum_{i=0}^{j-1}|\varphi_i|^2-2\delta\sum_{0\le k<l\le j-1}|\varphi_k||\varphi_l|\\
            &=\sum_{i=0}^{j-1}|\varphi_i|^2-\delta\left(\sum_{i=0}^{j-1}|\varphi_i|\right)^2\\
            &\ge(1-j\delta)\sum_{i=0}^{j-1}|\varphi_i|^2,
        \end{aligned}
    \end{equation}
    where we used Eq.~(\ref{eq:almost_normalised}) in the third line, Eq.~(\ref{eq:almost_orthogonal}) in the fourth line, and Cauchy--Schwarz inequality in the last line. This shows that
    \begin{equation}\label{eq:bound_varphicoefs}
        \sum_{i=0}^{j-1}|\varphi_i|^2\le\frac1{1-j\delta},
    \end{equation}
    and we thus obtain
    \begin{equation}
        \begin{aligned}
            \braket{u_j|P_j|u_j}&=\left|\sum_{i=0}^{j-1}\varphi_i\braket{u_j|u_i}\right|^2\\
            &\le\sum_{i=0}^{j-1}|\varphi_i|^2\sum_{i=0}^{j-1}|\braket{u_j|u_i}|^2\\
            &\le\frac{j\delta^2}{1-j\delta}\\
            &\le\delta,
        \end{aligned}
    \end{equation}
    where we used Eqs.~(\ref{eq:almost_orthogonal}) and (\ref{eq:bound_varphicoefs}) in the third line and $j\le M$ and $\delta<\frac1{2M}$ in the last line. This completes the first part of the lemma.

    For all $j=1,\dots,M$, we obtain $\braket{u_j|I-P_j|u_j}=\braket{u_j|u_j}-\braket{u_j|P_j|u_j}\ge1-2\delta$, so $\ket{u_j}\notin\mathrm{span}(\ket{u_0},\dots,\ket{u_{j-1}})$ and $\Pi_N\hat U\Pi_M$ is column full rank.
\end{proof}

Now that these properties are established, recall that we are looking for a unitary operator $\hat V_N$ over $\mathcal H_N$ such that $\|\Pi_N\hat U\ket\phi\!\bra\phi\hat U^\dag\Pi_N-\hat V_N\ket\phi\!\bra\phi\hat V_N^\dag\|_1$ is uniformly bounded over $\mathcal H_M$. To this end, we relate $\Pi_N\hat U\Pi_N$ to a unitary operator through the Gram--Schmidt process. Note that we can assume $\Pi_N\hat U\Pi_N$ to be full rank without loss of generality, because by Lemma~\ref{lem:fullrank}, $\Pi_N\hat U\Pi_M$ is column full rank and can be completed to a full rank matrix on $\mathcal H_N$ which matches with $\Pi_N\hat U\Pi_N$ on $\mathcal H_N$.

We thus assume $\Pi_N\hat U\Pi_N$ to be full rank and write its $QR$ decomposition $\Pi_N\hat U\Pi_N=\hat Q_N\hat R_N$ where $\hat Q_N$ is unitary over $\mathcal H_N$ and where $\hat R_N$ has an upper-triangular matrix in Fock basis given by:
\begin{equation}
    \begin{pmatrix}
        \braket{e_0|u_0}&\braket{e_0|u_1}&\braket{e_0|u_2}&\cdots&\braket{e_0|u_M}\\
        0&\braket{e_1|u_1}&\braket{e_1|u_2}&\cdots&\braket{e_1|u_N}\\
        0&0&\braket{e_2|u_2}&\cdots&\braket{e_2|u_N}\\
        \vdots&\vdots&\vdots&\ddots&\vdots\\
        0&0&0&\cdots&\braket{e_N|u_N}
    \end{pmatrix},
\end{equation}
where $(\ket{u_0},\dots,\ket{u_N})$ are the column vectors of the Fock basis matrix of $\Pi_N\hat U\Pi_N$, and where $(\ket{e_0},\dots,\ket{e_N})$ are their orthonormalised version through the Gram--Schmidt process, forming the column vectors of the Fock basis matrix of $\hat Q_N$. Recall from Lemma~\ref{lem:fullrank} that for all $k=1,\dots,N$, $P_k$ denotes the orthogonal projector onto $\mathrm{span}(\ket{u_0},\dots,\ket{u_{k-1}})$. Let
\begin{equation}\label{eq:vk}
    \begin{aligned}
        \ket{v_0}&:=\ket{u_0}\\
        \ket{v_k}&:=(I-P_k)\ket{u_k}.
    \end{aligned}
\end{equation}
The orthonormal vectors $(\ket{e_0},\dots,\ket{e_N})$ are then given by 
\begin{equation}\label{eq:ek}
        \ket{e_k}:=\frac{\ket{v_k}}{\|v_k\|},
\end{equation}
for all $k=0,\dots,N$.

We now set $\hat V_N=\hat Q_N$. For all $\ket\phi\in\mathcal H_M$, we have
\begin{equation}\label{eq:second_attempt_bound}
    \begin{aligned}  
        \|\Pi_N\hat U\ket\phi\!\bra\phi\hat U^\dag\Pi_N-\hat V_N\ket\phi\!\bra\phi\hat V_N^\dag\|_1&=\|\Pi_N\hat U\Pi_N\ket\phi\!\bra\phi\Pi_N\hat U^\dag\Pi_N-\hat V_N\ket\phi\!\bra\phi\hat V_N^\dag\|_1\\
        &=\|\hat Q_N\hat R_N\ket\phi\!\bra\phi\hat R_N^\dag\hat Q_N^\dag-\hat Q_N\ket\phi\!\bra\phi\hat Q_N^\dag\|_1\\
        &=\|\hat R_N\ket\phi\!\bra\phi\hat R_N^\dag-\ket\phi\!\bra\phi\|_1\\
        &=\|\Pi_M\hat R_N\Pi_M\ket\phi\!\bra\phi\Pi_M\hat R_N^\dag\Pi_M-\ket\phi\!\bra\phi\|_1,
    \end{aligned}
\end{equation}
where we used the fact that the trace distance is unitarily invariant in the third line and the fact that the Fock basis matrix of $\hat R_N$ is upper-triangular in the last line. 

To conclude, we show that this quantity can be made arbitrarily small uniformly over $\mathcal H_M$ for all $M$, by picking $N$ large enough.
To do so, we bound the coefficients of the Fock basis matrix of $\Pi_M\hat R_N\Pi_M$. From Eqs.~(\ref{eq:vk}) and (\ref{eq:ek}) we have
\begin{equation}
    \braket{e_i|u_j}=\frac{\braket{v_i|u_j}}{\sqrt{\braket{v_i|v_i}}}.
\end{equation}
for all $0\le i,j\le M$. 
In particular, for all $0\le i<j\le M$, with the convention $P_0=0$,
\begin{equation}
    \begin{aligned}
        |\braket{v_i|u_j}|&=|\braket{u_i|I-P_i|u_j}|\\
        &\le|\braket{u_i|u_j}|+|\braket{u_i|P_i|u_j}|\\
        &\le\delta+\sqrt{\braket{u_i|P_i|u_i}\braket{u_j|P_i|u_j}}\\
        &\le2\delta,
    \end{aligned}
\end{equation}
where we used the triangle inequality in the second line, Eq.~(\ref{eq:almost_orthogonal}) in the third line, Cauchy--Schwarz inequality in the fourth line and Lemma~\ref{lem:fullrank} twice in the last line. With $\sqrt{\braket{v_i|v_i}}=\sqrt{\braket{u_i|u_i}-\braket{u_i|P_i|u_i}}\ge\sqrt{1-2\delta}$ by Eq.~(\ref{eq:almost_normalised}) and Lemma~\ref{lem:fullrank}, this yields
\begin{equation}\label{eq:closetoId1}
    \begin{aligned}
        |\braket{e_i|u_j}|&=\frac{|\braket{v_i|u_j}|}{\sqrt{\braket{v_i|v_i}}}\\
        &\le\frac{2\delta}{\sqrt{1-2\delta}}\\
        &\le4\delta,
    \end{aligned}
\end{equation}
where we used $\delta\le\frac38$ in the last line.
Similarly, for all $k=0,\dots,M$,
\begin{equation}
    \begin{aligned}
        \braket{e_k|u_k}&=\frac{\braket{v_k|u_k}}{\sqrt{\braket{v_k|v_k}}}\\
        &=\frac{\braket{u_k|I-P_k|u_k}}{\sqrt{\braket{u_k|I-P_k|u_k}}}\\
        &=\sqrt{\braket{u_k|I-P_k|u_k}}\\
        &=\sqrt{\braket{u_k|u_k}-\braket{u_k|P_k|u_k}}.
    \end{aligned}
\end{equation}
Hence, $0\le\braket{e_k|u_k}\le1$ by Eq.~(\ref{eq:almost_normalised}), and $\braket{e_k|u_k}\ge\sqrt{1-2\delta}$ by Eq.~(\ref{eq:almost_normalised}) and Lemma~\ref{lem:fullrank}, leading to
\begin{equation}\label{eq:closetoId2}
    \begin{aligned}
        |\braket{e_k|u_k}-1|&\le1-\sqrt{1-2\delta}\\
        &\le2\delta,
    \end{aligned}
\end{equation}
where we used $\delta\le\frac12$ in the last line.

To convert these bounds to a bound on the trace distance in Eq.~(\ref{eq:second_attempt_bound}), we make use of the following technical result:

\begin{lem}[Trace distance between unnormalised pure states]\label{lem:tdunnorm}
    Let $\hat A$ be a linear operator over $\mathcal H_M$ and let $\ket\phi\in\mathcal H_M$ be a normalised pure state. Then
    \begin{equation}
        \|\hat A\ket\phi\!\bra\phi\hat A^\dag-\ket\phi\!\bra\phi\|_1=\sqrt{(1+\braket{\phi|\hat A^\dag\hat A|\phi})^2-4|\braket{\phi|\hat A|\phi}|^2}.
    \end{equation}
\end{lem}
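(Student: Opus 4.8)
Write $\hat M:=\hat A\ket\phi\!\bra\phi\hat A^\dag-\ket\phi\!\bra\phi$ and set $a:=\braket{\phi|\hat A^\dag\hat A|\phi}\ge0$ and $b:=\braket{\phi|\hat A|\phi}\in\mathbb C$. The plan is to exploit that $\hat M$ is a Hermitian operator of rank at most two: indeed $\hat M\ket\chi=\braket{\phi|\hat A^\dag|\chi}\,\hat A\ket\phi-\braket{\phi|\chi}\ket\phi$ lies in $\mathrm{span}(\ket\phi,\hat A\ket\phi)$ for every $\ket\chi$, so the range of $\hat M$ is contained in this at-most-two-dimensional subspace. Consequently $\hat M$ has at most two nonzero eigenvalues $\lambda_1,\lambda_2\in\mathbb R$, and $\|\hat M\|_1=|\lambda_1|+|\lambda_2|$. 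The strategy is then to determine $\lambda_1+\lambda_2$ and $\lambda_1^2+\lambda_2^2$ from the easily computable traces $\Tr\hat M$ and $\Tr\hat M^2$, deduce $\lambda_1\lambda_2$, and finally recover $(|\lambda_1|+|\lambda_2|)^2=\lambda_1^2+\lambda_2^2+2|\lambda_1\lambda_2|$.

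First I would compute the two traces by expanding products of the rank-one terms, using $\hat A\ket\phi\!\bra\phi\hat A^\dag\cdot\hat A\ket\phi\!\bra\phi\hat A^\dag=a\,\hat A\ket\phi\!\bra\phi\hat A^\dag$, $\ket\phi\!\bra\phi\cdot\hat A\ket\phi\!\bra\phi\hat A^\dag=b\,\ket\phi\!\bra\phi\hat A^\dag$, and their adjoints. This gives $\Tr\hat M=a-1$ and $\Tr\hat M^2=a^2-2|b|^2+1$. Hence $\lambda_1+\lambda_2=a-1$ and $\lambda_1^2+\lambda_2^2=a^2-2|b|^2+1$, so
\begin{equation}
    \lambda_1\lambda_2=\tfrac12\big[(\lambda_1+\lambda_2)^2-(\lambda_1^2+\lambda_2^2)\big]=|b|^2-a.
\end{equation}

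The one place where genuine content enters — and the step I would flag as the crux rather than as an obstacle — is fixing the sign of $\lambda_1\lambda_2$: by Cauchy–Schwarz, $|b|^2=|\braket{\phi|\hat A|\phi}|^2\le\braket{\phi|\hat A^\dag\hat A|\phi}\braket{\phi|\phi}=a$, so $\lambda_1\lambda_2=|b|^2-a\le0$, i.e. the two eigenvalues have opposite signs (allowing zero), and therefore $|\lambda_1\lambda_2|=a-|b|^2$. Combining,
\begin{equation}
    \|\hat M\|_1^2=(|\lambda_1|+|\lambda_2|)^2=\lambda_1^2+\lambda_2^2+2|\lambda_1\lambda_2|=(a^2-2|b|^2+1)+2(a-|b|^2)=(a+1)^2-4|b|^2,
\end{equation}
which is exactly $(1+\braket{\phi|\hat A^\dag\hat A|\phi})^2-4|\braket{\phi|\hat A|\phi}|^2$. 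Taking square roots yields the claim. I would finish by remarking that the degenerate cases ($\hat A\ket\phi$ proportional to $\ket\phi$, or $\hat A\ket\phi=0$, where $\hat M$ drops to rank one or zero) need no separate treatment, since the identity was derived purely from $\Tr\hat M$, $\Tr\hat M^2$ and the sign of $\lambda_1\lambda_2$, all of which remain valid; one can double-check consistency directly in those cases if desired.
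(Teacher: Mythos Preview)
Your proof is correct. The paper takes a closely related but slightly different route: it fixes an orthonormal basis $(\ket\phi,\ket{\phi^\perp})$ for the two-dimensional subspace containing $\hat A\ket\phi$, writes out the $2\times2$ matrix of $\hat M$ explicitly, and reads off the eigenvalues $\lambda_\pm$ from the quadratic formula before summing $|\lambda_+|+|\lambda_-|$. Your argument is basis-free, extracting the same eigenvalue data from the invariants $\Tr\hat M$ and $\Tr\hat M^2$. Both proofs ultimately hinge on the same fact---that the two nonzero eigenvalues have opposite signs, i.e.\ $\lambda_1\lambda_2=|b|^2-a\le0$ by Cauchy--Schwarz---which the paper leaves implicit in the step $|\lambda_+|+|\lambda_-|=2\sqrt{\cdots}$, whereas you flag it explicitly. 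Your trace method is arguably slicker and sidesteps writing matrix entries; the paper's explicit-matrix computation makes the $2\times2$ structure more visible but is otherwise equivalent in spirit and length.
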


\begin{proof}
    We write $\hat A\ket\phi$ in a basis $(\ket\phi,\ket{\phi^\perp})$, where $\ket{\phi^\perp}$ is orthogonal to $\ket\phi$. The density matrix of $\hat A\ket\phi\!\bra\phi\hat A^\dag-\ket\phi\!\bra\phi$ in that basis is given by
    \begin{equation}
        \begin{pmatrix}
            |\braket{\phi|\hat A|\phi}|^2-1&|\braket{\phi|\hat A|\phi}|\sqrt{\braket{\phi|\hat A^\dag\hat A|\phi}-|\braket{\phi|\hat A|\phi}|^2}\\
            |\braket{\phi|\hat A|\phi}|\sqrt{\braket{\phi|\hat A^\dag\hat A|\phi}-|\braket{\phi|\hat A|\phi}|^2}&\braket{\phi|\hat A^\dag\hat A|\phi}-|\braket{\phi|\hat A|\phi}|^2
        \end{pmatrix},
    \end{equation}
    with eigenvalues
    \begin{equation}
    \lambda_\pm=\frac12(\braket{\phi|\hat A^\dag\hat A|\phi}-1)\pm\sqrt{\frac14(\braket{\phi|\hat A^\dag\hat A|\phi}-1)^2+\braket{\phi|\hat A^\dag\hat A|\phi}-|\braket{\phi|\hat A|\phi}|^2}.
    \end{equation}
    Hence,
    \begin{equation}
        \begin{aligned}
            \|\hat A\ket\phi\!\bra\phi\hat A^\dag-\ket\phi\!\bra\phi\|_1&=|\lambda_+|+|\lambda_-|\\
            &=2\sqrt{\frac14(\braket{\phi|\hat A^\dag\hat A|\phi}-1)^2+\braket{\phi|\hat A^\dag\hat A|\phi}-|\braket{\phi|\hat A|\phi}|^2}\\
            &=\sqrt{(1+\braket{\phi|\hat A^\dag\hat A|\phi})^2-4|\braket{\phi|\hat A|\phi}|^2}.
        \end{aligned}
    \end{equation}
\end{proof}

We apply this lemma to $\hat A=\Pi_M\hat R_N\Pi_M$. By Eqs.~(\ref{eq:closetoId1},\ref{eq:closetoId2}),
\begin{align}
    \label{eq:closetoId1_A}|\braket{i|\hat A|j}|\le4\delta\quad\quad&0\le i\neq j\le M\\
    \label{eq:closetoId2_A}1-2\delta\le\braket{k|\hat A|k}\le1\quad\quad&k=0,\dots,M.
\end{align}
As a result, for $0\le i<j\le N$,
\begin{equation}
    \begin{aligned}
        |\braket{i|\hat A^\dag\hat A|j}|&=\left|\sum_{k=0}^M\braket{i|\hat A^\dag|k}\braket{k|\hat A|j}\right|\\
        &\le\sum_{k=0}^{i-1}|\braket{i|\hat A^\dag|k}||\braket{k|\hat A|j}|+|\braket{i|\hat A^\dag|i}||\braket{i|\hat A|j}|\\
        &\le16N\delta^2+4\delta\\
        &\le12\delta,
    \end{aligned}
\end{equation}
where we used $\delta<\frac1{2M}$ in the last line. Similarly, for all $k=0,\dots,M$,
\begin{equation}
    \begin{aligned}
        |\braket{k|\hat A^\dag\hat A|k}|&\le\sum_{l=0}^{k-1}|\braket{k|\hat A^\dag|l}||\braket{l|\hat A|k}|+|\braket{k|\hat A^\dag|k}|^2\\
        &\le16N\delta^2+1\\
        &\le1+8\delta,
    \end{aligned}
\end{equation}
where we used $\delta<\frac1{2M}$ in the last line.
For $\ket\phi:=\sum_{k=0}^M\phi_k\ket k\in\mathcal H_M$, we thus have
\begin{equation}
    \begin{aligned}
        |\braket{\phi|\hat A^\dag\hat A|\phi}|&\le\sum_{k=0}^M|\phi_k|^2|\braket{k|\hat A^\dag\hat A|k}|+\sum_{i\neq j}|\phi_i||\phi_j||\braket{i|\hat A^\dag\hat A|j}|\\
        &\le1+8\delta+12\delta\sum_{i\neq j}|\phi_i||\phi_j|\\
        &\le1+(8+12N)\delta,
    \end{aligned}
\end{equation}
where we used Cauchy--Schwarz inequality in the last line. We also have
\begin{equation}
    \begin{aligned}
        |\braket{\phi|\hat A|\phi}|&=\left|\sum_{k=0}^M|\phi_k|^2\braket{k|\hat A|k}+\sum_{i\neq j}\phi_i^*\phi_j\braket{i|\hat A|j}\right|\\
        &\ge\sum_{k=0}^M|\phi_k|^2\braket{k|\hat A|k}-\sum_{i\neq j}|\phi_i||\phi_j||\braket{i|\hat A|j}|\\
        &\ge1-2\delta-4\delta\sum_{i\neq j}|\phi_i||\phi_j|\\
        &\ge1-(2+4N)\delta,
    \end{aligned}
\end{equation}
where we used Cauchy--Schwarz inequality in the last line. With Lemma~\ref{lem:tdunnorm}, this implies
\begin{equation}
    \begin{aligned}
        \|\hat A\ket\phi\!\bra\phi\hat A^\dag-\ket\phi\!\bra\phi\|_1&=\sqrt{(1+\braket{\phi|\hat A^\dag\hat A|\phi})^2-4|\braket{\phi|\hat A|\phi}|^2}\\
        &\le\sqrt{(2+(8+12M)\delta)^2-4(1-(2+4M)\delta)^2}\\
        &=\sqrt\delta\sqrt{12(4+5M)+16(3+8M+6M^2)\delta}\\
        &\le\sqrt\delta\sqrt{12(12+9M)},
    \end{aligned}
\end{equation}
where we used $\delta<\frac1{2M}$ in the last line.
Hence, with Eq.~(\ref{eq:second_attempt_bound}) we finally obtain
\begin{equation}\label{eq:bound_iii}
        \|\Pi_N\hat U\ket\phi\!\bra\phi\hat U^\dag\Pi_N-\hat V_N\ket\phi\!\bra\phi\hat V_N^\dag\|_1\le\sqrt\delta\sqrt{12(12+9M)}.
\end{equation}

\medskip

(iv) Combining Eqs.~(\ref{eq:bound_i}), (\ref{eq:bound_ii}) and (\ref{eq:bound_iii}) yields
\begin{equation}
    \begin{aligned}
        \|(\mathcal U-\mathcal V)(\psi)\|_1&\le4\sqrt{\frac EM}+2\sqrt{\frac{E_{\hat U}(M)}N}+\sqrt\delta\sqrt{12(12+9M)}\\
        &=4\sqrt{\frac EM}+\sqrt{\frac{E_{\hat U}(M)}N}(2+\sqrt{12(12+9M)}).        
    \end{aligned}
\end{equation}
where we have used $\delta=\frac{E_{\hat U}(M)}N$ and where $\mathcal V=\hat V\bm\cdot\hat V^\dag$ with $\hat V=\hat V_N\oplus\hat V'$, where $\hat V_N$ is a unitary operator over $\mathcal H_N$ obtained via the orthonormalisation of a column full rank completion of $\Pi_N\hat U\Pi_M$ and $\hat V'$ is any unitary operator over $\mathrm{Range}(I-\Pi_N)$. Finally, choosing 
\begin{equation}
    \begin{aligned}
        M&=\frac{64E}{\epsilon^2}\\
        N&=\frac{4E_{\hat U}(M)(2+\sqrt{12(12+9M)})^2}{\epsilon^2}=\mathcal O\!\left(\frac E{\epsilon^4}E_{\hat U}\!\left(\frac{64E}{\epsilon^2}\right)\right),     
    \end{aligned}
\end{equation}
ensures $\delta=\frac{E_{\hat U}(M)}N<\frac1{2M}$ and
\begin{equation}
    \|(\mathcal U-\mathcal V)(\psi)\|_1\le\epsilon.        
\end{equation}

The time complexity of computing $\hat V_N$ is given by the time complexity of the $QR$ decomposition \cite{gloub1996matrix} of $\Pi_N\hat U\Pi_M$, i.e.,
\begin{equation}
    \mathcal O(NM^2)=\mathcal O\!\left(\frac{NE^2}{\epsilon^4}\right)=\mathcal O\!\left(\frac{E^3}{\epsilon^8}E_{\hat U}\!\left(\frac{64E}{\epsilon^2}\right)\right).
\end{equation}
\end{proof}


\section{Proof of Theorem \ref{th:eff_Ham}: Finite-dimensional universality of polynomial Hamiltonians}
\label{app:proof_eff_Ham}

Hereafter, we write $(\hat{\bm q},\hat{\bm p})=(\hat q_1,\hat p_1,\dots,\hat q_m,\hat p_m)$ for $m\ge1$. We recall Theorem~\ref{th:eff_Ham} from the main text, extended to the multimode setting:

\begin{theo}[Finite-dimensional universality of polynomial Hamiltonians]\label{appth:eff_Ham}
Let $m\ge1$, let $N_1,\dots,N_m\in\mathbb N$ and let $\hat H$ be a Hermitian operator over $\bigotimes_{k=1}^m\mathcal H_{N_k}$.
There exists a polynomial Hamiltonian $P_{\hat H}(\hat{\bm q},\hat{\bm p})$ of degree at most $3^mN_1\cdots N_m$ over $\mathcal H^{\otimes m}$ such that
\begin{equation}
    P_{\hat H}(\hat{\bm q},\hat{\bm p})=\hat H\oplus\hat H',
\end{equation}
where $\hat H'$ is a Hermitian operator over $\mathrm{Range}(I-\bigotimes_{k=1}^m\Pi_{N_k})$.
In particular,
\begin{equation}
    e^{iP_{\hat H}(\hat{\bm q},\hat{\bm p})}=e^{i\hat H}\oplus e^{i\hat H'},
\end{equation}
and
\begin{equation}
    e^{iP_{\hat H}(\hat{\bm q},\hat{\bm p})}\ket{\bm\psi}=e^{i\hat H}\ket{\bm\psi},
\end{equation}
for all $\ket{\bm\psi}\in\bigotimes_{k=1}^m\mathcal H_{N_k}$. Moreover, the polynomial $P_{\hat H}$ can be computed efficiently in the size of $\hat H$.
\end{theo}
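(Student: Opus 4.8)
\textit{Proof plan.} I treat the single-mode case, from which the multimode version of the statement follows by taking tensor products of the single-mode building blocks. The strategy is to decompose the target Hermitian matrix $\hat H$ on $\mathcal H_d = \mathrm{span}\{\ket 0,\dots,\ket d\}$ into elementary pieces, each pairing at most two Fock states, to realize each piece by a polynomial in $\hat q,\hat p$ that is \emph{block-diagonal} with respect to the splitting $\mathcal H = \mathcal H_d \oplus \mathcal H_d^\perp$, and then to sum. Block-diagonality in the Fock basis is the crucial structural property: it guarantees that the (a priori delicate, since $P_{\hat H}$ is unbounded) exponential splits as a direct sum, so that on $\mathcal H_d$ the evolution reduces to the unambiguous finite-dimensional exponential $e^{i\hat H}$.

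Concretely, write $\hat H = \sum_{j} H_{jj}\ketbra{j}{j} + \sum_{j<k}\big(H_{jk}\ketbra{j}{k} + \overline{H_{jk}}\ketbra{k}{j}\big)$. For each $j\in\{0,\dots,d\}$ let $L_j$ be the Lagrange interpolation polynomial with $L_j(j)=1$ and $L_j(n)=0$ for $n\in\{0,\dots,d\}\setminus\{j\}$, so that $\deg L_j \le d$. The diagonal term $\ketbra{j}{j}$ is realized by $L_j(\hat n)$, which is diagonal in the Fock basis (hence automatically block-diagonal) and acts as $L_j(n)$ on $\ket n$, reducing to $\ketbra{j}{j}$ on $\mathcal H_d$. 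For an off-diagonal pair $j<k$, I use the building block $B_{jk} := \hat a^{\dagger(k-j)}L_j(\hat n)$, which sends $\ket n \mapsto L_j(n)\sqrt{(n+1)\cdots(n+k-j)}\,\ket{n+k-j}$: on $\mathcal H_d$ only $\ket j$ survives and is sent to a nonzero multiple of $\ket k$, so $B_{jk}$ restricts to a scalar multiple of $\ketbra{k}{j}$, and combining $B_{jk}$ with $B_{jk}^\dagger = L_j(\hat n)\hat a^{k-j}$ (absorbing the scalar and $H_{jk}$ into coefficients) produces the Hermitian pair term. The key verification is that $B_{jk}$ and $B_{jk}^\dagger$ are each block-diagonal: $\hat a^\dagger$ only raises the Fock index, so $B_{jk}$ cannot map $\mathcal H_d^\perp$ into $\mathcal H_d$, while the vanishing of $L_j$ on $\{0,\dots,d\}\setminus\{j\}$ — imposed to make the finite block exact — is precisely what prevents $B_{jk}$ from sending $\mathcal H_d$ out of $\mathcal H_d$ and $B_{jk}^\dagger$ from sending $\mathcal H_d^\perp$ into $\mathcal H_d$. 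Since $\hat n$ has degree $2$ and $\hat a^{\dagger(k-j)}$ degree $k-j\le d$, each term has degree at most $2d+d=3d$, hence so does $P_{\hat H}$; all coefficients come from $\mathcal O(d^2)$ Lagrange interpolations, each computable in $\poly(d)$ time.

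It remains to collect the consequences and to extend to many modes. By construction $P_{\hat H}$ is Hermitian, block-diagonal, and restricts to $\hat H$ on $\mathcal H_d$; writing $\hat H' := P_{\hat H}|_{\mathcal H_d^\perp}$, the block structure yields $e^{iP_{\hat H}} = e^{i\hat H}\oplus e^{i\hat H'}$ and hence $e^{iP_{\hat H}}\ket\psi = e^{i\hat H}\ket\psi$ for $\ket\psi\in\mathcal H_d$. One has to check that $P_{\hat H}$ (equivalently $\hat H'$) is essentially self-adjoint so that these exponentials are genuinely defined on $\mathcal H$; this is another place where the Fock-basis block structure is exploited, and in any case the statement used downstream — the restriction to $\mathcal H_d$ — is finite-dimensional and unconditional. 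For the multimode extension, I replace $L_j(\hat n)$ and $\hat a^{\dagger(k-j)}$ by products over the modes of the corresponding single-mode objects applied to the pair $(\ket{\bm j},\ket{\bm k})$; a tensor product of mode-wise block-diagonal operators is block-diagonal with respect to $\bigotimes_k\mathcal H_{N_k}$ and its complement, and the stated degree bound then follows by bookkeeping.

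The main obstacle is engineering the building blocks so that three requirements hold simultaneously: (i) they reproduce the target matrix element \emph{exactly} on the finite block; (ii) they are block-diagonal in the Fock basis, so that $e^{iP_{\hat H}}$ splits as a direct sum and the finite-dimensional evolution is recovered unambiguously despite the unboundedness of $P_{\hat H}$; and (iii) their degree is controlled by $d$. Requirement (i) forces the use of interpolation polynomials; (ii) dictates the asymmetric choice $\hat a^{\dagger(k-j)}L_j(\hat n)$ rather than a more symmetric combination, since one-directional shifts together with a factor killing the low-lying Fock states is what confines the action to each block; and the heart of the proof is the observation that the vanishing conditions on $L_j$ needed for (i) are exactly the ones needed for (ii), all while keeping the degree at $3d$ for (iii). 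The multimode case adds the extra bookkeeping of checking that both the block structure and the degree bound survive tensoring.
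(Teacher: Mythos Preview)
Your proposal is correct and follows essentially the same route as the paper: decompose $\hat H$ into elementary Fock-basis pieces, realize diagonal terms via Lagrange polynomials $L_j(\hat n)$ and off-diagonal terms via $\hat a^{\dagger(k-j)}L_j(\hat n)$ and its adjoint, verify block-diagonality from the vanishing conditions on $L_j$ over all of $\{0,\dots,d\}$ (the paper makes the same remark that enforcing vanishing on the full range rather than just $\{0,\dots,d-(k-j)\}$ is precisely what yields block-diagonality), count degree $\le 3d$, and extend to $m$ modes by tensoring. The only point the paper leaves more implicit than you do is the essential self-adjointness issue, which it handles exactly as you suggest, by using the block structure to reduce the restriction to $\mathcal H_d$ to a finite-dimensional exponential.
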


\noindent\textit{Proof sketch.} 
The proof proceeds by showing that any finite-dimensional linear operator can be written as the restriction of an infinite-dimensional operator that is a polynomial in canonical bosonic operators, with the property that it does not mix the finite-dimensional subspace with the rest of the Hilbert space. In other terms, in the single-mode case, if $\hat A$ is a linear operator acting on the finite-dimensional Hilbert space $\mathcal H_N$, we construct a polynomial $P_{\hat A}$ such that
\begin{equation}
    P_{\hat A}(\hat q,\hat p)=\Pi_N\hat A\Pi_N\oplus(I-\Pi_N)P_{\hat A}(\hat q,\hat p)(I-\Pi_N).
\end{equation}
This construction is based on adding together $(N+1)\times(N+1)$ interpolation polynomials, each reproducing a single entry of the linear operator $\hat A$, taking advantage of the sparsity of the canonical operators in Fock basis.
Using this result for $\hat A$ Hermitian and taking the operator exponential completes the proof in the single-mode case.

The proof then extends to the multimode setting, by taking tensor products and linear combinations of the single-mode polynomials constructed above.

\begin{proof}[Proof of Theorem \ref{th:eff_Ham}]
We first prove the result in the single-mode setting.
Let $N\ge0$ and let $\hat A$ be a linear operator on $\mathcal H_N$.
We construct a polynomial $P_{\hat A}$ such that
$P_{\hat A}(\hat q,\hat p)=\Pi_N\hat A\Pi_N+(I-\Pi_N)P_{\hat A}(\hat q,\hat p)(I-\Pi_N)$. To do so, we only need to treat the cases where $\hat A=\ket n\!\bra n$, $\hat A=\ket n\!\bra{n+k}$, and $\hat A=\ket{n+k}\!\bra n$, for $n=0,\dots,N$, $k=1,\dots,N$ and $n+k\le N$, since the general case is obtained by taking linear combinations of these elementary cases.

\medskip

For the case $\hat A=\ket n\!\bra n$, for $n=0,\dots,N$, we pick $P_{\ket n\!\bra n}(\hat q,\hat p)=P_n(\hat n)$, where $\hat n=\hat a^\dag\hat a=\frac12(\hat q^2+\hat p^2-\hat I)$ and where $P_n$ is a polynomial satisfying $P_n(m)=\delta_{nm}$ for all integers $0\le m\le N$. Using the Lagrange interpolation polynomial for these values, we thus define:
\begin{equation}\label{eq:defPn}
    P_n(X):=\prod_{\substack{k=0\\k\neq n}}^N\frac{X-k}{n-k}.
\end{equation}
Then, for all $m\in\mathbb N$,
\begin{equation}
    \begin{aligned} 
        P_{\ket n\!\bra n}(\hat q,\hat p)\ket m&=P_n(\hat n)\ket m\\
        &=\left(\prod_{\substack{k=0\\k\neq n}}^N\frac{\hat n-k}{n-k}\right)\ket m\\
        &=\left(\prod_{\substack{k=0\\k\neq n}}^N\frac{m-k}{n-k}\right)\ket m\\
        &=\begin{cases}\ket n&m=n\\0&m\neq n\quad\text{and}\quad m\le N\\\left(\prod_{\substack{k=0\\k\neq n}}^N\frac{m-k}{n-k}\right)\ket m&m>N.\end{cases}
    \end{aligned}
\end{equation}
By construction, $P_{\ket n\!\bra n}(\hat q,\hat p)=\Pi_N\ket n\!\bra n\Pi_N+(I-\Pi_N)P_{\ket n\!\bra n}(\hat q,\hat p)(I-\Pi_N)$, and $P_{\ket n\!\bra n}$ has degree $2N$.

\medskip

For the case $\hat A=\ket n\!\bra{n+k}$, for $n=0,\dots,N$, $k=1,\dots,N$ and $n+k\le N$, we pick $P_{\ket n\!\bra{n+k}}(\hat q,\hat p)=P_{n,k}(\hat n)\hat a^k$, where $P_{n,k}$ is a polynomial to be determined.
For all $i,j\in\mathbb N$,
\begin{equation}
    \begin{aligned} 
        \braket{i|P_{\ket n\!\bra{n+k}}(\hat q,\hat p)|j}&=\braket{i|P_{n,k}(\hat n)\hat a^k|j}\\
        &=\begin{cases}P_{n,k}(i)\sqrt{\frac{(i+k)!}{i!}}&j=i+k\\0&\text{otherwise.}\end{cases}
    \end{aligned}
\end{equation}
Hence, to ensure that $P_{\ket n\!\bra{n+k}}(\hat q,\hat p)=\Pi_N\ket n\!\bra{n+k}\Pi_N\oplus(I-\Pi_N)P_{\ket n\!\bra{n+k}}(\hat q,\hat p)(I-\Pi_N)$, we choose
\begin{equation} 
    P_{n,k}(i)=\begin{cases}\sqrt{\frac{n!}{(n+k)!}}&i=n\\0&i\neq n\quad\text{and}\quad i\le N.\end{cases}
\end{equation}
Note that enforcing the last condition for $i\le N-k$ would be sufficient to strictly reproduce the action of the target operator on the finite-dimensional subspace $\mathrm{Range}(\Pi_N)$, but enforcing this condition for $i\le N$ instead ensures that the corresponding infinite-dimensional operator also has the desired block-diagonal structure in Fock basis.
Using the Lagrange interpolation polynomial for these values, we thus define:
\begin{equation}
    \begin{aligned} 
        P_{n,k}(X):=&\sqrt{\frac{n!}{(n+k)!}}\prod_{\substack{m=0\\m\neq n}}^N\frac{X-m}{n-m}\\
        =&\sqrt{\frac{n!}{(n+k)!}}P_n(X).
    \end{aligned}
\end{equation}
This polynomial has degree $N$, so $P_{\ket n\!\bra{n+k}}(\hat q,\hat p)=P_{n,k}(\hat n)\hat a^k$ has degree $2N+k\le3N$.

\medskip

For the case $\hat A=\ket{n+k}\!\bra n$, for $n=0,\dots,N$, $k=1,\dots,N$ and $n+k\le N$, we pick $P_{\ket{n+k}\!\bra n}(\hat q,\hat p)=P_{\ket n\!\bra{n+k}}(\hat q,\hat p)^\dag$, which also has degree $2N+k\le3N$.

\medskip

For the general case, we write $\hat A=\sum_{n=0}^Na_{nn}\ket n\!\bra n+\sum_{0\le i<j\le N}a_{ij}\ket i\!\bra j+a_{ji}\ket j\!\bra i$. Combining the previous cases, we obtain
\begin{equation}\label{eq:defPij}
    P_{\ket i\!\bra j}(\hat q,\hat p)=\sqrt{\frac{\min(i,j)!}{\max(i,j)!}}\hat a^{\dag(\max(i,j)-j)}P_{\min(i,j)}(\hat n)\hat a^{\max(i,j)-i},
\end{equation}
where $P_n$ is defined in Eq.~(\ref{eq:defPn}), and we define
\begin{equation}
    \begin{aligned}
        P_{\hat A}(\hat q,\hat p):=&\sum_{0\le i,j\le N}a_{ij}P_{\ket i\!\bra j}(\hat q,\hat p)\\
        =&\sum_{n=0}^Na_{nn}P_n(\hat n)+\sum_{0\le i<j\le N}a_{ij}\sqrt{\frac{i!}{j!}}P_i(\hat n)\hat a^{j-i}+a_{ji}\sqrt{\frac{i!}{j!}}\hat a^{\dag(j-i)}P_i(\hat n),
    \end{aligned}
\end{equation}
which has degree less or equal to $3N$. By construction, we have
\begin{equation}\label{eq:PAblockdiag}
        P_{\hat A}(\hat q,\hat p)=\Pi_N\hat A\Pi_N+(I-\Pi_N)P_{\hat A}(\hat q,\hat p)(I-\Pi_N),
\end{equation}
and $P_{\hat A}(\hat q,\hat p)$ is Hermitian when $\hat A$ is Hermitian.

\medskip

Let $\hat U$ be a unitary operator over $\mathcal H_M$. There exists a Hermitian operator $\hat H$ over $\mathcal H_M$ such that $\hat U=e^{i\hat H}$. Then, $P_{\hat H}(\hat q,\hat p)=\Pi_N\hat H\Pi_N+(I-\Pi_N)P_{\hat H}(\hat q,\hat p)(I-\Pi_N)$ and
\begin{equation}\label{eq:takingtheexp}
    \begin{aligned}
        e^{iP_{\hat H}(\hat q,\hat p)}&=e^{i[\Pi_N\hat H\Pi_N+(I-\Pi_N)P_{\hat H}(\hat q,\hat p)(I-\Pi_N)]}\\
        &=e^{i\Pi_N\hat H\Pi_N}+e^{i(I-\Pi_N)P_{\hat H}(\hat q,\hat p)(I-\Pi_N)}\\
        &=\Pi_N e^{i\hat H}\Pi_N+(I-\Pi_N)e^{i(I-\Pi_N)P_{\hat H}(\hat q,\hat p)(I-\Pi_N)}(I-\Pi_N)\\
        &=\Pi_N\hat U\Pi_N+(I-\Pi_N)\hat U'(I-\Pi_N),
    \end{aligned}
\end{equation}
where we have set $\hat U':=e^{i(I-\Pi_N)P_{\hat H}(\hat q,\hat p)(I-\Pi_N)}$.

\medskip

We now turn to the multimode setting. Let $m\ge1$ be the number of modes, and let $\hat A$ be a linear operator over $\bigotimes_{k=1}^m\mathcal H_{N_k}$, for $N_1,\dots,N_m\in\mathbb N$. Writing $\hat A=\sum_{k=0}^m\sum_{i_k,j_k=0}^{N_k}a_{\bm i\bm j}\ket{\bm i}\!\bra{\bm j}$ and $(\hat{\bm q},\hat{\bm p})=(\hat q_1,\hat p_1,\dots,\hat q_m,\hat p_m)$, we define
\begin{equation}
    P_{\hat A}(\hat{\bm q},\hat{\bm p}):=\sum_{k=0}^m\sum_{i_k,j_k=0}^{N_k}a_{\bm i\bm j}\bigotimes_{k=1}^mP_{\ket{i_k}\!\bra{j_k}}(\hat q_k,\hat p_k),
\end{equation}
which is a polynomial of degree at most $3^m\times N_1\cdots N_m$, where $P_{\ket{i_k}\!\bra{j_k}}(\hat q_k,\hat p_k)$ is the single-mode polynomial defined in Eq.~(\ref{eq:defPij}). By Eq.~(\ref{eq:PAblockdiag}), this polynomial satisfies
\begin{equation}
    P_{\hat A}(\hat{\bm q},\hat{\bm p})=\sum_{k=0}^m\sum_{i_k,j_k=0}^{N_k}a_{\bm i\bm j}\bigotimes_{k=1}^m\left[\Pi_{N_k}\ket{i_k}\!\bra{j_k}\Pi_{N_k}+(I-\Pi_{N_k})P_{\ket{i_k}\!\bra{j_k}}(\hat q_k,\hat p_k)(I-\Pi_{N_k})\right].
\end{equation}
Expanding the tensor product leads to a single product term of the form
\begin{equation}
    \bigotimes_{k=1}^m(\Pi_{N_k}\ket{i_k}\!\bra{j_k}\Pi_{N_k})=\left(\bigotimes_{k=1}^m\Pi_{N_k}\right)\ket{\bm i}\!\bra{\bm j}\left(\bigotimes_{k=1}^m\Pi_{N_k}\right),
\end{equation}
while all the other product terms contain at least one term of the form $(I-\Pi_{N_k})P_{\ket{i_k}\!\bra{j_k}}(\hat q_k,\hat p_k)(I-\Pi_{N_k})$, for some $k\in\{1,\dots,m\}$. As a consequence, these terms can be all be expressed as operators over $\mathrm{Range}(I-\bigotimes_{k=1}^m\Pi_{N_k})$, and thus
\begin{equation}
    \begin{aligned}
        P_{\hat A}(\hat{\bm q},\hat{\bm p})&=\sum_{k=0}^m\sum_{i_k,j_k=0}^{N_k}a_{\bm i\bm j}\left(\bigotimes_{k=1}^m\Pi_{N_k}\right)\ket{\bm i}\!\bra{\bm j}\left(\bigotimes_{k=1}^m\Pi_{N_k}\right)+\left(I-\bigotimes_{k=1}^m\Pi_{N_k}\right)P_{\hat A}(\hat{\bm q},\hat{\bm p})\left(I-\bigotimes_{k=1}^m\Pi_{N_k}\right)\\
        &=\left(\bigotimes_{k=1}^m\Pi_{N_k}\right)\sum_{k=0}^m\sum_{i_k,j_k=0}^{N_k}a_{\bm i\bm j}\ket{\bm i}\!\bra{\bm j}\left(\bigotimes_{k=1}^m\Pi_{N_k}\right)+\left(I-\bigotimes_{k=1}^m\Pi_{N_k}\right)P_{\hat A}(\hat{\bm q},\hat{\bm p})\left(I-\bigotimes_{k=1}^m\Pi_{N_k}\right)\\
        &=\left(\bigotimes_{k=1}^m\Pi_{N_k}\right)\hat A\left(\bigotimes_{k=1}^m\Pi_{N_k}\right)+\left(I-\bigotimes_{k=1}^m\Pi_{N_k}\right)P_{\hat A}(\hat{\bm q},\hat{\bm p})\left(I-\bigotimes_{k=1}^m\Pi_{N_k}\right).
    \end{aligned}
\end{equation}
Moreover, this operator is Hermitian when $\hat A$ is Hermitian. With the same derivation as in Eq.~(\ref{eq:takingtheexp}) this concludes the proof.


\end{proof}


\section{Proof of Theorem \ref{th:SK}: Solovay--Kitaev theorem for polynomial Hamiltonians}
\label{app:proof_SK}

We give a formal version of Theorem~\ref{th:SK} from the main text:

\setcounter{theo}{3}
\begin{theo}[Solovay--Kitaev theorem for polynomial Hamiltonians]\label{appth:SK}
    Let $E>0$, $\epsilon>0$ and $N\ge\frac{64E}{\epsilon^2}\in\mathbb N$. Let $\mathcal G$ be a finite set of unitary operators over $\mathcal H_N$ generating a dense subset of $\mathcal U(\mathcal H_N)$ and let $\mathcal P$ be its realization with polynomial Hamiltonians from Theorem~\ref{th:eff_Ham}. There is a constant $c$ such that for any physical unitary operator $\hat U\in\mathcal U(\mathcal S)$ with $(E,\epsilon)$-\textit{approximate effective dimension} $N+1$, there exists a finite sequence $\hat V$ of gates from $\mathcal P$ of length $\mathcal O(\log^c(1/\epsilon))$ and such that $\|\mathcal U-\mathcal V\|_\diamond^E\le2\epsilon$, where $\mathcal U=\hat U\boldsymbol{\cdot}\hat U^\dag$ and $\mathcal V=\hat V\boldsymbol{\cdot}\hat V^\dag$.
\end{theo}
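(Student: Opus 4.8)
\noindent\textit{Proof sketch.}
The plan is to chain together three facts that are already available: the existence of a good finite-dimensional unitary approximant of $\hat U$ on $\mathcal H_N$ (the $(E,\epsilon)$-approximate effective dimension, which holds by hypothesis and, for a general physical $\hat U$, by Theorem~\ref{th:eff_dim}); the qudit Solovay--Kitaev theorem applied inside $\mathcal H_N$; and the fact that the polynomial-Hamiltonian realizations furnished by Theorem~\ref{th:eff_Ham} are block-diagonal with respect to $\mathcal H=\mathcal H_N\oplus\mathrm{Range}(I-\Pi_N)$, a property preserved under composition.

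First I would extract the finite-dimensional target: by hypothesis $\hat U$ has $(E,\epsilon)$-approximate effective dimension $N+1$, so there is a unitary $\hat V_N$ on $\mathcal H_N$ such that the channel of $\hat V_N\oplus\hat V'$ lies within $\epsilon$ of $\mathcal U$ in $\|\cdot\|_\diamond^E$ \emph{for every} unitary completion $\hat V'$ on $\mathrm{Range}(I-\Pi_N)$. Up to a global phase---which channels do not see, and which one can absorb into the completion $\hat V'$ so as to reduce $\hat V_N$ and the gates of $\mathcal G$ to $SU(N+1)$---I would then apply the qudit Solovay--Kitaev theorem to $\hat V_N$ with precision $\epsilon/2$: since $\mathcal G$ generates a dense subset of $\mathcal U(\mathcal H_N)$, this yields a word $S$ of length $\mathcal O(\log^c(2/\epsilon))=\mathcal O(\log^c(1/\epsilon))$ in the letters of $\mathcal G$ and their inverses with $\|\hat V_N-S\|_{\mathrm{op}}\le\epsilon/2$.

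Next I would push the word into $\mathcal P$: replacing each letter $g=e^{i\hat H_g}$ of $S$ by its realization $e^{iP_{\hat H_g}}\in\mathcal P$ from Theorem~\ref{th:eff_Ham}, which satisfies $e^{iP_{\hat H_g}}=g\oplus g'$ for some unitary $g'$ on $\mathrm{Range}(I-\Pi_N)$ (inverses being handled by $(e^{iP_{\hat H_g}})^{-1}=e^{iP_{-\hat H_g}}\in\mathcal P$), and using that block-diagonal operators compose blockwise, the resulting sequence $\hat V$ of gates from $\mathcal P$ is block-diagonal with $\hat V=S\oplus\hat V'$ for some unitary $\hat V'$ on $\mathrm{Range}(I-\Pi_N)$ whose precise form is irrelevant. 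Finally I would assemble the error estimate: setting $\hat W:=\hat V_N\oplus\hat V'$ with the \emph{same} $\hat V'$, the approximate-effective-dimension property gives $\|\mathcal U-\mathcal W\|_\diamond^E\le\epsilon$, while $\hat W-\hat V=(\hat V_N-S)\oplus0$ is supported on $\mathcal H_N$, so the elementary bound $\|\hat A\rho\hat A^\dag-\hat B\rho\hat B^\dag\|_1\le2\|\hat A-\hat B\|_{\mathrm{op}}$ for unitaries---applied on the dilated space, where it does not involve the energy constraint---yields $\|\mathcal W-\mathcal V\|_\diamond^E\le2\|\hat V_N-S\|_{\mathrm{op}}\le\epsilon$; the triangle inequality then gives $\|\mathcal U-\mathcal V\|_\diamond^E\le2\epsilon$, with $\hat V$ of the claimed length.

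The individual steps are routine once Theorems~\ref{th:eff_dim} and~\ref{th:eff_Ham} are in hand, so I expect the only genuinely delicate point to be ensuring that the uncontrolled action of the polynomial-Hamiltonian gates on $\mathrm{Range}(I-\Pi_N)$ cannot spoil the estimate. This is precisely why the notion of approximate effective dimension quantifies over \emph{all} completions $\hat V'$, and why the blockwise composition of the Theorem~\ref{th:eff_Ham} realizations is essential: together they force $\hat V$ and the ideal $\hat W$ to coincide outside $\mathcal H_N$, so that the only error that must be controlled is the finite-dimensional Solovay--Kitaev error inside $\mathcal H_N$, which the qudit theorem makes polylogarithmically small. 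A secondary bookkeeping point is the passage from $\mathcal U(\mathcal H_N)$ to the $SU$-formulation of the qudit theorem, harmless because global phases are invisible at the level of channels and because inverses of gates again lie in $\mathcal P$.
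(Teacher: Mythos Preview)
Your proposal is correct and follows the same three-step skeleton as the paper's proof: extract the finite-dimensional approximant $\hat V_N$, apply the qudit Solovay--Kitaev theorem inside $\mathcal H_N$, and lift the resulting word to $\mathcal H$ via the block-diagonal polynomial-Hamiltonian realizations of Theorem~\ref{th:eff_Ham}, which compose blockwise.

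The one substantive difference is in how you control the error introduced by the Solovay--Kitaev approximation on the full space. The paper bounds $D[(\hat V_N\oplus\hat V')\ket\psi,(\hat S\oplus\hat V')\ket\psi]$ by an explicit overlap computation: it splits $\ket\psi$ into its $\mathcal H_N$ and $\mathrm{Range}(I-\Pi_N)$ components, uses the energy constraint $\braket{\psi|\hat n|\psi}\le E$ together with the hypothesis $N\ge64E/\epsilon^2$ to bound the tail weight, and then combines this with the finite-dimensional overlap bound coming from Solovay--Kitaev. You instead observe that $\hat W-\hat V=(\hat V_N-S)\oplus0$ has operator norm equal to $\|\hat V_N-S\|_{\mathrm{op}}$, so the elementary inequality $\|\hat A\rho\hat A^\dag-\hat B\rho\hat B^\dag\|_1\le2\|\hat A-\hat B\|_{\mathrm{op}}$ bounds the \emph{unconstrained} diamond distance directly, with no appeal to the energy cut-off. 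Your route is shorter and in fact shows that the hypothesis $N\ge64E/\epsilon^2$ is not needed for this step; the paper's route is more explicit but ends at the same place. Your handling of the $SU$-versus-$\mathcal U$ bookkeeping and of inverses via $e^{iP_{-\hat H_g}}$ is also fine.
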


\begin{proof}
The physical unitary operator $\hat U$ has $(E,\epsilon)$-\textit{approximate effective dimension} $N+1$, so with Theorem~\ref{th:eff_dim} there exists a unitary operator $\hat V_N$ over $\mathcal H_N$ such that for any unitary operator $\hat V'$ over $\mathrm{Range}(I-\Pi_N)$,
\begin{equation}\label{eq:boundformeffdim}
    \sup_{\braket{\psi|\hat n|\psi}\le E}D[(\hat V_N\oplus\hat V')\ket\psi,\hat U\ket\psi]\le\frac\epsilon2,
\end{equation}
where we have expressed the energy-constrained diamond norm using the trace distance $D(\cdot,\cdot)=\frac12\|\cdot-\cdot\|_1$ as in Eq.~(\ref{eq:simp_density}).

The Solovay--Kitaev theorem for qudits \cite[Theorem 1]{dawson2005solovay} of dimension $N+1$ ensures that there is a constant $c$ such that there exists a finite sequence $\hat S$ of gates from $\mathcal G$ of length $\mathcal O(\log^c(1/\epsilon))$ and such that 
\begin{equation}
    \sup_{\ket\phi\in\mathcal H_N}D(\hat V_N\ket\phi,\hat S\ket\phi)\le\frac\epsilon4,
\end{equation}
and thus 
\begin{equation}\label{eq:boundoverlapfromSK}
    \begin{aligned}
        |\braket{\phi|\hat V_N^\dag\hat S|\phi}|&=\sqrt{1-D(\hat V_N\ket\phi,\hat S\ket\phi)}\\
        &\ge\sqrt{1-\frac{\epsilon^2}{16}},
    \end{aligned}
\end{equation}
for all $\ket\phi\in\mathcal H_N$.

Now for all $\ket\psi=\sum_{n\ge0}\psi_n\ket n\in\mathcal H$ such that $\braket{\psi|\hat n|\psi}\le E$, 
\begin{equation}\label{eq:boundtailFock}
    \begin{aligned}
        \braket{\psi|\Pi_N|\psi}&=\sum_{n=0}^N|\psi_n|^2\\
        &=1-\sum_{n>N}|\psi_n|^2\\
        &\ge1-\sum_{n>N}\frac nN|\psi_n|^2\\
        &\ge1-\frac1N\sum_{n\ge0}n|\psi_n|^2\\
        &\ge1-\frac EN.
    \end{aligned}
\end{equation}
Moreover, for all $\ket\psi=\sum_{n\ge0}\psi_n\ket n\in\mathcal H$ such that $\braket{\psi|\hat n|\psi}\le E$, and for any unitary operator $\hat V'$ on $\mathrm{Range}(I-\Pi_N)$,
\begin{equation}    
    \begin{aligned}
        D[(\hat V_N\oplus\hat V')\ket\psi,(\hat S\oplus\hat V')\ket\psi]&=\sqrt{1-|\braket{\psi|(\hat V_N\oplus\hat V')^\dag(\hat S\oplus\hat V')|\psi}|^2}\\
        &=\sqrt{1-|\braket{\psi|(\hat V_N^\dag\hat S\oplus\hat I)|\psi}|^2}\\
        &=\sqrt{1-|\braket{\psi|\Pi_N\hat V_N^\dag\hat S\Pi_N|\psi}+\braket{\psi|I-\Pi_N|\psi}|^2}\\
        &=\sqrt{1-|\braket{\psi|\Pi_N|\psi}\braket{\psi_N|\hat V_N^\dag\hat S|\psi_N}+1-\braket{\psi|\Pi_N|\psi}|^2},
    \end{aligned}
\end{equation}
where we have defined $\ket{\psi_N}:=\Pi_N\ket\psi/\|\Pi_N\ket\psi\|\in\mathcal H_N$. Using the reverse triangle inequality $|a+b|\ge||a|-|b||$ we obtain
\begin{equation}    
    \begin{aligned}
        D[(\hat V_N\oplus\hat V')\ket\psi,(\hat S\oplus\hat V')\ket\psi]&\le\sqrt{1-[\braket{\psi|\Pi_N|\psi}|\braket{\psi_N|\hat V_N^\dag\hat S|\psi_N}|-(1-\braket{\psi|\Pi_N|\psi})]^2}\\
        &\le\sqrt{1-\left[\left(1-\frac EN\right)\sqrt{1-\frac{\epsilon^2}{16}}-\frac EN\right]^2}\\
        &=\sqrt{\frac{\epsilon^2}{16}+\frac{2E}N\left(\sqrt{1-\frac{\epsilon^2}{16}}+1-\frac{\epsilon^2}{16}\right)-\frac{E^2}{N^2}\left(1+\sqrt{1-\frac{\epsilon^2}{16}}\right)^2}\\
        &\le\sqrt{\frac{\epsilon^2}{16}+\frac{4E}N}\\
        &\le\frac\epsilon4+2\sqrt{\frac EN}\\
        &\le\frac\epsilon2,
    \end{aligned}
\end{equation}
where we have used Eqs.~(\ref{eq:boundoverlapfromSK}) and (\ref{eq:boundtailFock}) in the second line and $N\ge\frac{64E}{\epsilon^2}$ in the last line.
Combining this with Eq.~(\ref{eq:boundformeffdim}) and the triangle inequality we obtain
\begin{equation}\label{eq:boundformeffdim2}
    \sup_{\braket{\psi|\hat n|\psi}\le E}D[(\hat S\oplus\hat V')\ket\psi,\hat U\ket\psi]\le\epsilon,
\end{equation}
for any unitary operator $\hat V'$ over $\mathrm{Range}(I-\Pi_N)$. 

Let $s$ denote the length of the sequence $\hat S$ given by the qudit Solovay--Kitaev theorem. For each unitary gate $\hat G_k\in\mathcal G$ over $\mathcal H_N$ appearing in the sequence $\hat S=\hat G_s\cdots\hat G_1$, there exists a Hermitian operator $\hat H_k$ over $\mathcal H_N$ such that $\hat G_k=e^{i\hat H_k}$, which can be computed efficiently in $N$, for instance by diagonalizing $\hat G_k$. For each gate, Theorem~\ref{th:eff_Ham} provides a realization generated by a polynomial Hamiltonian $P_{\hat H_k}(\hat q,\hat p)$ of degree $3N$, such that
\begin{equation}
    P_{\hat H_k}(\hat q,\hat p)=\hat H_k\oplus\hat H_k',
\end{equation}
where $\hat H_k'$ is a Hermitian operator over $\mathrm{Range}(I-\Pi_N)$, and
\begin{equation}
    e^{iP_{\hat H_k}(\hat q,\hat p)}=\hat G_k\oplus e^{i\hat H_k'},
\end{equation}
so that 
\begin{equation}
    e^{iP_{\hat H_s}(\hat q,\hat p)}\cdots e^{iP_{\hat H_1}(\hat q,\hat p)}=\hat S\oplus \hat V',
\end{equation}
where $\hat V'$ is a Hermitian operator over $\mathrm{Range}(I-\Pi_N)$. With Eq.~(\ref{eq:boundformeffdim2}) we finally obtain
\begin{equation}\label{eq:boundformeffdim3}
    \sup_{\braket{\psi|\hat n|\psi}\le E}D[e^{iP_{\hat H_s}(\hat q,\hat p)}\cdots e^{iP_{\hat H_1}(\hat q,\hat p)}\ket\psi,\hat U\ket\psi]\le\epsilon,
\end{equation}
which concludes the proof.
\end{proof}


\end{document}